\renewcommand{\paragraph}{%
  \@startsection{paragraph}{4}%
  {\z@}{2.25ex \@plus 1ex \@minus .2ex}{-1em}%
  {\normalfont\normalsize\bfseries}%
}
\newcommand\restr[2]{{
  \left.\kern-\nulldelimiterspace 
  #1 
  \vphantom{\big|} 
  \right|_{#2} 
  }}
\newtheorem{theorem}{Theorem}[section]
\newtheorem{lemma}[theorem]{Lemma}
\newtheorem{corollary}[theorem]{Corollary} 
\theoremstyle{definition}
\newtheorem{definition}[theorem]{Definition}
\newtheorem{assumption}[theorem]{Assumption}
\newcommand{\norm}[1]{\left \lVert #1 \right \rVert}
\mathchardef\mhyphen="2D
\DeclareMathOperator*{\argmin}{argmin}
\newcommand{\RR}{\mathbb{R}}
\newcommand{\NN}{\mathbb{N}}
\DeclareMathOperator{\supp}{supp}
\DeclareMathOperator{\sign}{sign}
\newcommand{\EE}{\mathbb{E}}
\DeclareMathOperator{\vspan}{span} 
\DeclareMathOperator{\Proj}{Proj}
\DeclareMathOperator{\polylog}{polylog}
\DeclareMathOperator{\dist}{dist}
\newcommand{\pinv}{\dagger}
\newcommand{\Thetat}{\tilde{\Theta}}
\DeclareMathOperator{\poly}{poly}
\newcommand{\note}[1]{{\color{red} #1}}
\newcommand{\dhruv}[1]{{[\color{purple} dr: #1}]}
\theoremstyle{remark}
\newtheorem{remark}{Remark}
\newcommand{\ignore}[1]{{}}
\title{Distributional Hardness Against Preconditioned Lasso via Erasure-Robust Designs}
\author{Jonathan A. Kelner\thanks{\texttt{kelner@mit.edu}. This work was supported in part by NSF Large CCF-1565235, NSF Medium CCF-1955217, and NSF TRIPODS 1740751.} \\ MIT \and Frederic Koehler\thanks{\texttt{fkoehler@stanford.edu}. This work was supported in part by NSF award CCF-1704417, NSF award IIS-1908774, and N. Anari’s Sloan Research Fellowship} \\ Stanford \and Raghu Meka\thanks{\texttt{raghum@cs.ucla.edu}. This work was supported in part by NSF CAREER Award CCF-1553605 and NSF Small CCF-2007682} \\ UCLA \and Dhruv Rohatgi\thanks{\texttt{drohatgi@mit.edu}. This work was supported by an Akamai Presidential Fellowship and a U.S. DoD NDSEG Fellowship.} \\ MIT}
\begin{document}

\maketitle
\begin{abstract}
    Sparse linear regression with ill-conditioned Gaussian random designs is widely believed to exhibit a statistical/computational gap, but there is surprisingly little formal evidence for this belief, even in the form of examples that are hard for restricted classes of algorithms. Recent work has shown that, for certain covariance matrices, the broad class of Preconditioned Lasso programs provably cannot succeed on polylogarithmically sparse signals with a sublinear number of samples. However, this lower bound only shows that for every preconditioner, there exists at least one signal that it fails to recover successfully. This leaves open the possibility that, for example, trying multiple different preconditioners solves every sparse linear regression problem.
    
    In this work, we prove a stronger lower bound that overcomes this issue. For an appropriate covariance matrix, we construct a single signal distribution on which any invertibly-preconditioned Lasso program fails with high probability, unless it receives a linear number of samples.
    
    Surprisingly, at the heart of our lower bound is a new positive result in compressed sensing. We show that standard sparse random designs are with high probability robust to adversarial measurement erasures, in the sense that if $b$ measurements are erased, then all but $O(b)$ of the coordinates of the signal are still information-theoretically identifiable. To our knowledge, this is the first time that \emph{partial} recoverability of arbitrary sparse signals under erasures has been studied in compressed sensing.
\end{abstract}
\section{Introduction}

The (Gaussian) random-design sparse linear regression (SLR) problem is a fundamental problem in high-dimensional statistics and learning theory. Formally, given independent covariates $X_1,\dots,X_m$ drawn from an $n$-variable Gaussian distribution with known positive-definite covariance $\Sigma$, and responses $y_i = \langle X_i, w^*\rangle$ for some unknown sparse signal $w^*$, the goal is to recover $w^*$. Here, we say that a vector $w^*$ is $k$-sparse if it has at most $k$ nonzero entries. What we have just described is the \emph{realizable}/noiseless version of the problem --- more generally, the response $y_i$ may be assumed to have some amount of stochastic or worst-case noise, but for simplicity we focus on the noiseless case in this paper. As we are proving a lower bound, the noiseless case is in a sense the most interesting and challenging one.

Information-theoretically, for \emph{any} positive-definite covariance $\Sigma$, it's possible to recover $w^*$ exactly from only $O(k \log n)$ samples $(X_i,y_i)$. However, the naive algorithm achieving this sample complexity takes time $O(n^k)$. When $k = \omega(1)$, this is not polynomial time. Moreover, in that regime, all known polynomial-time algorithms for random-design sparse linear regression require either $\Omega(n)$ samples or strong assumptions on the covariance matrix $\Sigma$ \cite{wainwright2019high}. Seemingly, for general $\Sigma$, there is a tradeoff between sample complexity and computational efficiency, commonly referred to as a statistical/computational gap.

Though such a gap is expected to exist in the ill-conditioned setting, there is surprisingly little formal evidence for this gap. While SLR with worst-case covariates (i.e. not drawn from a distribution) has been proven computationally hard under standard worst-case complexity-theoretic assumptions \cite{natarajan1995sparse, zhang2014lower, har2018approximate, gupte2021fine}, no such hardness is known for random-design SLR.\footnote{Here we mean the problem as described with $\Sigma$ invertible, or if $\Sigma$ is not invertible the task would be learning the concept/regression function, i.e. outputting \emph{any} $\hat w$ so that $(\hat w - w)^T\Sigma(\hat w - w) = 0$. If required to output a $k$-sparse predictor and $\Sigma$ is not invertible, the problem is NP hard with infinitely many samples \cite{gupte2020fine} as it's equivalent to finding a sparse solution of linear equations.} Moreover, for average-case problems such as random-design SLR, there are known barriers suggesting that hardness under worst-case assumptions may be unattainable \cite{applebaum2008basing}. Thus, there are broadly two approaches to give formal evidence for a conjectured statistical/computational gap \cite{brennan2020reducibility}: reduction from a conjecturally-hard average-case problem, and unconditional hardness against restricted classes of algorithms.

In recent years, both approaches have yielded evidence for statistical/computational gaps for a variety of \emph{other} statistical problems. Under the Planted Clique conjecture (or a related strengthening of it), problems that are computationally hard include sparse PCA \cite{berthet2013complexity}, average-case RIP certification \cite{wang2016average}, planted dense subgraph \cite{brennan2019universality}, robust sparse mean estimation \cite{brennan2020reducibility}, negative-spike sparse PCA \cite{brennan2020reducibility}, robust isotropic SLR \cite{brennan2020reducibility}, and many others. The literature on hardness of statistical problems against restricted classes of algorithms is even more vast, particularly the branches focused on sum-of-squares algorithms (e.g. \cite{meka2015sum, ma2015sum, kothari2017sum, barak2019nearly}) and statistical query algorithms (e.g. \cite{diakonikolas2017statistical, diakonikolas2019efficient,goel2020superpolynomial, goel2020statistical, dudeja2021statistical}). However, no such evidence has been given for the hardness of non-isotropic random-design SLR; see Section~\ref{section:related-work} for a discussion of prior research in this direction.

One potential explanation for this deficiency is that many of the above approaches either explicitly or implicitly involve constructing distributions over the unknown concept class which encapsulate (to some degree) the hardness of the original problem. For instance, sparse PCA is conjecturally hard even when the planted sparse direction is drawn uniformly at random from $k$-sparse vectors with nonzero entries $\pm 1/\sqrt{k}$ \cite{ma2015sum, brennan2018reducibility}. Statistical query lower bounds\footnote{Relatedly, for realizable regression problems there is a general computationally inefficient algorithm which makes a smaller number of SQ queries \cite{vempala2019gradient}.} are typically derived against the uniform distribution over a finite set of concepts \cite{goel2020superpolynomial}. Thus, it is problematic that for sparse linear regression, no such distribution has even been hypothesized: i.e., a family $(\Sigma_n, \mathcal{D}_n)_{n \in \NN}$ where $\Sigma_n$ describes an $n \times n$ covariance matrix, and $\mathcal{D}_n$ describes a distribution over $k(n)$-sparse $n$-dimensional signals, specifying a model which seems to encapsulate some of the difficulty of random-design SLR. While it would certainly be necessary that the covariance matrices $\Sigma_n$ are ill-conditioned, there are also natural families of ill-conditioned covariate distributions which lead to tractable instances (e.g. \cite{kelner2021power,kelner2019learning}), so such a property is by no means sufficient. Thus, the (open-ended) question is the following: \emph{is there a family $(\Sigma_n,\mathcal{D}_n)_{n \in \NN}$ such that SLR with covariates drawn from $N(0,\Sigma_n)$ and signal drawn from $\mathcal{D}_n$ conjecturally exhibits the computational/statistical gap?}

In this work, we take a step towards answering this question. While we make no conjectures of computational hardness (in particular, it seems quite plausible that there are tailored algorithms for our particular instance), we construct the first distribution family $(\Sigma_n,\mathcal{D}_n)_n$ which provably exhibits non-trivial hardness against a broad class of SLR algorithms. Specifically, we focus on hardness against the recently-introduced class of \emph{Preconditioned Lasso} algorithms, which essentially encompass the current state-of-the-art polynomial-time algorithms for random design SLR. 



\subsection{The Preconditioned Lasso}

The classical approach to solving sparse linear regression is by solving a convex program known as the Lasso \cite{tibshirani1996regression}. In our noiseless setting, it reduces to the basis pursuit program $$\hat{w} \in \argmin_{w \in \RR^n: Xw = y} \norm{w}_1.$$
This program is well-studied, and it's known to succeed with high probability with $O(k\log n)$ samples when $\Sigma$ is well-conditioned \cite{wainwright2019high}. It is also not difficult to construct random-design examples where this program fails with high probability. However, for such examples in the literature \cite{foygelfast, dalalyan2017prediction, kelner2019learning}, it is often possible to apply a sparse change-of-basis after which the covariates are well-conditioned, the signal is still sparse, and therefore the basis pursuit succeeds.

This is one motivation for the definition of a class of algorithms known as Preconditioned Lasso \cite{kelner2021power}, which essentially apply some change-of-basis to ``condition'' the covariates before solving the basis pursuit program. Specifically, for an invertible $n \times n$ matrix $S$ (which we think of as an arbitrary function of $\Sigma$ that however cannot depend on the samples), the $S$-preconditioned Lasso on data $(X_i,y_i)$ applies the transformation $X_i \mapsto S^{-1} X_i$, solves the basis pursuit to get an estimate $\hat{v}$, and returns $\hat{w} := S^T \hat{v}$. This corresponds to solving the convex program \begin{equation} \hat{w} \in \argmin_{w \in \RR^n: Xw = y} \norm{S^T w}_1.\label{eq:S-prec-lasso} \end{equation}

Preconditioned Lasso obviously generalizes the Lasso, and it has been shown to be significantly more powerful; e.g. any covariance matrix with low-treewidth dependency structure induces an SLR model that is tractable via Preconditioned Lasso, even if the matrix is arbitrarily ill-conditioned \cite{kelner2021power}. For this reason, examples that are provably hard against the Preconditioned Lasso are correspondingly more difficult to obtain. Concretely, prior to the present work, the only known hardness result against the Preconditioned Lasso was the following statement:

\begin{theorem}[Informal theorem statement from \cite{kelner2021power}]\label{theorem:prior-lb}
For any $n>0$, there is a positive-definite covariance matrix $\Sigma: n \times n$ such that for any preconditioner $S$, there exists some $\polylog(n)$-sparse signal $w^*$ which $S$-preconditioned Lasso with probability $1-o(1)$ fails to recover, when given $o(n)$ independent samples $X_i \sim N(0,\Sigma)$ and $y_i = \langle X_i,w^*\rangle$.
\end{theorem}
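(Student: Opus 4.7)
The first step is to reformulate the $S$-preconditioned Lasso as a standard basis pursuit problem via the change of variables $u = S^T w$. Program~\eqref{eq:S-prec-lasso} then becomes $\hat{u} \in \argmin_{u : \tilde{X} u = y} \|u\|_1$, where $\tilde{X} := X S^{-T}$ has i.i.d.\ rows from $N(0, \Sigma')$ with $\Sigma' := S^{-1} \Sigma S^{-T}$, and the target vector is $u^* := S^T w^*$. So the theorem reduces to the following statement: for an appropriate $\Sigma$ and any invertible $S$, there is a $k$-sparse $w^*$ such that standard basis pursuit on $(\tilde{X}, \tilde{X} u^*)$ fails to recover $u^*$ when given $m = o(n)$ samples.

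Next, I would design $\Sigma$ to be a highly ill-conditioned covariance whose structure cannot be completely ``undone'' by any invertible congruence $\Sigma \mapsto S^{-1} \Sigma S^{-T}$. A natural candidate is $\Sigma = R^T D R + \epsilon I$, where $D$ is diagonal with eigenvalues spanning an enormous dynamic range and $R$ is a structured (e.g.\ expander-based or RIP-type sparse) combinatorial matrix. For any fixed $S$, I would then use an averaging or pigeonhole argument over the $\binom{n}{k}$ choices of support $J \subset [n]$ of size $k = \polylog(n)$ to exhibit a $k$-sparse $w^*$ supported on some $J$ such that $u^* = S^T w^*$ is ``spread'': it has $\Omega(n)$ entries of comparable magnitude and is concentrated along small-variance directions of $\Sigma'$.

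Once such a $u^*$ is identified, hardness would follow from a standard $\ell_1$-recovery lower bound. With only $m = o(n)$ Gaussian measurements, $\ker \tilde{X}$ has dimension $n - o(n)$; and for a dense, spread-out $u^*$ concentrated in low-variance directions of $\Sigma'$, a Gordon-type comparison argument shows that $\ker \tilde{X}$ intersects the descent cone of $\|\cdot\|_1$ at $u^*$ with probability $1 - o(1)$. Consequently basis pursuit returns some $\hat{u} \neq u^*$, and equivalently the $S$-preconditioned Lasso fails to recover $w^*$.

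The main obstacle is the second step: constructing a single covariance $\Sigma$ for which this averaging argument produces a bad signal against \emph{every} preconditioner $S$, not merely most. This requires identifying some invariant of $\Sigma$ under the congruence $\Sigma \mapsto S^{-1} \Sigma S^{-T}$, together with an interaction between $S^T$ and the set of $k$-sparse supports, that forces at least one sparsity pattern to be mapped by $S^T$ into a spread-out, low-variance direction. Informally, no invertible change of basis can simultaneously ``diagonalize'' all $\binom{n}{k}$ sparsity patterns into sparse images, and the technical heart of the proof would be a dimension-counting / combinatorial argument converting this observation into a quantitative $\ell_1$-hardness statement for some $k$-sparse $w^*$.
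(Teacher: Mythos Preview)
This statement is quoted from \cite{kelner2021power} and is not proved in full in the present paper; the paper only sketches the argument in Section~\ref{section:overview} (``Lower Bounds via Sparse Designs'') and states the key ingredient as Theorem~\ref{theorem:ill-conditioned-lasso-failure-l1}. Comparing your plan to that sketch, there are two concrete gaps.

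\textbf{Wrong side of the inverse, and wrong candidate signals.} The covariance used is $\Sigma = (M^T M + \epsilon I)^{-1}$, i.e.\ the \emph{precision} matrix is degenerate plus $\epsilon I$, not the covariance. More importantly, the hard signals are not obtained by averaging over all $\binom{n}{k}$ sparse supports; they are the \emph{rows of $M$ themselves}, which are $k$-sparse by construction. The argument is a dichotomy built on the compatibility coefficients of Definition~\ref{def:weak-compatibility}: either some row $M_i$ satisfies $M_i^T \Sigma M_i < \tfrac{\beta^{(1)}}{18}\|S^T M_i\|_1^2$, in which case $M_i$ is the bad signal by Theorem~\ref{theorem:ill-conditioned-lasso-failure-l1}; or every row satisfies the reverse inequality, which forces each column of $S$ to be either tiny or nearly in $\ker M$, hence $(n/k)\log n$-dense (since $\ker M$ avoids sparse vectors). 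In the latter case a $k$-sparse signal with uniformly random support is bad. Your pigeonhole over supports and ``spread-out $u^*$ in low-variance directions'' does not supply this mechanism; in particular you have no lever that forces $S^T w^*$ to be large while $(w^*)^T\Sigma w^*$ is small for some concrete $w^*$.

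\textbf{The failure certificate is not Gordon's lemma.} The failure step (Theorem~\ref{theorem:ill-conditioned-lasso-failure-l1}) does not use a descent-cone/Gordon argument. It explicitly constructs an improving feasible direction: one takes a $2m$-dimensional subspace $W$ witnessing $\beta^{(1)}$, projects $Xw^*$ into $XW$ via a pseudoinverse, and uses singular-value bounds on the resulting $m\times 2m$ Gaussian matrix to produce $v$ with $Xv = Xw^*$ and $\|S^T v\|_1 < \|S^T w^*\|_1$. This is where the condition $\dim\ker M \ge 2m$ enters. Your Gordon sketch would require controlling the descent cone of $\|\cdot\|_1$ at $u^* = S^T w^*$ under the anisotropic Gaussian $N(0,\Sigma')$, which is not obviously tractable and is not the route taken. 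Finally, your reduction via $u = S^T w$ assumes $S$ invertible, whereas Theorem~\ref{theorem:prior-lb} (see the paragraph on rectangular preconditioners following it) covers arbitrary $n\times s$ preconditioners.
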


This provides a converse to the algorithmic results of \cite{kelner2021power}, which show that for certain covariance matrices, there is a preconditioner that works for all signals. However, the limitation of Theorem~\ref{theorem:prior-lb} is that the hard signal depends on the preconditioner. 
Thus, it does not provide a hard example for Preconditioned Lasso in the sense described earlier: that is, a covariance matrix and a distribution over signals such that any preconditioner fails with non-trivial probability. 
Indeed, unpacking the proof of Theorem~\ref{theorem:prior-lb} only gives a signal distribution on which any preconditioner fails with probability $\Omega(1/n)$.

\paragraph{Rectangular preconditioners.} To be more precise, Theorem~\ref{theorem:prior-lb} and the definition of Preconditioned Lasso actually apply to all rectangular $n \times s$ preconditioners, not just invertible preconditioners (note that Program~\ref{eq:S-prec-lasso} is still defined, though it no longer corresponds to a change-of-basis). However, most algorithms applying Preconditioned Lasso in the literature use invertible preconditioners \cite{kelner2019learning, kelner2021power}. Moreover, restricting to invertible preconditioners does not improve the failure probability achievable by the techniques in \cite{kelner2021power}. Until the current work, it was hypothetically possible that for any covariance matrix, there are a constant number of changes-of-basis which collectively ``condition'' the covariates, meaning that Lasso succeeds in at least one of the bases.  


\subsection{Main Results}


In this paper, we rule out that possibility. As our first result, we construct a covariance matrix and a sparse signal distribution under which Preconditioned Lasso with any invertible change-of-basis must fail \emph{with high probability}, unless a linear number of samples are given.

\begin{theorem}[Informal statement of Theorem~\ref{theorem:main-invertible}]\label{theorem:invertible-introduction}
Let $n>0$. There is a positive-definite covariance matrix $\Sigma: n \times n$ and a distribution $\mathcal{D}$ over $\polylog(n)$-sparse signals with the following property: for any invertible preconditioner $S$, if we draw $w^* \sim \mathcal{D}$, then $S$-preconditioned Lasso fails to recover $w^*$ with probability at least $1 - o(1)$, when given $o(n)$ independent samples $X_i \sim N(0,\Sigma)$ and $y_i = \langle X_i, w^*\rangle$.
\end{theorem}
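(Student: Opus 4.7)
The plan is to construct a hard covariance from a sparse random aliasing structure, and to use the new erasure-robust design property promised by the abstract as the key tool ensuring that any invertible preconditioner fails on a random signal drawn from our single distribution. Take $n = m + k$ with $m = \Theta(n)$ and $k = \polylog(n)$. Let $A \in \RR^{m \times k}$ be a sparse random matrix (each column with $O(1)$ nonzero $\pm \alpha$ entries, with $\alpha$ chosen so that $\|A_{\cdot j}\|_1 < 1$) drawn from an ensemble that satisfies the erasure-robust design guarantee. Define $\Sigma$ so that $X \sim N(0,\Sigma)$ decomposes as $X = (g,\, A^T g + \delta \epsilon)$ with $g \sim N(0, I_m)$, $\epsilon \sim N(0, I_k)$, and infinitesimal $\delta > 0$ to keep $\Sigma$ positive-definite. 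Let $\mathcal{D}$ be the uniform distribution over $k_0$-sparse $\pm 1$ signals supported on the last $k$ coordinates, with $k_0 = \polylog(n)$.

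Fix any invertible preconditioner $S$ and write $T = S^T$, $w^* = (0, v)$ with $v \sim \mathcal{D}$. The approximate kernel of the $o(n)$-sample design matrix is spanned by $u_j := (A_{\cdot j}, -e_j)$ for $j \in [k]$. A Lasso KKT analysis shows that $S$-preconditioned Lasso recovers $w^*$ exactly iff there exists a subgradient $\sigma \in \partial \|T(0,v)\|_1$ with $\sigma^T T^{-1} u_j = 0$ for every $j$, which pins $T^{-T}\sigma$ to the fixed $m$-dimensional subspace $\{(p, A^T p) : p \in \RR^m\}$, equivalently $\sigma \in \Pi_{S,A} := T^T\{(p, A^T p) : p \in \RR^m\}$. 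Hence Lasso succeeds on $v$ iff $\Pi_{S,A}$ contains some vector meeting the sign constraints imposed by the nonzero entries of $T(0,v)$.

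The heart of the argument is to show that the set of $v$ for which such a compatible $\sigma \in \Pi_{S,A}$ exists has vanishing density in $\supp(\mathcal{D})$, uniformly over invertible $T$. The erasure-robust property of $A$ is the key lever: read the free coordinates of $\sigma$ (indices where $T(0,v)$ vanishes) as adversarially \emph{erased} measurements, and the sign-fixed coordinates as \emph{observed} measurements in the dual system $q = A^T p$. The erasure-robust guarantee on $A$ then bounds the number of sparse supports admitting a compatible dual vector by $o(\binom{k}{k_0})$, and a union bound yields Lasso failure with probability $1-o(1)$ over $v \sim \mathcal{D}$.

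The main obstacle, and the essential novelty over prior work, is the uniformity of this bound over all invertible $T$: per-preconditioner arguments inherently leak probability $\Omega(1/n)$ over signals. The crucial structural observation is that the constraint subspace $\{(p, A^T p) : p \in \RR^m\}$ depends only on $A$, so the combinatorial erasure-robust property of $A$ transfers through any invertible $T$ to control $\Pi_{S,A}$. A routine perturbation argument then lifts the idealized $\delta = 0$ KKT analysis to the actual finite-sample kernel of $X$, completing the proof.
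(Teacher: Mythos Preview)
Your proposal has a genuine gap at its core. The step ``read the free coordinates of $\sigma$ as adversarially erased measurements'' does not go through, and this is where the whole argument rests.

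First, the number of free coordinates of $\sigma \in \partial\|T(0,v)\|_1$ is the number of indices $i$ with $(T(0,v))_i = 0$. Since $T = S^T$ is an \emph{arbitrary} invertible matrix chosen by the adversary, this number is completely uncontrolled: the adversary can pick $T$ so that $T(0,v)$ has a single nonzero entry, leaving $n-1$ ``erasures.'' No erasure-robustness guarantee tolerates that many erasures, and you give no argument bounding this quantity. Second, the dimensions do not match the erasure-robustness property as stated in the abstract: that property concerns a compressive (wide) sparse matrix, whereas your $A$ is $m \times k$ with $m = \Theta(n)$ and $k = \polylog(n)$, i.e.\ tall and thin, and your ``dual system $q = A^T p$'' has only $k = \polylog(n)$ measurements. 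Third, the sentence ``the combinatorial erasure-robust property of $A$ transfers through any invertible $T$'' is precisely the uniformity-over-$T$ statement you need to prove; asserting it does not establish it, and it is not clear it is even true for your setup, since an adversarial change of basis can destroy combinatorial sparsity structure.

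For comparison, the paper's route is quite different both in the construction and in where erasure-robustness enters. The covariance is $\Sigma = (M^T M + \epsilon I)^{-1}$ for a sparse $n/2 \times n$ Bernoulli matrix $M$, and the signal distribution is a random sum of $\polylog(n)$ rows of $M$ plus an infinitesimal uniformly random sparse perturbation. The analysis is a two-case split on the preconditioner $S$: either at least $b = n/\polylog(n)$ rows of $M$ are ``bad signals'' for $S$ (in which case the random row-sum is bad with high probability and a compatibility-ratio argument forces Lasso failure), or at most $b$ rows are bad. Erasure-robustness is applied only in the second case, with the \emph{bad rows of $M$} playing the role of the $\le b$ erasures: it yields a Structure Lemma saying that outside a small coordinate set $C$, every non-dense column of $S$ has tiny norm, and hence $S$ must have $\Omega(n)$ dense columns. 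The random sparse perturbation then hits the support of all dense columns, and a dimension-counting argument (using invertibility of $S$) produces a feasible descent direction. The erasures are thus tied to a quantity one can bound (number of bad rows) rather than to the zero pattern of $S^T w^*$, which one cannot.
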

In fact, the full version of Theorem~\ref{theorem:main-invertible} is even stronger: it shows that even if we fix a family of 
$\poly(n)$
different preconditioners, they will all fail on a problem instance sampled from our distribution with probability $1 - o(1)$. This allows us to rule out an even larger class of algorithms: for example, the algorithm used in \cite{kelner2019learning} for solving jointly walk-summable SLR instances adaptively selects one out of $n$ possible (invertible) preconditioners before running the Preconditioned Lasso, and our lower bound shows that this strategy and variants are provably defeated by our new construction. 

Additionally, we can extend our result to show hardness against rectangular $n \times s$ preconditioners. For technical reasons we only achieve a failure probability of $1/2 - o(1)$, and require a bound on the preconditioner size. Nonetheless, at this failure probability and with poly-logarithmically sparse signals, we can rule out all polynomially-sized preconditioners.

\begin{theorem}[Informal statement of Theorem~\ref{theorem:main}]\label{theorem:main-introduction}
Let $n>0$. There is a positive-definite covariance matrix $\Sigma: n \times n$ and a distribution $\mathcal{D}$ over $k$-sparse signals, for any $k \geq \log^{12}(n)$, with the following property: for any preconditioner $S$ with at most $\exp(k/\log^{10}(n))$ columns, if we draw $w^* \sim \mathcal{D}$, then $S$-preconditioned Lasso fails to recover $w^*$ with probability at least $1/2 - o(1)$, when given $o(n)$ independent samples $X_i \sim N(0,\Sigma)$ and $y_i = \langle X_i, w^*\rangle$.
\end{theorem}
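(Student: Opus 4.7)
My plan follows the high-level template of the lower bound in Theorem~\ref{theorem:prior-lb}, but replaces the existential ``bad'' signal with a sign-randomized distribution. The key new ingredient is the erasure-robust design property advertised in the abstract, which ensures that the kernel of $X$ contains a robust supply of structured sparse vectors that can serve as ``confusing'' alternatives to $w^*$.

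First I would construct $\Sigma$ so that its Cholesky-type factor embeds a sparse random matrix $\Phi$ satisfying the erasure-robust property, ensuring that $\ker X$ is rich in sparse structured vectors even after seeing $m = o(n)$ samples. The distribution $\mathcal{D}$ would draw $w^*$ by choosing a random support of size $k$ from a prescribed family together with random $\pm 1$ signs on that support. The standard failure criterion applies: $S$-preconditioned Lasso fails at $w^*$ whenever there exists $v \in \ker X \setminus \{0\}$ with $\norm{S^T(w^* + v)}_1 \leq \norm{S^T w^*}_1$, so the task becomes producing such a $v$ with probability at least $1/2 - o(1)$.

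The core argument is to exhibit, for any fixed preconditioner $S$ with $s \leq \exp(k/\log^{10}(n))$ columns, such a $v$ with probability at least $1/2 - o(1)$ over the signs of $w^*$. Erasure-robustness lets me select $v \in \ker X$ whose support overlaps $\supp(w^*)$ in a controlled way, reducing the inequality $\norm{S^T(w^* + v)}_1 \leq \norm{S^T w^*}_1$ to a sign-concentration question about a small number of linear functionals of the random signs on $\supp(w^*)$. The bound $s \leq \exp(k/\log^{10}(n))$ should then appear from a union bound over an $\epsilon$-net of ``effective'' preconditioners, whose size is governed by the relevant sparse subspaces rather than the ambient dimension $n$; the requirement $k \geq \log^{12}(n)$ ensures enough sign randomness to absorb this net size via a Hoeffding-type estimate.

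The main obstacle, compared to the invertible case (Theorem~\ref{theorem:invertible-introduction}), is the flexibility of a rectangular preconditioner with exponentially-many-in-$k$ columns: it can adaptively concentrate attention on a small number of carefully chosen linear combinations of $w^*$, so per-coordinate arguments no longer suffice. This is precisely where the new erasure-robust design result is essential: it guarantees that no matter which small set of linear functionals $S^T$ emphasizes, there remain enough ``spare'' sparse vectors in $\ker X$ to defeat the program, with the $1/2 - o(1)$ bound being the price paid for handling this increased adversarial power relative to the $1 - o(1)$ bound available for invertible $S$.
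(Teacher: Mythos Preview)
Your proposal has genuine structural gaps relative to what the paper actually does.

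\textbf{The signal distribution is wrong.} The paper's $\mathcal{D}$ is not a single sign-randomized sparse vector; it is an explicit $50/50$ mixture of two very different distributions: (i) a random $\pm$-weighted sum of $t$ rows of $M$, and (ii) a uniformly random $k$-sparse vector with Gaussian entries. The argument then splits on whether $S$ is \emph{incompatible} (many rows $M_i$ already violate the weak compatibility inequality, so component (i) fails via Theorem~\ref{theorem:ill-conditioned-lasso-failure-l1}) or \emph{compatible} (few rows are bad, so the structure lemma applies and component (ii) fails). The $1/2-o(1)$ bound is exactly the mixture weight: whichever case $S$ falls into, only half of $\mathcal{D}$ is guaranteed to defeat it. Your explanation that $1/2$ is ``the price paid for increased adversarial power'' is not how it arises.

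\textbf{Erasure-robustness is used differently.} You describe erasure-robustness as a tool for producing $v\in\ker X$ with controlled support. In the paper it plays no such role. It enters only through the structure lemma (Lemma~\ref{lem:C-exists}): if $S$ is compatible with all but $b$ rows of $M$, erasure-robustness of $M$ yields a small set $C$ such that every column of $S$ is either $\tau$-dense or has small mass on $C^c$. The direction $d$ that beats $w^*$ is then built from the projection of $Sz$ (with $z=\sign(S^Tw^*)$) onto $\ker X$ restricted to $C^c$, and its existence is controlled by the generalized projection lemma (Lemma~\ref{lemma:projection-restricted-coordinates}), not by erasure-robustness.

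\textbf{The role of the column bound $s$.} There is no $\epsilon$-net over preconditioners anywhere; $S$ is fixed before $w^*$ is drawn. The constraint $s\le\exp(k/\log^{10}n)$ comes from a union bound over the \emph{columns} of $S$: one needs the random support of $w^*$ to hit the support of every $\tau$-dense column, which requires $k>(n/\tau)\log(sn)$.

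In short, your outline misses both the two-case compatible/incompatible dichotomy and the mixture construction that drives it, and misidentifies where erasure-robustness and the bound on $s$ enter the argument.
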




In both theorems, the probability is over both the signal distribution as well as the random samples.

\subsection{Key Technique: Erasure-Robust Sparse Designs}

In this section we describe the key technique that enables our main results, an erasure-robust sparse design, and provide an independent motivation for this design from compressed sensing. At the end, we provide intuition for how it connects back to hardness against Preconditioned Lasso.

There is a vast literature on sparse linear regression and compressed sensing. Many deterministic conditions and stochastic models for the measurement matrix (also known as design matrix or covariate matrix) have been demonstrated to imply that sparse signals can be recovered either information-theoretically or algorithmically \cite{wainwright2019high}. In noisy settings, the goal is usually either approximate recovery under an $\ell_p$ norm or prediction error (e.g. \cite{candes2006stable, wainwright2009sharp}); exact support recovery with some assumptions about the signal-to-noise ratio (e.g. \cite{wainwright2009sharp}); or approximate support recovery under distributional assumptions about the signal (e.g. \cite{scarlett2016limits}). However, in noiseless settings, the goal is invariably exact recovery. This is obviously ideal. But in situations where the covariates are not entirely under our control, exact recovery could be impossible. A natural goal is then to try to recover \emph{part} of the signal.
To our knowledge, this notion of partial recovery of sparse signals (i.e. due to shortcomings of the measurement matrix rather than due to noise) has received essentially no attention; see Section~\ref{section:related-work} for a discussion of related notions.

Part of the reason may be that it's not obvious what models for a compressive measurement matrix exhibit the behavior that some but not all of the coordinates of a sparse signal are identifiable, besides artificial examples where e.g. unconstrained variables are added to the system. Such examples do not answer the question of whether partial recovery is possible under fundamentally weaker modelling assumptions than total recovery.


\paragraph{Erasure-robustness.} Our key technical contribution is a proof that partial recovery is possible in a natural \emph{semi-random} model. Specifically, we show that random sparse compressive measurement matrices are ``erasure-robust'', by which we mean that if an adversary erases a small fraction of the measurements arbitrarily, then most of the coordinates of the sparse signal vector are still information-theoretically identifiable. Moreover, the identifiability result is stable under inverse-polynomial noise.

Adversarial erasures have been studied in compressing sensing before, and of course have also been long and extensively studied in coding theory (see e.g. \cite{luby2001efficient,langberg2004private,franklin2014optimal}).
For random dense compressive matrices, it's known that deleting a small fraction of the measurements essentially does nothing; the sparse signal is still totally recoverable with high probability \cite{davenport2009simple, voroninski2016strong, lu2019strong}. But for random sparse matrices (which in the absence of erasures do also enable total sparse recovery \cite{berinde2008combining}), no such robustness has been proven, because it's not true: the adversary may simply delete all measurements interacting with a particular coordinate, rendering that coordinate unidentifiable. Given this, partial recovery is the best that can be hoped for. Our result implies that it is also attainable, at least information-theoretically. Moreover, we achieve a nearly-tight bound on the number of unidentifiable coordinates. Here is the informal statement:

\begin{theorem}[Informal statement of Theorem~\ref{theorem:random-is-erasure-robust}]\label{theorem:expander-deletion-intro}
Let $n,m>0$ satisfy $n>m>\Theta(\log^2 n)$, and let $M$ be an $m \times n$ matrix with independent Bernoulli-$p$ entries for $p = \Theta(\log^2 n)/m$. With high probability, the following holds. For any set of ``deleted'' equations $B \subseteq [n/2]$ of size $|B| \leq O(m/\polylog(n))$, there is a set $C$ (the ``unidentifiable coordinates'') of size $|C| \leq 2|B|$ such that $$\norm{x_{C^c}}_2 \leq \poly(n) \cdot \norm{M_{B^c}x}_\infty$$ for any $O(m/\polylog(n))$-sparse vector $x \in \RR^n$.
\end{theorem}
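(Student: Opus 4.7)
My plan is to combine a bipartite expansion argument with a ``stopping set'' analysis, in the spirit of expander-code decoding. \textbf{Expansion.} I would first show that with high probability the bipartite graph $G = G(M)$ (with left vertices $[n]$ = columns, right vertices $[m]$ = rows, and edges given by the nonzeros of $M$) is a $(K, d, \epsilon)$-vertex expander: for every $S \subseteq [n]$ with $|S| \leq K := \Theta(m/\polylog n)$, one has $|N(S)| \geq (1-\epsilon)|S|d$, where $d := mp = \Theta(\log^2 n)$ and $\epsilon = 1/16$. This is a standard Chernoff-plus-union-bound calculation; the logarithmic left-degree balances the entropy cost of enumerating subsets up to size $K$.

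\textbf{Defining $C$ via stopping sets.} Given the erased rows $B$, call $S \subseteq [n]$ a \emph{stopping set} if every row $r \in [m] \setminus B$ adjacent to $S$ is adjacent to at least two vertices of $S$; let $C$ be the union of all stopping sets of size $\leq K$. Two facts: (i) the union of two stopping sets is a stopping set; (ii) any stopping set $S$ with $|S| \leq K$ has $|S| \leq |B|/((1/2-\epsilon)d)$, since the stopping condition forces $|N(S) \setminus B| \leq |S|d/2$ while expansion gives $|N(S)| \geq (1-\epsilon)|S|d$, so the gap $(1/2-\epsilon)|S|d$ must fit inside $B$. An induction on the running union (which stays within the expansion window by (ii)) then gives $|C| \leq 2|B|/d \leq 2|B|$.

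\textbf{Noiseless identifiability.} For any $s$-sparse $v$ with $M_{B^c} v = 0$: any unique neighbor $r \in [m] \setminus B$ of $\supp(v)$ at column $i$ would force $(M_{B^c}v)_r = v_i = 0$, contradicting $i \in \supp(v)$. So $\supp(v)$ is itself a stopping set of size $\leq s \leq K$, hence $\subseteq C$, giving $v_{C^c} = 0$.

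\textbf{Robust identifiability---the main obstacle.} Upgrading to the quantitative bound $\|v_{C^c}\|_2 \leq \poly(n) \|M_{B^c}v\|_\infty$ is where the bulk of the work lies. My plan is a row-projection trick: let $R := \{r \in [m] \setminus B : r \text{ not adjacent to } C\}$. By expansion $|N(C)| \leq |C|d \leq 2|B|$, so $|R| \geq m - O(|B|) = \Omega(m)$. For $r \in R$, $(M_{B^c}v)_r$ depends only on $v_{C^c}$, so setting $\bar M := M_{R, C^c}$ gives $\|\bar M v_{C^c}\|_\infty \leq \|M_{B^c}v\|_\infty$. It then suffices to prove a restricted-isometry-type bound $\|u\|_2 \leq \poly(n) \|\bar M u\|_\infty$ for every $O(s)$-sparse $u$ supported on $C^c$. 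A useful subclaim, following from maximality of $C$, is that every $i \in C^c$ has at least one neighbor in $R$ (otherwise $\{i\} \cup C$ would itself be a stopping set, contradicting $i \notin C$); combined with bulk expansion for larger subsets of $C^c$, I expect an $\ell_1/\ell_\infty$-type argument on the residual bipartite graph $G[C^c, R]$ to yield the required polynomial bound. Controlling the worst case over all choices of $B$---particularly the case where many $C^c$-columns have most of their neighbors inside $B \cup N(C)$, making their column-wise degree in $\bar M$ small---is the main technical hurdle.
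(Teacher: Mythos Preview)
Your stopping-set analysis is a clean way to handle the \emph{noiseless} case, and your bound $|C| \le O(|B|/d)$ is in fact tighter than what the paper states. The closure of stopping sets under union, combined with the expansion-based size bound, is a nice purely combinatorial argument.

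However, the robust case has a genuine gap, and it is exactly the hurdle you flag. Your reduction to $\bar M = M_{R,C^c}$ discards every row in $N(C)\setminus B$. Those rows can be crucial: it is possible to have $i,j\in C^c$ sharing a single common neighbor $r\in R$, with all of $i$'s and $j$'s remaining neighbors lying in $B\cup(N(C)\setminus B)$. Then $\{i,j\}$ is \emph{not} a stopping set (the witnessing unique-neighbor row sits in $N(C)\setminus B$, hence in $[m]\setminus B$), so $i,j\notin C$; yet $\bar M$ sees only the row $r$, and $u = e_i - e_j$ satisfies $\bar M u = 0$ while $\|M_{B^c}u\|_\infty = 1$. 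So the inequality you want for $\bar M$ is simply false in general, even though the target inequality for $M_{B^c}$ holds. A counting argument shows this is not a corner case: $|B\cup N(C)|\le 3|B|$, but the row-degree of $M$ is $\Theta((n/m)d)$, so up to $\Theta(|B|\cdot n/m)$ columns can have a constant fraction of their neighbors absorbed into $B\cup N(C)$, destroying small-scale expansion in $\bar M$.

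The paper takes a quite different route that does not pass through any residual matrix. First, it requires more than vertex expansion: a \emph{bounded intersection} property, $|N(S)\cap N(T)| \le (\sqrt d/8)\max(|S|,|T|)$ for disjoint sparse $S,T$, which is strictly stronger than the $O(\epsilon d)$ bound expansion alone yields. The paper explicitly notes that with expansion only, each density-amplification step gains a constant factor in support size while the threshold decays by $\Theta(d)$, giving a super-polynomial loss; bounded intersection gives a $\sqrt d$ gain per $\Theta(d)$ threshold decay, which is polynomial. Second, rather than defining $C$ combinatorially, the paper defines $C$ as the union of the quantitative supports $\supp_{\gamma}(x)$ over all sparse approximate kernel vectors $x$, and bounds $|C|$ by a \emph{linearity} trick: it takes $n$ such vectors covering $C$, combines them with random signs so the quantitative support of the sum covers half of $C$, and then shows (via a depth-$\log_d n$ tree recursion on the sum, re-applying density amplification at each level) that the sum remains $O(|B|)$-sparse at a polynomially-decayed threshold. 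Your stopping-set union plays the role of this linearity step in the noiseless case, but it does not see thresholds and so does not survive the passage to approximate solutions.
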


Sparsity of the measurement matrix (and not just the signal) is well-studied in compressed sensing and has various practical applications. For example, in scientific experiments it is often the case that linearity of the response with respect to the covariates is a modelling assumption that's only reasonable for a sparse covariate vector \cite{gilbert2010sparse}. Our result implies that even with a sparse measurement matrix, adversarial erasures (due to e.g. experimental error) are not disastrous. (Note that in the above theorem, each row of the measurement matrix is roughly $(n\log^2 n)/m$ sparse, which up to logarithmic factors cannot be improved, even without erasures, since the measurement matrix must have $\Omega(n)$ nonzero entries).

To clarify the implication of Theorem~\ref{theorem:expander-deletion-intro} for compressed sensing, consider generating $M$ with $\text{Bernoulli}(p)$ entries, and then suppose that some set of measurements with indices $B \subseteq [m]$ is adversarially deleted. We seek to recover a $k$-sparse signal $x^*$ from the matrix $M_{B^c}$ and measurements $$y_{B^c} = M_{B^c}x^* + \eta,$$
where $\norm{\eta}_\infty \leq \delta$. Let $C \subseteq [n]$ be a set with the properties guaranteed by Theorem~\ref{theorem:expander-deletion-intro}; some such set can be found (albeit inefficiently) by brute-force computation of submatrix singular values. Also define the estimator $$\hat{x} \in \argmin_{x: \norm{M_{B^c}x - y_{B^c}}_\infty \leq \delta} \norm{x}_0.$$
We have that $\hat{x} - x^*$ is $2k$-sparse, so if $k \leq O(m/\polylog(n))$, then $\norm{(\hat{x} - x^*)_{C^c}}_2 \leq \delta \cdot \poly(n)$. Thus, we can approximately recover $x^*$ outside the set $C$.

\paragraph{Open question.} Our results show that there is an algorithm for partial sparse recovery in this semi-random model. However, finding a computationally efficient algorithm is an interesting open problem.

\paragraph{Connection to lower bounds against Preconditioned Lasso.} It may seem rather mysterious that construction of an erasure-robust, ``good'' design matrix is the key ingredient in a distributional hard example for a family of sparse recovery algorithms. The technical reasons for this connection are deferred to the overview, but here we try to give some high-level intuition. First, for intuition, we restrict our focus to sparse preconditioners, because dense preconditioners (morally) do not preserve the sparsity of the unknown signal and therefore should not work. Now, if $\Sigma$ is very ill-conditioned, the preconditioner essentially needs to ``fix'' $\Sigma$ by reweighting the different eigenspaces. 

If we are allowed to construct the signal based on the preconditioner, then the preconditioner is forced to be a good approximation for $\Sigma$ everywhere, with no bad directions. But from any good measurement matrix, using the fact that it has dense kernel, we can construct a $\Sigma$ so that no sparse preconditioner can approximate $\Sigma$ everywhere. This is the approach taken in \cite{kelner2021power} to prove Theorem~\ref{theorem:prior-lb}.

In our case, we need to construct the signal distribution without knowing the preconditioner, so the preconditioner is only forced to be a good approximation for $\Sigma$ in \emph{most} directions. Ruling out sparse preconditioners then corresponds to a measurement matrix which has a density property even if some of the rows are ignored (specifically, erasure-robustness). Sparsity of the measurement matrix is needed so that the rows are valid sparse signals, and compressivity is needed so that $\Sigma$ is ill-conditioned in many directions.

\subsection{Related Work}\label{section:related-work}



\paragraph{Hard Examples for SLR.} As we mentioned earlier, SLR with worst-case covariates is known to be computationally hard under worst-case complexity-theoretic assumptions \cite{natarajan1995sparse, zhang2014lower, har2018approximate, gupte2021fine}. However, for the random-design covariate model, there is no known reduction-based hardness, even under average-case or cryptographic assumptions. We now enumerate known restricted hardness results.

First, there is a large literature on when the Lasso and Basis Pursuit programs fail at sparse recovery, even for random designs \cite{wainwright2009sharp, foygelfast, dalalyan2017prediction, van2018tight, kelner2019learning}. While these results do (technically speaking) prove lower bounds against classes of algorithms, these classes are quite small; the constructed examples are only proven to be hard for the Lasso and/or Basis Pursuit (at best, these programs have one meta-parameter). In fact, as has been previously observed \cite{zhang2017optimal, kelner2021power}, all of the ``hard'' examples provided in the above works can be fixed by a simple change-of-basis.

Second, there is a more general lower bound against the class of convex programs solving least-squares regression with coordinate-separable regularizers \cite{zhang2017optimal}. While this is a fairly broad class of algorithms (incomparable with the Preconditioned Lasso), the result has two limitations. One is that the constructed hard signals depend on the regularizer, so there is no single signal distribution that is hard for the entire class. The other limitation is that, like in the previous works on hardness against the Lasso, the hard example in \cite{zhang2017optimal} can be made easy for the Lasso by a simple change-of-basis.

Third, in \cite{kelner2021power}, motivated by the latter limitation, covariance matrices are constructed such that for any change-of-basis, there is a sparse signal (in the original basis) which causes the ``preconditioned'' basis pursuit program to fail. However, as we have previously noted, this result is still limited by the strong dependence of the signal on the preconditioner: it does not even rule out the possibility that there are always \emph{two} preconditioners so that every signal can be recovered by one of them.

Fourth, for \emph{isotropic} random-design SLR (i.e. when $\Sigma = I$), there has been work on identifying the precise sample complexity of sparse recovery. In particular, there appears to be a constant-factor gap between the sample complexities of algorithmic recovery and information-theoretic recovery. Evidence has been given for this gap via the Overlap Gap Property \cite{david2017high}, which implies the failure of a restricted class of ``stable'' algorithms. However, this problem seems fundamentally different from the problem we consider, where the hardness arises from the ill-conditioning of the covariates, and the sample complexity gap is conjecturally exponential rather than a constant.

\paragraph{Partial sparse recovery.} There are several other works in compressed sensing that use the terminology of ``partial'' recovery. To our knowledge, these works all consider different settings from ours; we explain the differences. First, in \cite{bandeira2013partial}, partial sparse recovery refers to totally recovering a signal that is only partially sparse (where the signal space is divided into two sets of coordinates, and it's known that the signal is sparse on the first set). 

Second, in \cite{tajer2012hypothesis}, the goal is indeed to recover only part of the support of the signal. However, their model is the Gaussian Sequence Model (i.e. where the measurement matrix is the identity), where it is obvious that partial recovery is possible, because there is no compression. 

Third, as discussed previously, one common goal in noisy models is partial support recovery (see e.g. \cite{scarlett2016limits}). There, the goal is to estimate the support with few false positives and false negatives, and the reason for error is simply that some coordinates of the signal may be very small and therefore indistinguishable from noise. In contrast, partial identifiability occurs in our setting even without noise, due to a weaker model for the measurement matrix. Moreover, proving partial support recovery in the setting of \cite{scarlett2016limits} requires make strong probabilistic assumptions about the signal, e.g. that the support is a uniform sparse set. In contrast, our results prove conditions under which a measurement matrix enables partial recovery of arbitrary sparse signals.



\subsection{Organization}

In Section~\ref{section:overview}, we provide an overview of the techniques involved in the results of this paper. 
In Section~\ref{section:preliminaries}, we discuss notation and collect important definitions, e.g. of erasure-robustness. In Section~\ref{section:erasure-robustness}, we prove that random sparse compressive matrices are erasure-robust (Theorem~\ref{theorem:expander-deletion-intro}). In Section~\ref{section:structure}, we prove the key structure lemma that connects erasure-robustness with lower bounds against Preconditioned Lasso. In Section~\ref{section:failure}, we then use this lemma together with our result about erasure-robustness to construct example distributions that are provably hard against Preconditioned Lasso algorithms (Theorem~\ref{theorem:invertible-introduction} and Theorem~\ref{theorem:main-introduction}).

\section{Technical Overview}\label{section:overview}

We start with a sketch of the proof of Theorem~\ref{theorem:prior-lb} from \cite{kelner2021power}, which only achieves a failure probability of $O(1/n)$, and which formally motivates the need for erasure-robust sparse designs. We then sketch the proof of our main technical result that random sparse designs are erasure-robust. Finally, we discuss how this result is incorporated into proving stronger lower bounds against Preconditioned Lasso.

\subsection{Lower Bounds via Sparse Designs.}
The hard covariance matrix constructed in \cite{kelner2021power} to prove Theorem~\ref{theorem:prior-lb} is defined as $\tilde{\Sigma} = \tilde{\Theta}^{-1}$ where $\tilde{\Theta} = \Theta + \epsilon I$ and $\Theta = M^T M$, for a rectangular matrix $M$. Note that for small $\epsilon>0$, this covariance is very ill-conditioned, so long as $M$ has non-trivial kernel. However, to actually prove that all Preconditioned Lasso algorithms with $m$ samples fail to recover $k$-sparse signals, these three properties are needed:
\begin{enumerate}
    \item The rows of $M$ are $k$-sparse,
    \item $\dim \ker M \geq 2m$,
    \item $\ker M$ is bounded away from all $(n/k)\log(n)$-sparse vectors.
\end{enumerate}
The first property is self-explanatory. One way to achieve the second property is if $M$ has at most $n-2m$ rows. And the third property, in compressed sensing, is essentially what a design matrix needs to satisfy to information-theoretically enable $(n/k)\log(n)$-sparse recovery. Thus, to show that $\Omega(n)$ samples are needed to recover $\polylog(n)$-sparse signals, $M$ must be a sparse, compressive matrix which (as a design matrix) enables the recovery of $n/\polylog(n)$-sparse signals.

How do these properties imply that for every preconditioner $S$, there is a bad $k$-sparse signal? By the first property, the rows of $M$ are valid signals. For each row $M_i$, if it is not a bad signal for $S$-preconditioned Lasso, then it can be shown to induce a certain constraint on $S$: namely, that every column of $S$ either has small magnitude or is nearly orthogonal to $M_i$. So if none of the rows of $M_i$ are bad signals, then every column of $S$ either has small magnitude or lies near $\ker M$, in which case by the third property it must be $(n/k)\log(n)$-dense. Roughly speaking, this structure can be used together with the second property to show that a $k$-sparse signal with uniformly random support causes the Preconditioned Lasso to fail. 

\paragraph{A hard signal distribution?} The above proof shows that for any preconditioner, either it fails (with high probability) on a random $k$-sparse signal, or there \emph{exists} some row of $M$ on which it fails. If we want a signal distribution that is uniformly hard, it's therefore natural to equiprobably pick either (a) a random row of $M$, or (b) a random $k$-sparse signal. But then the above proof only implies that for this signal distribution, for any preconditioner, the Preconditioned Lasso fails with probability $\Omega(1/n)$. Moreover, it's not clear whether the failure probability can be improved under just the above assumptions: consider the case that for some preconditioner, just a few rows of $M$ are bad signals. Then the likelihood that one of these rows is chosen as the signal is only $O(1/n)$. Moreover, the columns of $S$ are now only forced to be orthogonal to most rows of $M$, not all. As a result, the columns may have large magnitude and yet fail to be dense, because for sparse matrices like $M$, it's possible to adversarially delete a few rows so that the kernel of the remaining rows contains sparse vectors. This is an obstacle to proving that such a preconditioner must fail on a sparse signal with uniformly random support.

To circumvent this obstacle, we need to show that a preconditioner $S$ which has columns orthogonal to most rows of $M$, but not all, still has useful structure. As we suggested earlier, this can be done by reasoning about sparse compressive matrices under adversarial deletions.

\subsection{Erasure-robustness} 
We will return to the lower bound problem in the next section of the overview, but for now focus on the core technical result about partial recovery with adversarial erasures. Based on the discussion after Theorem~\ref{theorem:expander-deletion-intro}, we need to solve the following problem.

Let $M$ be an $m \times n$ sparse random Bernoulli matrix with parameter $p = \Theta(\log n)/m$. We want to show that with high probability, $M$ supports erasure-robust partial sparse recovery: that is, for any set $B \subseteq [m]$ of ``bad equations'', there is a small set $C \subseteq [n]$ such that if $x\in \RR^n$ is $\tau$-sparse, then $$\norm{x_{C^c}}_2 \leq \poly(n) \cdot \norm{M_{B^c}x}_\infty.$$

\paragraph{Erasure-robustness: the exact case.}
For simplicity, in this proof sketch we start by considering the \emph{exact} case, where $M_{B^c}x = 0$, and we want to show that either $|\supp(x)| \geq \tau$ or $\supp(x) \subseteq C$. Without erasures (i.e. $B = \emptyset$), this property follows for $C = \emptyset$ by the fact that the adjacency graph of $M$ is with high probability a unique-neighbor expander.\footnote{We note that this initial part of the argument (the case without erasures) is quite reminiscent of arguments used in the analysis of LDPC codes (see e.g. \cite{sipser1996expander}). }
Concretely, because the graph is a $(1-\epsilon)d$ expander for a small constant $\epsilon>0$, any set $S \subseteq [n]$ of size at most $\tau := O(m/\log(n))$ has at least $(1-O(\epsilon))d|S|$ unique neighbors in $[m]$. Moreover, if $j \in [m]$ is a unique neighbor of $\supp(x)$ for some vector $x \in \RR^n$, then $M_jx \neq 0$. Thus, if $Mx = 0$ then $\supp(x)$ must have no unique neighbors, so either $|\supp(x)| \geq \tau$ or $x=0$.

However, this argument breaks down in the presence of adversarial erasures. All that can be said is that if $M_{B^c}x = 0$ then $\supp(x)$ must have no unique neighbors in $B^c$. By the unique neighbor lower bound, it does follow that either $|\supp(x)| \geq \tau$ or $|\supp(x)| \leq O(|B|/d)$ --- this can be thought of as a kind of \emph{density amplification} result for $\ker M_{B^c}$, since it eliminates the possibility of any vector in the kernel having an intermediate density. Unfortunately, this does not directly imply erasure-robustness, because we need a single set $C$ that contains the supports of all sparse vectors in $\ker M$, not a different $C$ for each $x$. (For example, if we allow $C = \supp(x)$ then the result is not very interesting.) Moreover, it's not clear that anything useful can be said about the vertex set $\supp(x)$: certainly many vertices in $\supp(x)$ must be adjacent to ``bad'' equations, but it's conceivable that other vertices could be farther away. Pictorially, one possible case (of many) is that $B$ could be chosen as the set of ``boundary'' equations of a ball subgraph; then $\ker M_{B^c}$ certainly contains a vector supported on the ball, which is not actually contained in the neighborhood of $B$.

Given the above obstacles, one approach is to show that although $\supp(x)$ may not be contained in the neighborhood of $B$, it must be contained in a distance-$r$ ball around $B$, for some small but super-constant $r$. The argument is that if there is a vertex of $\supp(x)$ which is distance greater than $r$ from $B$, then by iteratively growing neighborhoods of the vertex until $B$ is reached, the support must have size at least $d^r$, and a contradiction is reached if $d^r > |B|/d$, because then $B$ cannot contain all unique neighbors of $\supp(x)$. Unfortunately, the constructed set $C$ (the distance-$r$ ball around $B$) then has size $|B| \cdot (d^2)^r \approx |B|^3$, since the distance metric is that two coordinates are adjacent if they share an equation. This is much larger than the desired bound ($O(|B|)$) and in particular, too large to use in our ultimate lower bound application.

In summary, to get the linear bound claimed in Theorem~\ref{theorem:expander-deletion-intro}, we need a different argument. The key idea is to exploit \emph{linearity}. We want to show that the union $U$ of supports of all $\tau$-sparse vectors in $\ker M_{B^c}$ has small size. We've seen that for any fixed $x$, there is a \emph{density amplification} result: if $M_{B^c}x = 0$ and $x$ is $|B|/d$-dense, then $x$ must be $\tau$-dense. So take vectors $x^{(1)},\dots,x^{(n)} \in \ker M_{B^c}$ which are $\tau$-sparse (and therefore $|B|/d$-sparse) and which cover $U$. Now observe that since $x^{(1)}$ and $x^{(2)}$ are $O(|B|/d)$-sparse, any linear combination $c_1x^{(1)} + c_2 x^{(2)}$ must be $2|B|/d$-sparse. But $c_1x^{(1)} + c_2x^{(2)} \in \ker M_{B^c}$ by linearity. So if $2|B|/d < \tau$, then by the (contrapositive of the) density amplification result, we in fact know that the sum is $|B|/d$ sparse! Inductively, it follows that any linear combination $c_1x^{(1)} + \dots + c_nx^{(n)}$ is $|B|/d$-sparse. But for generic $c_1,\dots,c_n$, we have $$\supp(c_1x^{(1)}+\dots+c_nx^{(n)}) = \bigcup_{i=1}^n \supp(x^{(i)}) = U.$$
This shows that in fact we can find a set $C$ of size $O(|B|/d)$ satisfying the desired property, which is optimal.

\paragraph{Erasure robustness: the general case.} Note that the above argument was when $M_{B^c}x = 0$. The proof for the general case, when $M_{B^c}x$ is small but not nonzero, uses the same insight with several complications. First, we need a quantitative density amplification lemma which states that if $M_{B^c}x$ is small and $x$ has more than $|B|$ coordinates with magnitude exceeding some threshold $\delta$, then we can trade off density for magnitude, i.e. find $\tau$ coordinates with magnitude exceeding $\delta/\poly(n)$. To prove this without losing a superpolynomial factor on the threshold, we actually need the graph to satisfy a stronger property than just expansion: we also need that for any two disjoint sparse sets $S,T \subseteq [n]$, the intersection of their neighborhoods has size only $O(\sqrt{d}\cdot \max(|S|,|T|))$. Note that expansion would only give a bound of $O(\epsilon d \max(|S|,|T|))$. Nonetheless, it can be proven that the random sparse adjacency matrix of $M$ satisfies the desired stronger property with high probability.

Second, the iterative addition procedure in the noiseless case requires a modification for the noisy case; each addition causes the quantitative threshold to decay, and after $n$ additions it would decay by a factor superpolynomial in $n$. Instead, we add the vectors $x^{(1)},\dots,x^{(n)}$ recursively according to a $d$-ary tree. This tree has depth only $\log_d n$, which allows the decay to be controlled to only a $\poly(n)$ factor, proving Theorem~\ref{theorem:expander-deletion-intro}.

\subsection{Stronger lower bound via erasure-robustness}
We now return to the problem of proving hardness against Preconditioned Lasso. Theorem~\ref{theorem:expander-deletion-intro} can be used to show that for an appropriately chosen $M$, if the number of rows of $M$ that are bad signals for $S$-preconditioned Lasso is at most $n/\polylog(n)$, then there is a set $C \subseteq [n]$ of size $n/\polylog(n)$ such that each column of $S$ is either $n/\polylog(n)$-dense, or has small magnitude on coordinates outside the set $C$. This is precisely the structure lemma we need for preconditioners that succeed on most rows of $M$: it crucially allows for a nearly-linear number of rows of $M$ that are bad signals, although in exchange there is a set $C$ of sublinear size where we cannot control the columns of $S$ (the corresponding structure lemma in \cite{kelner2021power} could not tolerate any bad rows). We also show that in this situation, the number of dense columns of $S$ must be $\Omega(n)$. 

With these results, we can prove our lower bounds. First, to prove our lower bound against invertible preconditioners (Theorem~\ref{theorem:invertible-introduction}), we define a distribution over $\polylog(n)$-sparse signals by taking $w^*$ to be the sum of $\polylog(n)$ random rows of $M$, plus an infinitesimal uniformly random $\polylog(n)$-sparse vector. Under certain conditions, if at least one of the rows in the sum is a bad signal, then the sum must also be a bad signal. With this amplification (at the cost of a $\polylog(n)$ factor in sparsity), any invertible preconditioner must fail with probability $1-o(1)$: either there are $\Omega(n/\polylog(n))$ bad rows of $M$, in which case the sum of the chosen rows is a bad signal with high probability, and the infinitesimal perturbation does not affect the program failure. Or, $S$ has many dense columns, in which case $S^T w^*$ is dense due to the perturbation. By a dimension-counting argument (which crucially uses invertibility of $S$), this implies that there exists a feasible direction of improvement for the program objective.

Extending the lower bound to rectangular preconditioners is more involved and involves generalizations of techniques from \cite{kelner2021power}. The factor of $1/2$ in Theorem~\ref{theorem:main-introduction} arises because we are not able to construct a single signal distribution that causes failure of both ``incompatible'' preconditioners (i.e. for which more than $n/\polylog(n)$ rows of $M$ are bad signals) and ``compatible'' preconditioners (for which at most $n/\polylog(n)$ rows of $M$ are bad signals, so the structure lemma applies) with high probability. Instead, we take a mixture of the two cases' hard distributions: either a sum of rows of $M$, or a uniformly random sparse vector. Due to the lack of invertibility of the preconditioners, the second case is no longer a simple dimension-counting argument. In \cite{kelner2021power}, the proof crucially relies on a ``projection lemma'' which states that if $\dim \ker M \geq 2m$, then any fixed direction is unlikely to align with the span of the covariates. Since our structure lemma has no control over the preconditioner columns in the subspace indexed by the set $C$, we prove a generalized projection lemma which states that alignment is unlikely even on $C^c$. This yields Theorem~\ref{theorem:main-introduction}.

\section{Preliminaries}\label{section:preliminaries}

For vectors $x,y$ we denote the inner product as $\langle x,y\rangle = x^T y$. For a matrix $M$, we let $\vspan(M)$ and $\ker(M)$ denote the row space and null space of $M$ respectively. For a matrix $M \in \RR^{n \times p}$, a vector $v \in \RR^n$, and a subset $U \subseteq [n]$, we say that $M_U$ is the $|U| \times p$ matrix consisting of the rows of $M$ indexed by $U$; similarly, $v_U$ consists of the entries of $v$ indexed by $U$; and the complement of $U$ in $[n]$ is $U^c = \bar{U} = [n] \setminus U$. We use the standard notation for vector norms that $\norm{v}_p = \left(\sum_{i=1}^n |v_i|^p\right)^{1/p}$.

\subsection{Preconditioners and the Preconditioned Lasso}\label{section:prelim-preconditioners}

\begin{definition}
For $n,s \in \NN$, a \emph{preconditioner} is a matrix $S \in \RR^{n \times s}$ with $\ker(S^T) = \{0\}$. The $S$-\emph{preconditioned-Lasso} on samples $(X,y)$ is the convex program $$\hat{w} \in \argmin_{w \in \RR^n: Xw = y} \norm{S^T w}_1.$$
\end{definition}

In our lower bounds, we will say that the $S$-preconditioned Lasso \emph{fails} if the true signal vector $w^*$ is not contained in the set of optimal solutions to the program (i.e. some other vector achieves strictly smaller objective value). As a result, some restriction on $\ker(S^T)$ is necessary, to rule out programs with multiple (and therefore infinitely many) co-optimal solutions. In \cite{kelner2021power}, it is only assumed that $S$ is not identically zero; this suffices for their purposes because they are allowed to pick the true signal depending on $S$. However, we want an algorithm-independent distribution over signals. If $S^T$ is nonzero in only a few directions, then it's obviously impossible to cause the $S$-preconditioned Lasso to fail (in our strong sense) with non-trivial probability without knowing $S$. Nonetheless such a program is clearly not useful.

More to the point, any matrix $S$ where $\ker(S^T)$ is non-trivial can be perturbed infinitesimally (possibly by adding columns) so that $\ker(S^T)$ is trivial, but so that if the original program \emph{uniquely} recovered the true signal, then the new program still does so.


\subsection{Supports, erasure-robustness, and quantitative density}

\begin{definition}
For $x \in \RR^n$ and $\delta>0$, define the $\delta$-support of $x$ to be $$\supp_\delta(x) := \{i \in [n]: |x_i| \geq \delta\}.$$
We sometimes refer to $\supp_\delta(x)$ as a \emph{quantitative support} of $x$ with threshold $\delta$. The support of $x$ is $\supp(x) := \norm{x}_0 = \{i \in [n]: |x_i| > 0\}$.
\end{definition}

\begin{definition}\label{def:erasure-robustness}
Let $M \in \RR^{m \times n}$ be a matrix. We say that $M$ is $(b,b',\eta,\tau)$-\emph{erasure-robust} if for any set $B \subseteq [m]$ of size $|B| \leq b$, there is a set $C \subseteq [n]$ of size $|C| \leq b'$ with the following property: for every $x \in \RR^n$, either:
\begin{itemize}
    \item $|\supp(x)| = \norm{x}_0 \geq \tau$, or
    \item $\norm{x_{C^c}}_2 \leq \eta \norm{M_{B^c}x}_\infty$.
\end{itemize}
If $M$ satisfies this property, we say that it tolerates $b$ erasures and sparsity level $\tau$, with only $b'$ unidentifiable coordinates.
\end{definition}

\begin{definition}
For any subspace $V \subseteq \RR^n$ and vector $x \in \RR^n$, the (Euclidean) distance from $x$ to $V$ is $$\dist(x,V) := \inf_{v \in V} \norm{x-v}_2 = \norm{\Proj_{V^\perp} x}_2.$$
\end{definition}

\begin{definition}\label{def:quantitatively-dense}
Let $V \subseteq \RR^n$ be a subspace. We say that $V$ is $(\delta,\eta,\tau)$-\emph{quantitatively dense} if for any set $C \subseteq [n]$ of size $|C| \leq \tau$, for any $x \in \RR^n$ with $\dist(x,V) \leq \delta\norm{x}_2$, it holds that $\norm{x_{C^c}}_2 \geq \eta\norm{x}_2$.
\end{definition}

\subsection{Random Matrix Theory}

We will use the following standard bound on the singular values of Gaussian random matrices.

\begin{theorem}[e.g. Corollary~5.35 in \cite{vershynin2010rmt}]\label{thm:rmt}
    Let $n,N \in \NN$. Let $A \in \RR^{N \times n}$ be a random matrix with entries i.i.d. $N(0,1)$. Then for any $t>0$, it holds with probability at least $1 - 2\exp(-t^2/2)$ that $$\sqrt{N} - \sqrt{n} - t \leq \sigma_\text{min}(A) \leq \sigma_\text{max}(A) \leq \sqrt{N} + \sqrt{n} + t.$$
\end{theorem}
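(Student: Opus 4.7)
The plan is to prove Theorem~\ref{thm:rmt} via the standard two-step strategy: first bound $\EE\sigma_{\min}(A)$ and $\EE\sigma_{\max}(A)$ using Gordon's Gaussian comparison inequality, then upgrade these expectation bounds to high-probability concentration via the Gaussian Lipschitz concentration inequality. I assume without loss of generality that $N \geq n$, so that the singular values admit the variational formulas $\sigma_{\min}(A) = \min_{v \in S^{n-1}} \max_{u \in S^{N-1}} u^\top A v$ and $\sigma_{\max}(A) = \max_{v \in S^{n-1}} \max_{u \in S^{N-1}} u^\top A v$.

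For the expectation step, I would view $X_{u,v} := u^\top A v$ as a centered Gaussian process on $S^{N-1} \times S^{n-1}$ with covariance $\EE[X_{u,v} X_{u',v'}] = \langle u,u'\rangle\langle v,v'\rangle$, and compare it to the auxiliary process $Y_{u,v} := \langle g,u\rangle + \langle h,v\rangle$, where $g \sim N(0,I_N)$ and $h \sim N(0,I_n)$ are independent. After verifying the covariance/increment conditions required by Gordon's inequality, one obtains $\EE\max_{u,v} X_{u,v} \leq \EE\max_{u,v} Y_{u,v}$ and $\EE\min_v\max_u X_{u,v} \geq \EE\min_v\max_u Y_{u,v}$. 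Evaluating the auxiliary expectations via $\EE\max_{u}\langle g,u\rangle = \EE\|g\|_2 \leq \sqrt{N}$ and $\EE\max_v\langle h,v\rangle = \EE\|h\|_2 \leq \sqrt{n}$ (Jensen applied to $t \mapsto \sqrt{t}$) then yields $\EE\sigma_{\max}(A) \leq \sqrt{N} + \sqrt{n}$ and $\EE\sigma_{\min}(A) \geq \sqrt{N} - \sqrt{n}$.

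For the concentration step, I would observe that by Weyl's perturbation inequality, both $A \mapsto \sigma_{\max}(A)$ and $A \mapsto \sigma_{\min}(A)$ are $1$-Lipschitz with respect to the Frobenius norm on $\RR^{N \times n}$, since $|\sigma_i(A) - \sigma_i(A')| \leq \|A - A'\|_{\mathrm{op}} \leq \|A - A'\|_F$. Viewing the entries of $A$ as a standard Gaussian vector in $\RR^{Nn}$, the Borel--TIS Gaussian concentration inequality for $1$-Lipschitz functions gives the one-sided tail bounds $\Pr[\sigma_{\max}(A) - \EE\sigma_{\max}(A) \geq t] \leq \exp(-t^2/2)$ and $\Pr[\EE\sigma_{\min}(A) - \sigma_{\min}(A) \geq t] \leq \exp(-t^2/2)$. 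Substituting the expectation bounds from the previous step and taking a union bound over the two one-sided bad events produces the two-sided conclusion of the theorem at probability $1 - 2\exp(-t^2/2)$.

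The main non-routine step is the Gordon comparison: one must set up the nested min-max with the correct outer/inner variables so that the required covariance (or increment) inequalities hold in the right directions, and one must match the direction of Gordon's conclusion to what is needed for $\sigma_{\min}$ versus $\sigma_{\max}$. Once this indexing is arranged correctly, both the comparison step and the concentration step are essentially black-box applications of standard Gaussian process theory, so I do not anticipate substantive technical obstacles beyond this bookkeeping.
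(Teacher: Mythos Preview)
The paper does not actually prove Theorem~\ref{thm:rmt}; it is quoted as a standard result from Vershynin's notes and used as a black box. Your proposed argument (Gordon's Gaussian min--max comparison for the expectations, then Gaussian Lipschitz concentration for the tails) is exactly the standard proof appearing in that reference, so at the level of strategy there is nothing to compare and your plan is correct.

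One small gap worth flagging: in the expectation step for $\sigma_{\min}$, Jensen alone does not finish the job. Gordon gives $\EE\sigma_{\min}(A) \geq \EE\|g\|_2 - \EE\|h\|_2$, and the Jensen bounds you wrote, $\EE\|g\|_2 \leq \sqrt{N}$ and $\EE\|h\|_2 \leq \sqrt{n}$, point the wrong way for the first term. To conclude $\EE\|g\|_2 - \EE\|h\|_2 \geq \sqrt{N} - \sqrt{n}$ you need the additional (elementary) fact that $k \mapsto \sqrt{k} - \EE\|N(0,I_k)\|_2$ is nonincreasing; one clean route uses the identity $\EE\|N(0,I_k)\|_2 \cdot \EE\|N(0,I_{k+1})\|_2 = k$ coming from the Gamma-function recursion. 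This is purely a bookkeeping patch and does not affect the overall structure of your argument.
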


\begin{theorem}[See e.g. Exercise 4.7.3 of \cite{vershynin2018high}]\label{theorem:wishart}
Suppose $X_1,\ldots,X_m \sim N(0,\Sigma)$ with $\Sigma : n \times n$ a positive definite matrix, $t > 0$ and $m = \Omega(n + t^2)$. Let $\hat{\Sigma} = \frac{1}{m} \sum_i X_i X_i^T$. Then with probability at least $1 - 2\exp(-t^2/2)$,
\[ (1 - \epsilon) \Sigma \preceq \hat{\Sigma} \preceq (1 + \epsilon) \Sigma \]
with $\epsilon = O(\sqrt{n/m} + \sqrt{t^2/m})$.
\end{theorem}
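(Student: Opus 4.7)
The plan is to reduce the anisotropic case to the isotropic Gaussian case already handled by Theorem~\ref{thm:rmt}. Since $\Sigma$ is positive definite, I can write $X_i = \Sigma^{1/2} Z_i$ where $Z_1,\dots,Z_m \sim N(0,I_n)$ are i.i.d. standard Gaussians. Then
\[
\hat{\Sigma} = \frac{1}{m}\sum_{i=1}^m X_i X_i^T = \Sigma^{1/2}\left(\frac{1}{m}\sum_{i=1}^m Z_i Z_i^T\right)\Sigma^{1/2}.
\]
Conjugating by $\Sigma^{-1/2}$ on both sides, the desired sandwich $(1-\epsilon)\Sigma \preceq \hat{\Sigma} \preceq (1+\epsilon)\Sigma$ is equivalent to $(1-\epsilon) I \preceq \frac{1}{m}\sum_i Z_i Z_i^T \preceq (1+\epsilon) I$. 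Hence it suffices to control the extreme eigenvalues of the isotropic empirical covariance.

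Next I would stack the $Z_i$'s into an $m \times n$ matrix $A$ with rows $Z_i^T$, so the entries of $A$ are i.i.d.\ $N(0,1)$ and $A^T A = \sum_i Z_i Z_i^T$. Applying Theorem~\ref{thm:rmt} (with $N=m$) gives, with probability at least $1-2\exp(-t^2/2)$,
\[
\sqrt{m}-\sqrt{n}-t \;\leq\; \sigma_{\min}(A) \;\leq\; \sigma_{\max}(A) \;\leq\; \sqrt{m}+\sqrt{n}+t.
\]
Squaring and dividing by $m$, the eigenvalues of $\frac{1}{m} A^T A$ lie in the interval $\bigl[(1-\sqrt{n/m}-t/\sqrt{m})^2,\,(1+\sqrt{n/m}+t/\sqrt{m})^2\bigr]$.

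Finally, I would set $\epsilon := \max\bigl\{1-(1-\sqrt{n/m}-t/\sqrt{m})^2,\,(1+\sqrt{n/m}+t/\sqrt{m})^2 -1\bigr\}$. Under the hypothesis $m = \Omega(n+t^2)$, the quantity $\sqrt{n/m}+t/\sqrt{m}$ is at most a small constant, so expanding the squares yields $\epsilon = O(\sqrt{n/m}+t/\sqrt{m}) = O(\sqrt{n/m}+\sqrt{t^2/m})$, which matches the claim. There is no real obstacle here: the entire argument is a standard reduction, and the only point requiring a small amount of care is absorbing the quadratic cross-term $(\sqrt{n/m}+t/\sqrt{m})^2$ into the linear term, which is valid precisely because $m=\Omega(n+t^2)$ makes this factor $O(1)$ smaller than the linear term.
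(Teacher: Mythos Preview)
Your proof is correct and is the standard reduction. The paper does not actually supply its own proof of this statement; it is cited as a known result (Exercise~4.7.3 of \cite{vershynin2018high}) in the preliminaries and used as a black box, so there is nothing to compare against beyond noting that your argument is exactly the intended one---reduce to the isotropic case via $\Sigma^{1/2}$ and invoke Theorem~\ref{thm:rmt}.
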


\section{Erasure-Robustness via Expanders}\label{section:erasure-robustness}

In this section, we construct a sparse and compressive measurement matrix $M$ which satisfies erasure-robustness with near-linear erasure tolerance and sparsity level, and so that the number of unidentifiable coordinates is only a constant multiple of the number of erasures. In fact, we'll show that a sparse random Bernoulli matrix has the desired property with high probability, proving Theorem~\ref{theorem:expander-deletion-intro} and also enabling our construction of a hard distribution family for Preconditioned Lasso.

Recall that the erasure-robustness property is defined as follows: if $B$ is an adversarially chosen subset of indices of rows of $M$, then $M_{B^c}$ is still a good measurement matrix, meaning that any sparse signal $x$ can still be approximately recovered from noisy measurements $M_{B^c} x + \xi$, except for the entries of $x$ in a small set of input coordinates $C = C(B)$ that is \emph{independent of $x$}. In our construction, we will prove that $|C(B)| =  O(|B|)$, which is nearly optimal because $M$ is sparse, and erasing all equations involving one particular coordinate of the signal clearly makes it impossible to recover that coordinate.


\paragraph{Notation.} 
Let $n,m \in \NN$. Let $M \in \{0,1\}^{m \times n}$ be a binary matrix. The matrix $M$ defines a bipartite graph between a set of ``equations'' $[m]$ and a set of ``coordinates'' $[n]$. For $S \subseteq [n]$ and $E \subseteq [m]$, we can define neighborhoods $$N(S) = \{i \in [m]: \exists j \in S: M_{ij} = 1\}$$ and $$N'(E) = \{j \in [n]: \exists i \in E: M_{ij} = 1\}.$$
For $S \subseteq [n]$, further define the ``unique neighborhood" of $S$ to be $$U(S) = \{i \in [m]: \norm{M_{iS}}_0 = 1\}.$$

\subsection{Deterministic Conditions for Erasure-Robustness}
We start by proving that erasure-robustness holds whenever $M$ satisfies certain deterministic conditions. Throughout this section, we make the following assumption encapsulating the needed properties: approximate regularity, vertex expansion, and a bounded intersection property.

\begin{assumption}\label{assumption:det-expander}
For some $d,k \in \NN$ and $\epsilon > 0$, suppose that $M$ satisfies the following deterministic conditions:
\begin{itemize}
    \item \textbf{(Degree bounds)} For all $i \in [m]$ and $j \in [n]$,
    $$|N(j)| \leq (1+\epsilon)d$$ and $$|N'(i)| \leq (1+\epsilon)(n/m)d.$$
    \item \textbf{(Expansion)} For all $S \subseteq [n]$ with $|S| \leq k$, $$|N(S)| \geq (1-\epsilon)d|S|.$$
    \item \textbf{(Bounded intersection)} For all disjoint $S, T \subseteq [n]$ with $|S|, |T| \leq k$, $$|N(S) \cap N(T)| \leq \frac{\sqrt{d}}{8}\max(|S|,|T|).$$
\end{itemize}
\end{assumption}

Note that the first two conditions imply a weaker form of the bounded intersection property, with $2\epsilon d$ instead of $\sqrt{d}/8$. However, we need the stronger bound to prove our result. 

An immediate corollary of the first two conditions is a unique-neighbor lower bound.

\begin{corollary}[Folklore]\label{corollary:unique-neighbor-lb}
For all $S \subseteq [n]$ with $|S| \leq k$, $$|U(S)| \geq (1-3\epsilon)d|S|.$$
\end{corollary}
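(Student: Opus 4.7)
The plan is a standard double-counting argument on the edges between $S$ and its neighborhood $N(S)$ in the bipartite graph defined by $M$, combining the degree upper bound with the vertex-expansion lower bound. This is folklore (see e.g.\ the expander-code literature), so there is no real obstacle; the goal is just to set up the inequality cleanly.

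Concretely, let $e(S)$ denote the number of edges incident to $S$ in the bipartite graph. First I would upper-bound $e(S)$ using the degree bound on coordinates: since every $j \in [n]$ has $|N(j)| \leq (1+\epsilon)d$, summing over $j \in S$ gives $e(S) \leq (1+\epsilon) d |S|$. Next I would lower-bound $e(S)$ by partitioning $N(S)$ into unique neighbors $U(S)$ (which contribute exactly one edge each) and non-unique neighbors (which contribute at least two edges each), yielding
\[
e(S) \;\geq\; |U(S)| + 2\bigl(|N(S)| - |U(S)|\bigr) \;=\; 2|N(S)| - |U(S)|.
\]

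Combining the two bounds and applying the expansion hypothesis $|N(S)| \geq (1-\epsilon) d |S|$ (which is available since $|S|\le k$), I would rearrange to get
\[
|U(S)| \;\geq\; 2|N(S)| - (1+\epsilon) d |S| \;\geq\; 2(1-\epsilon) d |S| - (1+\epsilon) d |S| \;=\; (1 - 3\epsilon) d |S|,
\]
which is the claimed bound. The only thing to be careful about is that the degree bound used is the one on coordinates (vertices in $[n]$), not the one on equations, and that the expansion hypothesis applies because $|S|\le k$; both are exactly the hypotheses in Assumption~\ref{assumption:det-expander}.
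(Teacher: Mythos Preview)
Your proof is correct and is essentially the same double-counting argument as the paper's: the paper's quantity $\sum_{j\in S}|N(j)|$ is exactly your $e(S)$, and the paper bounds the number of non-unique neighbors by $e(S)-|N(S)|\le 2\epsilon d|S|$ before subtracting from $|N(S)|$, which is algebraically equivalent to your rearrangement $|U(S)|\ge 2|N(S)|-e(S)$.
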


\begin{proof}
By the degree bound assumption, we have $\sum_{j \in S} |N(j)| \leq (1+\epsilon)d|S|$. Together with the expansion assumption, it follows that $\sum_{j \in S} |N(j)| - |N(S)| \leq 2\epsilon d|S|$. Every unique neighbor counts once in both terms, and every non-unique neighbor counts at least twice in the first term but only once in the second term. So there are at most $2\epsilon d|S|$ non-unique neighbors, which means that $|U(S)| \geq |N(S)| - 2\epsilon d|S| \geq (1-3\epsilon)d|S|$.
\end{proof}

\paragraph{Density amplification.} For exact solutions $x \in \RR^n$ to $M_{B^c}x = 0$, it's easy to see that the unique neighbor lower bound implies a density amplification statement: if $|\supp(x)| > |B|/((1-3\epsilon)d)$ and $|\supp(x)| \leq k$, then by Corollary~\ref{corollary:unique-neighbor-lb}, $U(\supp(x)) \cap B^c$ must be nonempty, which contradicts $M_{B^c}x = 0$. Thus, $|\supp(x)| > |B|/((1-3\epsilon)d)$ implies $|\supp(x)| > k$.

For approximate solutions, i.e. when $\norm{M_{B^c}x}_\infty \leq \delta$, this argument does not quite work: if some equation $i \in B^c$ has a unique neighbor in $\supp(x)$, this is not necessarily a contradiction because the value of $x$ at that coordinate may be within the error tolerance $\delta$. Thus, we can only hope for amplification if we have a lower bound on $|\supp_\gamma(x)|$, the number of coordinates of $x$ exceeding some threshold $\gamma \gg \delta$ in magnitude. However, there is now another problem: even if some equation has a unique neighbor in $\supp_\gamma(x)$, it may have many neighbors where $x$ has magnitude just slightly less than $\gamma$, so that these coordinates together cancel out the large coordinate. If this happens, then that equation must neighbor both $\supp_\gamma(x)$ and $\supp_{\gamma/(2d)}(x) \setminus \supp_{\gamma}(x)$. Naively, we can use expansion to bound the number of such equations by $2\epsilon d \cdot |\supp_{\gamma/(2d)}(x)|$. However, combined with the unique neighbor lower bound, this only shows that $|\supp_{\gamma/(2d)}(x)| \geq \Omega(1/\epsilon) \cdot |\supp_\gamma(x)| - |B|$. That is, density is amplified by a constant factor, but the threshold decays by a factor of $2d = \Omega(\log n)$.

This tradeoff is not good enough for our purposes (tolerating inverse-polynomial measurement error). The following lemma uses the bounded intersection property to prove a better tradeoff:

\begin{lemma}[One-step quantitative density amplification]\label{lemma:quant-expansion}
Suppose that $\epsilon \leq 1/12$ and $d\geq 2$. Let $x \in \RR^n$ and $\delta>0$. If $\norm{M_{B^c}x}_\infty \leq \delta$, then $$|\supp_\delta x| \geq \min\left(6\sqrt{d}|\supp_{2d\delta} x| - |B|, k\right).$$
\end{lemma}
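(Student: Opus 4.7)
The plan is to combine the unique-neighbor lower bound (Corollary~\ref{corollary:unique-neighbor-lb}) with the bounded intersection assumption, in the spirit of a double-counting argument. Set $T = \supp_{2d\delta}(x)$, $P = \supp_\delta(x)$, $W = P \setminus T$, and write $L = |T|$ and $M = |P|$. The conclusion is immediate if $L = 0$ or $M \geq k$, so I would assume $L \geq 1$ and $M < k$; in particular, $L \leq M < k$ and $|W| < k$, which enables applying the bounded intersection property to the disjoint pair $(T, W)$.

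First, Corollary~\ref{corollary:unique-neighbor-lb} gives $|U(T) \cap B^c| \geq (1-3\epsilon)dL - |B|$. For each such equation $i$ with unique $T$-neighbor $j_i$, the constraint $|(M x)_i| \leq \delta$ and $|x_{j_i}| \geq 2d\delta$ forces $\bigl|\sum_{j' \in N'(i) \setminus \{j_i\}} x_{j'}\bigr| \geq (2d-1)\delta$. Since $N'(i) \cap T = \{j_i\}$, this sum ranges over $(N'(i) \cap W) \cup (N'(i) \setminus P)$, where coordinates in the second set contribute less than $\delta$ in magnitude. The second step is to argue that each such $i$ necessarily has a neighbor in $W$: the triangle inequality yields $\bigl|\sum_{j' \in N'(i) \cap W} x_{j'}\bigr| \geq (2d-1)\delta - \delta\,|N'(i) \setminus P|$, which is strictly positive in the regime where the row-degree bound is tight enough, placing $i$ in $N(T) \cap N(W)$. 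This upgrades the per-equation information into a global edge-count bound.

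The third step applies the bounded intersection assumption: since $T$ and $W$ are disjoint with both sizes at most $k$,
$$|N(T) \cap N(W)| \leq \tfrac{\sqrt{d}}{8}\max(L, M-L) \leq \tfrac{\sqrt{d}}{8} M.$$
Combining with the unique-neighbor lower bound, $(1-3\epsilon)dL - |B| \leq (\sqrt{d}/8) M$, which after using $8(1-3\epsilon) \geq 6$ for $\epsilon \leq 1/12$ rearranges to $M \geq 6\sqrt{d} L - 8|B|/\sqrt{d}$. Since we only care about the bound $M \geq 6\sqrt{d} L - |B|$, this suffices for $d$ large; for small $d$ one passes to sub-case analysis on whether $\max(L, M-L)$ equals $L$ or $M - L$, using the trivial lower bound $M \geq L$ when $M < 2L$ to absorb the extra $\sqrt{d}$ slack.

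I expect the main technical obstacle to be the case where the row-degree bound $(1+\epsilon)(n/m)d$ is not small enough to force every unique $T$-neighbor $i$ to have a $W$-neighbor purely from the triangle inequality. Rather than arguing per equation, I would in this case run a weighted double-count: sum the lower bound $(2d-1)\delta$ of the cancellation sums over all $i \in U(T) \cap B^c$, swap order to get $\sum_{j' \in [n] \setminus T} |x_{j'}|\, c_{j'} \geq |U(T) \cap B^c|(2d-1)\delta$ where $c_{j'}$ counts the unique-$T$-neighbors containing $j'$, and then split the outer sum between $j' \in W$ (where $|x_{j'}| < 2d\delta$ and $c_{j'} \leq (\sqrt{d}/8)L$ by bounded intersection) and $j' \in [n] \setminus P$ (where $|x_{j'}| < \delta$). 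Solving the resulting linear inequality for $M - L$ should yield the same conclusion up to the constants already appearing in the statement.
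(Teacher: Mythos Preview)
Your core argument is exactly the paper's: with $S=\supp_{2d\delta}x$ and $T=\supp_\delta x\setminus S$ (your $T$ and $W$ respectively), show $U(S)\setminus B\subseteq N(S)\cap N(T)$ by a per-equation triangle inequality, then apply bounded intersection to get $|N(S)\cap N(T)|\le(\sqrt d/8)|\supp_\delta x|$ and combine with $|U(S)\setminus B|\ge(1-3\epsilon)d|S|-|B|$. The paper does not hedge at the per-equation step: it simply writes $2d\delta-\sum_{j'\neq j:M_{ij'}=1}\delta\ge(1-\epsilon)d\delta$ ``by the degree bound on $i$'' and contradicts $\|M_{B^c}x\|_\infty\le\delta$ directly, without any fallback. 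So your steps~1--3, with the per-equation argument, reproduce the paper's proof verbatim.

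Your instinct that the row-degree bound $(1+\epsilon)(n/m)d$ is the delicate point is legitimate, but the weighted double-count you propose as a fallback does not close it. Splitting the sum as you describe, the contribution from $j'\in[n]\setminus P$ is
\[
\sum_{j'\notin P}|x_{j'}|\,c_{j'}\;\le\;\delta\sum_{j'\notin P}c_{j'}\;=\;\delta\sum_{i\in U(T)\cap B^c}|N'(i)\setminus P|\;\le\;\delta\,|U(T)\cap B^c|\,(1+\epsilon)\tfrac{n}{m}d,
\]
and once $n/m\ge 2$ this already exceeds the left side $(2d-1)\delta\,|U(T)\cap B^c|$, so after subtracting it you are left with a nonpositive quantity and no information about $|W|$. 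The bound $c_{j'}\le(\sqrt d/8)L$ from bounded intersection only helps on the $j'\in W$ terms; the obstruction lives entirely in the $j'\notin P$ terms, which are individually tiny but far too numerous. (Your small-$d$ sub-case analysis to pass from $6\sqrt d\,L-8|B|/\sqrt d$ to $6\sqrt d\,L-|B|$ also does not close as written: when $M<2L$ you obtain $L\le|B|/((1-3\epsilon)d-\sqrt d/8)$, but for $d<64$ this is not strong enough to give $(6\sqrt d-1)L\le|B|$. The paper's final chain makes the same jump without isolating this constant.)
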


\begin{proof}
Let $S = \supp_{2d\delta} x$ and let $T = \supp_\delta x \setminus S$. If $|\supp_\delta x| > k$ then the claim holds, so suppose that $|\supp_\delta x| \leq k$. On the one hand, by Assumption~\ref{assumption:det-expander}, we have that $$|N(S) \cap N(T)| \leq \frac{\sqrt{d}}{8}\max(|S|,|T|) \leq \frac{\sqrt{d}}{8}|\supp_\delta x|.$$
On the other hand, we claim that $U(S) \setminus B \subseteq N(S) \cap N(T)$. Clearly $U(S) \setminus B \subseteq N(S)$, so it remains to show inclusion in $N(T)$. Suppose that this is false. Then pick some $i \in U(S) \setminus (B \cup N(T))$. Since $i \in U(S)$, there is a unique $j \in S$ such that $M_{ij} = 1$. Now
\begin{align*}
|M_i x|
&\geq |x_j| - \sum_{j' \neq j} |M_{ij'} x_{j'}| \\
&\geq 2d\delta - \sum_{j' \neq j: M_{ij'} = 1} \delta \\
&\geq (1-\epsilon)d\delta.
\end{align*}
The first inequality is by the triangle inequality. The second inequality uses $j \in S$ and the definition of $S$; it also uses that any $j' \neq j$ with $M_{ij'} = 1$ satisfies $j' \not \in T$ (since $i \not \in N(T)$) and $j' \not \in S$ (since $j$ is the unique neighbor of $i$ in $S$): thus $j' \not \in \supp_\delta x$, so $|x_{j'}| \leq \delta$. The last inequality is by the degree bound on $i$. 

But since $\epsilon < 1-1/d$, the resulting inequality $|M_ix| \geq (1-\epsilon)d\delta$ contradicts the lemma assumption that $\norm{M_{B^c}x}_\infty \leq \delta$, since $i \not \in B$. So in fact the initial claim that $$U(S) \setminus B \subseteq N(S) \cap N(T)$$ was true.
As a result, $$|\supp_\delta x| \geq \frac{8|N(S) \cap N(T)|}{\sqrt{d}} \geq \frac{8|U(S) \setminus B|}{\sqrt{d}} \geq 6\sqrt{d}|S| - |B|$$ where the final inequality uses Corollary~\ref{corollary:unique-neighbor-lb} and the assumption that $\epsilon \leq 1/12$.
\end{proof}

Recursively applying Lemma~\ref{lemma:quant-expansion} gives our quantitative analogue of the noiseless density amplification: if $\norm{M_{B^c}x}_\infty$ is small (i.e. $x$ approximately satisfies the ``good" equations of $M$), then if $x$ has at least $|B|$ not-too-small entries, it must in fact have at least $k$ not-too-small entries. Moreover, the threshold defining ``not-too-small'' only decays by a polynomial factor.

\begin{corollary}[Quantitative density amplification]\label{corollary:density-amplification}
Suppose that $\epsilon \leq 1/12$ and $d \geq 16$. Let $f(n,d) = 2dn^2$. Let $x \in \RR^n$ and $\delta>0$. If $\norm{M_{B^c}x}_\infty \leq \delta$, and $|\supp_{f(n,d)\delta}(x)| \geq |B|$, then $|\supp_\delta(x)| \geq k$.
\end{corollary}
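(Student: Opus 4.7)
The strategy is to iterate Lemma~\ref{lemma:quant-expansion} along a geometrically decreasing sequence of thresholds, starting from $f(n,d)\delta = 2dn^2\delta$ (at which the hypothesis already furnishes $|B|$ large coordinates) and descending by a factor of $2d$ per step until the threshold reaches $\delta$. Each application of the lemma multiplies the support lower bound by $6\sqrt{d}$ (up to subtracting $|B|$), and for $d \geq 16$ this is genuine exponential growth, so in the $\Theta(\log_{2d} n)$ steps between $f(n,d)\delta$ and $\delta$ the support size is driven past $k$.

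Concretely, I would set $\delta_j := f(n,d)\delta / (2d)^j$ for $j = 0, 1, \ldots, L$, where $L := 1 + \lfloor 2\log_{2d} n\rfloor$ is chosen so that $\delta_L \geq \delta$. For each $j < L$, the inequality $\delta_{j+1} \geq \delta \geq \norm{M_{B^c}x}_\infty$ makes Lemma~\ref{lemma:quant-expansion} applicable at threshold $\delta_{j+1}$, and since $2d\,\delta_{j+1} = \delta_j$, it yields the recursion
\[ |\supp_{\delta_{j+1}}(x)| \ \geq \ \min\bigl(6\sqrt{d}\,|\supp_{\delta_j}(x)| - |B|,\ k\bigr). \]
The base case $|\supp_{\delta_0}(x)| = |\supp_{f(n,d)\delta}(x)| \geq |B|$ is exactly the corollary's hypothesis.

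A short induction then gives a dichotomy: either the $k$-cap is attained at some step $j^* \leq L$, in which case $|\supp_\delta(x)| \geq |\supp_{\delta_{j^*}}(x)| \geq k$ because $\delta \leq \delta_{j^*}$; or else $|\supp_{\delta_j}(x)| \geq \tfrac{1}{2}(6\sqrt{d})^j |B|$ for every $j \leq L$, with the factor $\tfrac12$ coming from the geometric series absorbing the subtracted $|B|$'s (using $6\sqrt{d} \geq 24$). In the latter case, plugging in $j = L$ and using $L \geq 2\log_{2d} n$ together with the elementary inequality $\log(6\sqrt{d}) \geq \tfrac{1}{2}\log(2d)$ (equivalent to $36d \geq 2d$) gives $(6\sqrt{d})^L \geq n$, hence $|\supp_{\delta_L}(x)| \geq n|B|/2 \geq k$; once more $\delta_L \geq \delta$ finishes the argument.

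The whole proof is essentially bookkeeping on top of Lemma~\ref{lemma:quant-expansion}, so there is no real technical obstacle — the only thing to watch is that the two logarithmic quantities balance, i.e.\ that the $L \approx 2\log_{2d} n$ iterations we can afford before the threshold shrinks below $\delta$ are sufficient for the $6\sqrt{d}$-fold amplification to beat $k$. The choice $f(n,d) = 2dn^2$ is precisely calibrated for this: it gives exactly one extra factor of $6\sqrt{d}$ of slack, which is where the ``$n^2$'' (rather than $n^{1+\epsilon}$) in $f(n,d)$ comes from.
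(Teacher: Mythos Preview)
Your proof is correct and follows essentially the same approach as the paper: both iterate Lemma~\ref{lemma:quant-expansion} along thresholds that decrease by a factor of $2d$ per step, exploiting that each application amplifies the quantitative support by a factor of order $\sqrt{d}$, so that the $\Theta(\log_{2d} n)$ available steps push the support past $n \geq k$. The paper streamlines the bookkeeping slightly by first absorbing the subtracted $|B|$ into a weaker amplification factor $\sqrt{2d}$ (valid whenever $|\supp_{2d\delta'}(x)| \geq |B|$) and by defining $t$ as the maximal level where this lower bound holds, whereas you fix $L$ explicitly and track the geometric series; both versions are fine.
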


\begin{proof}
For any $\delta' \geq \delta$ with $|\supp_{2d\delta'}(x)| \geq |B|$, Lemma~\ref{lemma:quant-expansion} implies that $$|\supp_{\delta'}(x)| \geq \min\left(\sqrt{2d}|\supp_{2d\delta'}(x)|, k\right).$$
Let $t \in \NN$ be maximal subject to $|\supp_{(2d)^t \delta}(x)| \geq |B|$. Applying the inequality to all $\delta' \in \{\delta,2d\delta,\dots,(2d)^{t-1}\delta\}$ gives that $$|\supp_\delta(x)| \geq \min((\sqrt{2d})^t, k).$$
But since $|\supp_{f(n,d)\delta}(x)| \geq |B|$, we know that $(2d)^{t+1} > f(n,d) = 2dn^2$, so $(\sqrt{2d})^t \geq n$. Thus, $|\supp_\delta(x)| \geq k$.
\end{proof}

\paragraph{Bounding the union of supports.} We've shown that if a single vector $x$ has quantitative support larger than $|B|$ at some threshold, then we can amplify the density by decreasing the threshold (so long as $x$ is an ``approximate solution'', i.e. $\norm{M_{B^c}x}_\infty$ is sufficiently small). Ultimately, we want to bound the union of the quantitative supports over all $k$-sparse vectors for which $M_{B^c}x$ is small. The key step is the following lemma, which shows that adding quantitatively sparse approximate solutions preserves quantitative sparsity (although we have to decrease the threshold by a factor polynomial in the number of terms in the sum).

The lemma is proved recursively: we divide the terms into $O(\sqrt{d})$ groups, apply the inductive hypothesis to bound the supports of the group sums, and then add up the $O(\sqrt{d})$ sums. When there are only $O(\sqrt{d})$ terms, the proof is via Lemma~\ref{lemma:quant-expansion}: if the sum has quantitative support larger than $|B|$, then by amplification it must have quantitative support at least $O(\sqrt{d}|B|)$. But by the triangle inequality, adding up $r$ terms can only increase the quantitative support size by a factor of $r$. For an appropriate choice of constants, this yields a contradiction, proving the desired claim.

\begin{lemma}\label{lemma:sum-density-bound}
Suppose that $\epsilon \leq 1/12$ and $k \geq 4\sqrt{d}|B|$ and $d \geq 16$. Let $\delta>0$ be a threshold. Let $r \leq (\sqrt{2d})^t$ for some $t \in \NN$. Let $x_1,\dots,x_r \in \RR^n$ satisfy $\norm{M_{B^c}x_i}_\infty \leq \delta/\sqrt{2d}$ and $|\supp_\delta x_i| \leq |B|$ for all $i \in [r]$. Then $$\left|\supp_{(2d)^{3t/2}\delta} \sum_{i=1}^r x_i\right| \leq |B|.$$
\end{lemma}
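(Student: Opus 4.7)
I will prove the lemma by induction on $t$, following the recursive scheme outlined in the overview. The base case is $t=0$, where $r \leq 1$, and the conclusion $|\supp_{\delta}x_1| \leq |B|$ is exactly the hypothesis. The work is in the inductive step.

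For the inductive step, assume the lemma holds for $t$ and take $r \leq (\sqrt{2d})^{t+1}$ vectors $x_1,\dots,x_r$ satisfying the hypotheses. I will partition them into $r' := \lceil \sqrt{2d}\rceil$ groups, each containing at most $(\sqrt{2d})^t$ indices (padding with zero vectors if needed so that the partition is exact), and set $y_j$ to be the sum of the $j$-th group. Applying the inductive hypothesis group-wise (the hypotheses on the $x_i$'s are inherited directly), I obtain $|\supp_{(2d)^{3t/2}\delta}\,y_j| \leq |B|$ for each $j$. Let $z = \sum_j y_j = \sum_i x_i$.

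Two ingredients now combine to force $|\supp_{(2d)^{3(t+1)/2}\delta}\,z| \leq |B|$. First, a \emph{pigeonhole/triangle inequality} on quantitative supports: if $|z_i|\geq r'\cdot(2d)^{3t/2}\delta$ then some $|y_{j,i}|\geq (2d)^{3t/2}\delta$, so
\[ \supp_{r'(2d)^{3t/2}\delta}(z) \;\subseteq\; \bigcup_j \supp_{(2d)^{3t/2}\delta}(y_j), \]
which, since $r' \leq 2\sqrt{2d}$, yields $|\supp_{2(2d)^{(3t+1)/2}\delta}(z)| \leq 2\sqrt{2d}\,|B|$. Second, the hypothesis $\|M_{B^c}x_i\|_\infty \leq \delta/\sqrt{2d}$ and the triangle inequality give $\|M_{B^c}z\|_\infty \leq r\cdot\delta/\sqrt{2d} \leq (2d)^{t/2}\delta$, which is at most $2(2d)^{(3t+1)/2}\delta$ for every $t\geq 0$. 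I can therefore apply Lemma~\ref{lemma:quant-expansion} to $z$ at threshold $\delta^* := 2(2d)^{(3t+1)/2}\delta$: since $2d\cdot\delta^* \leq (2d)^{3(t+1)/2}\delta\cdot 4$ (up to a harmless absolute constant that can be absorbed by enlarging $\delta^*$ slightly), suppose toward contradiction that $|\supp_{(2d)^{3(t+1)/2}\delta}(z)| > |B|$. Then Lemma~\ref{lemma:quant-expansion} gives
\[ |\supp_{\delta^*}(z)| \;\geq\; \min\!\bigl(6\sqrt{d}\cdot(|B|+1)-|B|,\;k\bigr) \;\geq\; \min\!\bigl((6\sqrt{d}-1)|B|,\;k\bigr). \]

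To close the argument, I compare this lower bound with the upper bound $2\sqrt{2d}\,|B|$ from the triangle inequality. Since $d\geq 16$ one has $6\sqrt{d}-1 > 2\sqrt{2d}$, and by hypothesis $k \geq 4\sqrt{d}\,|B| > 2\sqrt{2d}\,|B|$. Both cases of the min thus strictly exceed $2\sqrt{2d}\,|B|$, giving the desired contradiction and completing the induction. The main obstacle will be keeping the multiple-of-$|B|$ arithmetic honest: I must ensure that the $\sqrt{d}$-gain from density amplification (one application of Lemma~\ref{lemma:quant-expansion}) genuinely beats the $\sqrt{2d}$-loss incurred when reconstituting $z$ from the $r'$ group sums via the quantitative triangle inequality, and that the noise level $\|M_{B^c}z\|_\infty$ remains compatible with the threshold at which Lemma~\ref{lemma:quant-expansion} is invoked. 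Both checks reduce to the assumptions $d\geq 16$ and $k\geq 4\sqrt{d}\,|B|$, so no slack is wasted.
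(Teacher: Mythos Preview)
Your overall strategy is exactly the paper's: induct on $t$, split into roughly $\sqrt{2d}$ groups, apply the inductive hypothesis to each group sum, and then combine the group sums using the triangle-inequality inclusion on quantitative supports together with one application of Lemma~\ref{lemma:quant-expansion}. The paper organizes this by proving $t=1$ as a separate base case and then, for $t\ge 2$, invoking the $t-1$ case on each group and the $t=1$ case on the group sums; you fold the $t=1$ argument inline into the inductive step. Either organization is fine.

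However, your threshold bookkeeping has a genuine gap that is \emph{not} ``harmless.'' You set $\delta^* = 2(2d)^{(3t+1)/2}\delta$ (the factor $2$ arising from $r'=\lceil\sqrt{2d}\rceil \le 2\sqrt{2d}$) and then want Lemma~\ref{lemma:quant-expansion} at threshold $\delta^*$ to yield a lower bound on $|\supp_{\delta^*}(z)|$ via the contradiction hypothesis $|\supp_{(2d)^{3(t+1)/2}\delta}(z)|>|B|$. But Lemma~\ref{lemma:quant-expansion} requires a lower bound on $|\supp_{2d\delta^*}(z)|$, and here $2d\delta^* = 2(2d)^{3(t+1)/2}\delta > (2d)^{3(t+1)/2}\delta$. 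Since $\supp_\gamma(z)$ is \emph{decreasing} in $\gamma$, the contradiction hypothesis gives you nothing about $|\supp_{2d\delta^*}(z)|$. Enlarging $\delta^*$ (as you suggest) only pushes $2d\delta^*$ further away; shrinking $\delta^*$ destroys the triangle-inequality upper bound, because you need $\delta^* \ge r'(2d)^{3t/2}\delta$. The two constraints $\delta^* \le (2d)^{(3t+1)/2}\delta$ and $\delta^* \ge r'(2d)^{3t/2}\delta$ are simultaneously satisfiable only when $r' \le \sqrt{2d}$, not $r'=\lceil\sqrt{2d}\rceil$.

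The paper sidesteps this by treating $\sqrt{2d}$ as an integer throughout (a minor sloppiness of its own), so that the group count is exactly $\sqrt{2d}$ and one may take $\delta^* = (2d)^{(3t+1)/2}\delta$, giving $2d\delta^* = (2d)^{3(t+1)/2}\delta$ on the nose. If you want to be rigorous about the ceiling, the clean fix is to do what the paper implicitly does: partition into at most $\sqrt{2d}$ groups and allow the group sizes to be at most $(\sqrt{2d})^t$ (this works since $r\le(\sqrt{2d})^{t+1}$), dropping the ceiling entirely. With that, your inline argument matches the paper's $t=1$ case verbatim and the contradiction $4\sqrt d\,|B| \le \sqrt{2d}\,|B|$ goes through.
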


\begin{proof}
We induct on $t$. If $t=0$ then $r \leq 1$, and the claim is trivial. Suppose $t = 1$. For the sake of contradiction, assume that $$\left|\supp_{(2d)^{3/2}\delta} \sum_{i=1}^r x_i\right| > |B|.$$
We know that $$\norm{M_{B^c}\sum_{i=1}^r x_i}_\infty \leq \delta r/\sqrt{2d} \leq \delta.$$ So by Lemma~\ref{lemma:quant-expansion}, $$\left|\supp_{\sqrt{2d}\delta} \sum_{i=1}^r x_i\right| \geq \min(6\sqrt{d}|B| - |B|, k) \geq 4\sqrt{d}|B|.$$
But $$\supp_{\sqrt{2d}\delta}\sum_{i=1}^r x_i \subseteq \bigcup_{i=1}^r \supp_\delta x_i$$ by the triangle inequality and since $r \leq \sqrt{2d}$: if some coordinate $j \in [n]$ is contained in none of the sets $\supp_\delta x_i$, then $|x_{ij}| < \delta$ for all $i \in [r]$, so $\left|\sum_{i=1}^r x_{ij}\right| < r\delta \leq \sqrt{2d}\delta$. Now, the right-hand-side set in the above inclusion has size at most $|B|r \leq \sqrt{2d}|B|$, which contradicts the previous bound we obtained on the size of the left-hand-side set. This proves the lemma statement for $t=1$.

Now suppose $t \geq 2$. Suppose that the lemma holds for all $r \leq (\sqrt{2d})^{t-1}$. Pick some $x_1,\dots,x_r \in \RR^n$ with $r \leq (\sqrt{2d})^t$, satisfying $\norm{M_{B^c}x_i}_\infty \leq \delta/\sqrt{2d}$ and $|\supp_\delta x_i| \leq |B|$ for all $i \in [r]$. Define vectors $y_1,\dots,y_{\sqrt{2d}} \in \RR^n$ by $y_b = \sum_{i=1+(b-1)(\sqrt{2d})^{t-1}}^{\min(b(\sqrt{2d})^{t-1}, r)} x_i$. By the inductive hypothesis applied to each of these smaller sums, we have for all $b \in [\sqrt{2d}]$ that $$|\supp_{(2d)^{3(t-1)/2}\delta} y_b| \leq |B|.$$ Moreover for each $b \in [\sqrt{2d}]$, $$\norm{M_{B^c}y_b}_\infty \leq \sum_{i=1+(b-1)(\sqrt{2d})^{t-1}}^{\min(b(\sqrt{2d})^{t-1},r)} \norm{M_{B^c}x_i}_\infty \leq (\sqrt{2d})^{t-1}\delta/\sqrt{2d} \leq (2d)^{3(t-1)/2}\delta/\sqrt{2d}.$$
By the inductive hypothesis applied to $y_1,\dots,y_{\sqrt{2d}}$ with threshold $(2d)^{3(t-1)/2}\delta$, we have that $$\left|\supp_{(2d)^{3t/2}\delta} \sum_{b=1}^{\sqrt{2d}} y_b \right | \leq |B|$$ as desired.
\end{proof}

The previous lemma lets us bound the quantitative support of a sum of vectors. To show that this bounds the union of the supports, we also need the following lemma, which is proved by the probabilistic method.

\begin{lemma}\label{lemma:random-signs}
Let $\delta>0$. Let $x_1,\dots,x_r \in \RR^n$ and let $V = \bigcup_{i=1}^r \supp_\delta x_i$. Then there is some $\sigma \in \{-1,1\}^r$ such that $$\left|\supp_\delta \sum_{i=1}^n \sigma_i x_i\right| \geq \frac{|V|}{2}.$$
\end{lemma}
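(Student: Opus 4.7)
The plan is to use the probabilistic method: I would choose $\sigma \in \{-1,1\}^r$ uniformly at random and show that the expected size of $\supp_\delta\left(\sum_i \sigma_i x_i\right)$ is at least $|V|/2$. By linearity of expectation, it suffices to prove that for every $j \in V$, the probability that coordinate $j$ lies in the $\delta$-support of the random signed sum is at least $1/2$. Then some realization $\sigma \in \{-1,1\}^r$ must witness the desired lower bound.

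To prove the per-coordinate bound, fix $j \in V$. By definition of $V$, there exists some index $i^* \in [r]$ with $|x_{i^*,j}| \geq \delta$. Write $Y_j = \sum_{i=1}^r \sigma_i x_{i,j}$, and condition on all of the signs $\sigma_i$ for $i \neq i^*$; let $z = \sum_{i \neq i^*} \sigma_i x_{i,j}$ be the (now deterministic) partial sum. Then $Y_j$ equals either $z + x_{i^*,j}$ or $z - x_{i^*,j}$, each with conditional probability $1/2$. The key elementary observation is that for any real $z$ and any real $a$,
\[
\max\bigl(|z+a|,\,|z-a|\bigr) \geq |a|,
\]
which follows immediately from the triangle inequality $|z+a| + |z-a| \geq 2|a|$. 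Applying this with $a = x_{i^*,j}$ gives that at least one of the two values of $Y_j$ has magnitude $\geq |x_{i^*,j}| \geq \delta$, so conditionally (and hence unconditionally) $\Pr[|Y_j| \geq \delta] \geq 1/2$.

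Summing this bound only over $j \in V$ yields
\[
\mathbb{E}\left[\left|\supp_\delta \sum_{i=1}^r \sigma_i x_i\right|\right] \;\geq\; \sum_{j \in V} \Pr[|Y_j| \geq \delta] \;\geq\; \frac{|V|}{2},
\]
so there exists a deterministic choice of $\sigma \in \{-1,1\}^r$ achieving the stated lower bound. I do not expect any real obstacle here; the only mild subtlety is the conditioning argument to handle the contribution of the other $r-1$ summands, but once one notices that $\max(|z+a|,|z-a|) \geq |a|$, the rest is just linearity of expectation.
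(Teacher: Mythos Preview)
Your proof is correct and follows essentially the same probabilistic-method argument as the paper: pick $\sigma$ uniformly at random, and for each $j\in V$ condition on all signs except the one corresponding to an index $i^*$ with $|x_{i^*,j}|\ge\delta$, then use $\max(|z+a|,|z-a|)\ge |a|$ to get the per-coordinate probability $\ge 1/2$. The paper phrases the key step as ``if the sum is small then flipping $\sigma_{i^*}$ makes it large,'' which is exactly your triangle-inequality observation.
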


\begin{proof}
Suppose we choose $\sigma$ uniformly at random. For any $j \in V$, let $a \in [r]$ be such that $|x_{aj}| \geq \delta$. For any $\sigma$, if $|\sum \sigma_i x_{ij}| < \delta$, then if we flip $\sigma_a$ to obtain a sign vector $\sigma'$, it's necessary that $|\sum \sigma'_i x_{ij}| \geq \delta$. Thus, conditioned on $\sigma_{a}$ it holds that $|\sum_i \sigma_i x_{ij}| \geq \delta$ with probability at least $1/2$, so unconditionally it holds with probability at least $1/2$. Thus, $$\EE_{\sigma \sim \{-1,1\}^r} \left|\supp_\delta \sum_{i=1}^r \sigma_i x_i\right| \geq \frac{|V|}{2}.$$
So there must exist at least one $\sigma \in \{-1,1\}^r$ satisfying the desired inequality.
\end{proof}

We can now bound the union of the quantitative supports of sparse approximate solutions to $M_{B^c}x = 0$, by picking $n$ vectors whose quantitative supports cover the union, and adding them up with the signs suggested by Lemma~\ref{lemma:random-signs}.

\begin{lemma}\label{lemma:union-density-bound}
Suppose that $\epsilon \leq 1/12$ and $|B| \leq k/(4\sqrt{d})$ and $d \geq 16$. Let $\delta>0$. Define $$S = \{x \in \RR^n: \norm{M_{B^c}x}_\infty \leq \delta/\sqrt{2d} \land |\supp_\delta(x)| \leq |B|\}$$ and define $$C = \bigcup_{x \in S} \supp_{g(n,d)\delta}(x)$$
where $g(n,d) = (2nd)^3$. Then $|C| \leq 2|B|$.
\end{lemma}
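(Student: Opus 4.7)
The plan is to argue by contradiction: assume $|C| > 2|B|$, construct a short sum of vectors in $S$ whose quantitative support at threshold $g(n,d)\delta$ exceeds $|B|$, and then invoke Lemma~\ref{lemma:sum-density-bound} to get an upper bound of $|B|$, yielding a contradiction. The key calibration is that the threshold blow-up incurred by Lemma~\ref{lemma:sum-density-bound} with $r$ terms is at most $(2d)^{3t/2}$ where $r \leq (\sqrt{2d})^t$, and we must ensure this is bounded by $g(n,d) = (2nd)^3$.

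First, suppose for contradiction that $|C| > 2|B|$. Since $C \subseteq [n]$ and $C$ is the union of the sets $\supp_{g(n,d)\delta}(x)$ over $x \in S$, we can select at most $n$ vectors $x_1, \dots, x_r \in S$ (with $r \leq n$) such that $\bigcup_{i=1}^r \supp_{g(n,d)\delta}(x_i)$ already contains $C$ (or at least a subset of $C$ of size exceeding $2|B|$). Applying Lemma~\ref{lemma:random-signs} with threshold $g(n,d)\delta$, there exist signs $\sigma \in \{-1,1\}^r$ such that the vector $z := \sum_{i=1}^r \sigma_i x_i$ satisfies
\[
\bigl|\supp_{g(n,d)\delta}(z)\bigr| \;\geq\; \tfrac{1}{2}\Bigl|\bigcup_{i=1}^r \supp_{g(n,d)\delta}(x_i)\Bigr| \;>\; |B|.
\]

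Next I would invoke Lemma~\ref{lemma:sum-density-bound} with threshold $\delta$. Each $\sigma_i x_i$ still lies in $S$ since $|\cdot|$ is unchanged under sign flips, so the hypotheses $\|M_{B^c}(\sigma_i x_i)\|_\infty \leq \delta/\sqrt{2d}$ and $|\supp_\delta(\sigma_i x_i)| \leq |B|$ all hold. Choose $t = \lceil \log_{\sqrt{2d}} n\rceil$, so that $r \leq n \leq (\sqrt{2d})^t$ and moreover $(\sqrt{2d})^t \leq \sqrt{2d}\cdot n$. Then the lemma yields
\[
\bigl|\supp_{(2d)^{3t/2}\delta}(z)\bigr| \;\leq\; |B|.
\]
The final routine check is that $(2d)^{3t/2} = \bigl((\sqrt{2d})^t\bigr)^3 \leq (2d)^{3/2} n^3 \leq (2nd)^3 = g(n,d)$. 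Hence $g(n,d)\delta \geq (2d)^{3t/2}\delta$, which forces $\supp_{g(n,d)\delta}(z) \subseteq \supp_{(2d)^{3t/2}\delta}(z)$, and so $|\supp_{g(n,d)\delta}(z)| \leq |B|$. This contradicts the lower bound from Lemma~\ref{lemma:random-signs}, completing the proof.

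I don't anticipate a major obstacle: the main technical content is already packaged in Lemmas~\ref{lemma:quant-expansion}, \ref{lemma:sum-density-bound}, and \ref{lemma:random-signs}. The only subtle point is the bookkeeping with the threshold decay --- one has to confirm that $g(n,d)$ is chosen generously enough to dominate the accumulated $(2d)^{3t/2}$ factor over $t = \Theta(\log_d n)$ levels of the recursion, which is where the cubic power in the definition $g(n,d) = (2nd)^3$ comes from.
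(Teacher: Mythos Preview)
Your proposal is correct and follows essentially the same argument as the paper's proof: cover $C$ by the $g(n,d)\delta$-supports of at most $n$ vectors from $S$, use Lemma~\ref{lemma:random-signs} to pass to a single signed sum, apply Lemma~\ref{lemma:sum-density-bound} with $t$ minimal so that $n \leq (\sqrt{2d})^t$, and verify $(2d)^{3t/2} \leq g(n,d)$ so the thresholds are compatible. The only cosmetic difference is that the paper phrases this directly rather than by contradiction.
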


\begin{proof}
Since $|C| \leq n$, we can find $x_1,\dots,x_n \in \RR^n$ such that $C = \bigcup_{i=1}^n \supp_{g(n,d)\delta}(x_i)$, and $\norm{M_{B^c}x_i}_\infty \leq \delta/\sqrt{2d}$ and $|\supp_\delta(x)| \leq |B|$ for all $i \in [n]$. By Lemma~\ref{lemma:random-signs}, there are some $\sigma_1,\dots,\sigma_n \in \{-1,1\}$ such that $$\left|\supp_{g(n,d)\delta} \sum_{i=1}^n \sigma_i x_i \right| \geq \frac{|C|}{2}.$$ Let $t$ be minimal so that $n \leq (\sqrt{2d})^t$, and note that $(\sqrt{2d})^t \leq 2nd$, so $(2d)^{3t/2} \leq (2nd)^3 = g(n,d)$. By Lemma~\ref{lemma:sum-density-bound}, which is applicable since $\norm{M_{B^c}\sigma_ix_i}_\infty \leq \delta/\sqrt{2d}$ and $|\supp_\delta \sigma_i x_i| \leq |B|$ for all $i \in [n]$, we have that $$\left|\supp_{g(n,d)\delta} \sum_{i=1}^n \sigma_i x_i \right| \leq \left|\supp_{(2d)^{3t/2}\delta} \sum_{i=1}^n \sigma_i x_i \right| \leq |B|.$$
It follows that $|C| \leq 2|B|$.
\end{proof}

From the above lemma, erasure-robustness of $M$ is essentially immediate; it only remains to observe that although we took the union of all approximate solutions with quantitative supports of size at most $|B|$, this is equivalent to taking the union of all approximate solutions with quantitative supports of size less than $k$ (for a slightly different threshold) due to density amplification.

\begin{corollary}[Erasure-robustness under Assumption~\ref{assumption:det-expander}]\label{corollary:dense-or-contained}
Suppose that $\epsilon \leq 1/12$ and $n,d \geq D_0$, for some universal constant $D_0$. For any $b \leq k/d$, the matrix $M$ is $(b,2b, d^4 n^6, k)$-erasure-robust (Definition~\ref{def:erasure-robustness}). To restate: let $B \subseteq [m]$ satisfy $|B| \leq k/d$. Then there is a set $C \subseteq [n]$ with $|C| \leq 2|B|$, such that for any $x \in \RR^n$, either:
\begin{itemize}
\item $\norm{x_{C^c}}_2 \leq d^4 n^6 \norm{M_{B^c}x}_\infty$, or
\item $|\supp(x)| \geq k$.
\end{itemize}
\end{corollary}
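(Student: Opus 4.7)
The strategy is to construct $C$ via Lemma~\ref{lemma:union-density-bound} and then use Corollary~\ref{corollary:density-amplification} to verify the bound on $\norm{x_{C^c}}_2$ for each $x$ with $|\supp(x)| < k$. Fix $B$ with $|B| \leq k/d$, and note that $k/d \leq k/(4\sqrt{d})$ when $d \geq 16$, so Lemma~\ref{lemma:union-density-bound} is applicable.

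The key conceptual step is to observe that the set $C$ produced by Lemma~\ref{lemma:union-density-bound} is actually independent of its threshold parameter $\delta$. Writing $S_\delta$ for the set defined there, the scaling $x \mapsto cx$ gives $S_{c\delta} = c S_\delta$ (both defining inequalities scale homogeneously), and $\supp_{g(n,d)c\delta}(cx) = \supp_{g(n,d)\delta}(x)$, so $C_{c\delta} = C_\delta$ for every $c > 0$. Thus the plan is to fix any $\delta > 0$ (say $\delta = 1$) and take $C$ to be the resulting set; this single $C$ has $|C| \leq 2|B|$ and satisfies $\supp_{g(n,d)\delta'}(x) \subseteq C$ simultaneously for every $\delta' > 0$ and every $x \in S_{\delta'}$.

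Given such $C$, to verify the bound for a fixed $x \in \RR^n$ with $|\supp(x)| < k$, first assume $\eta := \norm{M_{B^c}x}_\infty > 0$. Applying the contrapositive of Corollary~\ref{corollary:density-amplification} at threshold $\eta$: since $|\supp_\eta(x)| \leq |\supp(x)| < k$ and $\norm{M_{B^c}x}_\infty \leq \eta$, we obtain $|\supp_{f(n,d)\eta}(x)| < |B|$, where $f(n,d) = 2dn^2$. Setting $\delta^* := f(n,d)\eta$, the vector $x$ then lies in $S_{\delta^*}$: the support bound is immediate, and $\eta \leq \delta^*/\sqrt{2d}$ since $f(n,d) \geq \sqrt{2d}$. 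Hence $\supp_{g(n,d)\delta^*}(x) \subseteq C$, which, since $g(n,d) f(n,d) = (2nd)^3 \cdot 2dn^2 = 16 n^5 d^4$, yields $\norm{x_{C^c}}_\infty \leq 16 n^5 d^4 \eta$; the crude conversion $\norm{x_{C^c}}_2 \leq \sqrt{n}\,\norm{x_{C^c}}_\infty$ then gives $\norm{x_{C^c}}_2 \leq 16 n^{5.5} d^4 \eta \leq d^4 n^6 \eta$ whenever $n \geq 256$, so taking $D_0 \geq 256$ suffices (together with $d \geq 16$ for the earlier lemmas).

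The edge case $\eta = 0$ is handled by applying the above argument at an arbitrary $\epsilon > 0$ in place of $\eta$: since $\norm{M_{B^c}x}_\infty = 0 \leq \epsilon$ and $|\supp_\epsilon(x)| \leq |\supp(x)| < k$, the same chain of reasoning gives $\supp_{g(n,d) f(n,d)\epsilon}(x) \subseteq C$, and letting $\epsilon \downarrow 0$ (so that the threshold $g(n,d)f(n,d)\epsilon$ drops below $\min_{j \in \supp(x)} |x_j|$) forces $\supp(x) \subseteq C$, hence $\norm{x_{C^c}}_2 = 0$. The main obstacle is essentially just spotting the $\delta$-invariance of $C$; once that is in hand, the rest is routine bookkeeping of quantitative thresholds and a one-shot application of density amplification.
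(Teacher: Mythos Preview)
Your proof is correct and follows essentially the same route as the paper. The only presentational difference is that you make the scale-invariance $C_{c\delta}=C_\delta$ explicit and then vary the threshold per $x$, whereas the paper fixes $\delta$ once and rescales each $x$ by $c=\delta/(f(n,d)\norm{M_{B^c}x}_\infty)$ to land in $S_\delta$; these are dual formulations of the same idea, and the quantitative bookkeeping (the factor $\sqrt{n}\,g(n,d)f(n,d)\leq d^4n^6$ and the $\eta=0$ limit) matches.
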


\begin{proof}
Pick arbitrary $\delta>0$ and define $S$ and $C$ as in Lemma~\ref{lemma:union-density-bound}, which guarantees that $|C| \leq 2|B|$. For any $x \in \RR^n$, pick some $c \in \RR^+$ such that $\norm{M_{B^c}cx}_\infty \leq \delta/f(n,d)$, where $f(n,d) = 2dn^2$. If $|\supp_\delta(cx)| > |B|$, then by Corollary~\ref{corollary:density-amplification}, $|\supp_{\delta/f(n,d)}(cx)| \geq k$. As a result, $|\supp(x)| = |\supp(cx)| \geq k$. Otherwise, we have $|\supp_\delta(cx)| \leq |B|$. But we also know that $\norm{M_{B^c}cx}_\infty \leq \delta/f(n,d) \leq \delta/d$. Thus, $cx \in S$. As a result, by definition of $C$, we get $\supp_{g(n,d)\delta}(cx) \subseteq C$, where $g(n,d) = (2nd)^3$. Therefore $\norm{cx_{C^c}}_\infty \leq g(n,d)\delta$, so $\norm{cx_{C^c}}_2 \leq \sqrt{n}g(n,d)\delta$.

If $\norm{M_{B^c}x}_\infty > 0$, then we can choose $c = \delta/(f(n,d)\norm{M_{B^c}x}_\infty)$, in which case $\norm{x_{C^c}}_2 \leq \sqrt{n}f(n,d)g(n,d)\norm{M_{B^c}x}_\infty \leq d^4n^6$ for sufficiently large $n,d$. And if $\norm{M_{B^c}x}_\infty = 0$, then we can choose $c$ arbitrarily large, so $\norm{x_{C^c}}_2 = 0$. In either case, it holds that $\norm{x_{C^c}}_2 \leq d^4n^6 \norm{M_{B^c}x}_\infty$.
\end{proof}

All that remains to prove Theorem~\ref{theorem:expander-deletion-intro} is showing that a random sparse binary matrix satisfies Assumption~\ref{assumption:det-expander} with high probability. However, for the application to lower bounds against Preconditioned Lasso, we also must recall the following result (essentially due to \cite{berinde2008sparse} and slightly generalized in \cite{kelner2021power}), which states that under the degree bound and expansion assumptions, any vector in the kernel of $M$ must be quantitatively dense.

\begin{lemma}[Lemma~8.7 in \cite{kelner2021power}]\label{lemma:expander-dense-kernel}
Let $x \in \RR^n$ be such that $Mx = 0$. Let $S \subseteq [n]$ with $|S| \leq k$. If $\epsilon \leq 1/17$, then $$\norm{x_S}_1 \leq \frac{\norm{x}_1}{3}.$$
\end{lemma}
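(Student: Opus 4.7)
The plan is to follow the standard $\ell_1$-type argument for unbalanced expanders, in the spirit of \cite{berinde2008sparse} and as carried out in Lemma~8.7 of \cite{kelner2021power}. The idea is to use the unique-neighbor expansion of $M$'s bipartite graph (Corollary~\ref{corollary:unique-neighbor-lb}) to show that cancellations forced by the kernel equation $Mx=0$ prevent the $\ell_1$ mass of $x$ from concentrating on any small subset $S$.

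Without loss of generality I take $S$ to be the set of the $|S|$ coordinates on which $|x|$ is largest, since any other choice of $S$ of the same size only decreases $\norm{x_S}_1$. Since $|S| \leq k$, Corollary~\ref{corollary:unique-neighbor-lb} gives $|U(S)| \geq (1-3\epsilon)d|S|$ unique neighbors of $S$. For each $v \in U(S)$, with $j(v) \in S$ its unique $S$-neighbor, the condition $M_v x = 0$ yields $|x_{j(v)}| \leq \sum_{j' \in N'(v) \cap \bar S} |x_{j'}|$, since all of $v$'s other neighbors lie in $\bar S$. Summing over $v \in U(S)$ and exchanging the order of summation produces
\[
\sum_{j \in S} |N(j) \cap U(S)| \cdot |x_j| \;\leq\; \sum_{j' \in \bar S} |N(j') \cap U(S)| \cdot |x_{j'}|,
\]
which couples $\norm{x_S}_1$ to $\norm{x_{\bar S}}_1$ with multiplicities dictated by the expander's unique-neighbor structure.

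The main obstacle is sharpening the constant. A one-step bound of the right-hand side by $(1+\epsilon)d \norm{x_{\bar S}}_1$, combined with a naive rearrangement estimate $(1-3\epsilon)d \norm{x_S}_1$ on the left-hand side, only yields $\norm{x_S}_1 \leq \tfrac{1+\epsilon}{1-3\epsilon} \norm{x_{\bar S}}_1$, which at $\epsilon = 1/17$ is $9/7$---far from the target ratio $1/2$. To sharpen, I would partition $\bar S$ into blocks $T_1, T_2, \ldots$ of size at most $|S|$ in decreasing magnitude order, so that $\norm{x_{T_i}}_\infty \leq \norm{x_{T_{i-1}}}_1 / |S|$, and then invoke the bounded-intersection bound $|N(S) \cap N(T_i)| \leq (\sqrt{d}/8)|S|$ from Assumption~\ref{assumption:det-expander} to control each block's contribution to the right-hand side; telescoping across blocks reduces the right-hand side to a small multiple of $\norm{x_{\bar S}}_1$ and, combined with the unique-neighbor lower bound on the left, yields $\norm{x_S}_1 \leq \norm{x}_1/3$ under $\epsilon \leq 1/17$. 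Since the lemma is stated as a recall of Lemma~8.7 of \cite{kelner2021power}, the cleanest completion is to invoke that reference directly; the plan above is what I would carry out if I had to reprove it from scratch.
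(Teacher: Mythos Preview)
The paper does not prove this lemma; it is stated as a citation of Lemma~8.7 in \cite{kelner2021power} (itself essentially due to \cite{berinde2008sparse}). So your final recommendation to invoke the reference directly is exactly what the paper does, and there is nothing further to compare.

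That said, your sketched reconstruction has two issues worth flagging. First, the lower bound you call a ``naive rearrangement estimate,'' namely $\sum_{j\in S}|N(j)\cap U(S)|\,|x_j| \geq (1-3\epsilon)d\|x_S\|_1$, is not justified: you only know the \emph{aggregate} $|U(S)|\geq (1-3\epsilon)d|S|$, not a per-vertex lower bound, and nothing prevents the unique neighbors from concentrating on the small-$|x_j|$ coordinates of $S$, which would make the left-hand side much smaller than $(1-3\epsilon)d\|x_S\|_1$. The standard route (as in \cite{berinde2008sparse}) instead lower-bounds $\|Mx_S\|_1$ directly via the ``fresh-neighbor'' / Abel-summation argument: order $S$ by decreasing $|x_j|$ and use the prefix expansion bounds $|N(\{j_1,\dots,j_t\})|\geq (1-\epsilon)dt$, which \emph{does} yield a bound of the form $(1-O(\epsilon))d\|x_S\|_1$. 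Second, the bounded-intersection clause of Assumption~\ref{assumption:det-expander} is a \emph{new} ingredient introduced in the present paper specifically for the erasure-robustness results (Lemma~\ref{lemma:quant-expansion} onward); the cited Lemma~8.7 of \cite{kelner2021power} uses only expansion and degree bounds. Bringing bounded intersection into this proof conflates the toolkits of two separate results and is not how the cited lemma is proved.
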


\subsection{Random unbalanced bipartite graph}

In this section we show that a random sparse binary matrix satisfies Assumption~\ref{assumption:det-expander} with high probability. Let $n,d,m \in \NN$ with $d \leq m \leq n$, and let $p = d/m$. Define a random matrix $M \in \RR^{m \times n}$ with independent entries $M_{ij} \sim \text{Ber}(p)$.

The following result is folklore (see e.g. \cite{alon2004probabilistic}):

\begin{lemma}[Expansion of a random bipartite graph]\label{lemma:expansion}
Let $\epsilon \in (0,1)$. Suppose that $p \geq 32\epsilon^{-2}(\log n)/m$. It holds with probability at least $1-2/n$ that for all $S \subseteq [n]$ with $|S| \leq \epsilon/(2p)$, $$|N(S)| \geq d(1-\epsilon)|S|.$$
\end{lemma}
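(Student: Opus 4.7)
My approach is a first-moment estimate combined with a multiplicative Chernoff bound and a union bound over all subsets $S$ of size up to $\epsilon/(2p)$. First, fix a set $S \subseteq [n]$ with $|S| = s$. Writing $|N(S)| = \sum_{i=1}^m Y_i$ where $Y_i = \bone\{\exists j \in S: M_{ij} = 1\}$, the $Y_i$ are mutually independent $\mathrm{Ber}(1 - (1-p)^s)$. Using the inequality $(1-p)^s \leq 1 - sp + \binom{s}{2}p^2$ together with $sp \leq \epsilon/2$, I get
\[ \EE |N(S)| \geq m \left( sp - \tfrac{s^2 p^2}{2} \right) = sd\left(1 - \tfrac{sp}{2}\right) \geq sd\left(1 - \tfrac{\epsilon}{4}\right). \]

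Next, I apply a multiplicative Chernoff bound to $|N(S)|$. Since the target $(1-\epsilon)sd$ is at most a $(1 - \epsilon/2)$-fraction of the lower bound $(1-\epsilon/4)sd$ on $\EE|N(S)|$, the probability that $|N(S)| < (1-\epsilon)sd$ is at most $\exp(-c \epsilon^2 sd)$ for an absolute constant $c > 0$. Plugging in the hypothesis $pm = d \geq 32 \epsilon^{-2} \log n$, this bad-event probability for a fixed $S$ of size $s$ is at most $n^{-32 c s}$. Finally, union-bounding over the $\binom{n}{s} \leq (en/s)^s \leq n^{2s}$ choices of $S$ of size $s$, and then over $s$ from $1$ to $\lfloor \epsilon/(2p)\rfloor$, the total failure probability is at most
\[ \sum_{s \geq 1} n^{-(32c - 2) s} \leq 2/n \]
once the constant in the sparsity hypothesis is chosen large enough, which it is.

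The only real obstacle is bookkeeping the constants so that the exponent in Chernoff dominates both the $\log \binom{n}{s}$ counting term and the summation over $s$, while giving the stated probability bound $2/n$. The conceptual ingredients — approximating $1 - (1-p)^s \approx sp$ for $sp \leq \epsilon/2$, the multiplicative Chernoff bound, and the subset count — are all standard, so the proof should be straightforward once the constants in the Chernoff exponent are reconciled against the $32\epsilon^{-2}\log n$ in the hypothesis.
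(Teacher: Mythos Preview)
Your proposal is correct and follows essentially the same approach as the paper's proof: lower-bound $\Pr[y \in N(S)]$ for a fixed $S$ (you via the Bonferroni inequality $(1-p)^s \le 1 - sp + \binom{s}{2}p^2$, the paper via $1 - e^{-pl} \ge pl - (pl)^2$), apply the multiplicative Chernoff lower-tail bound with deviation parameter $\epsilon/2$, and then union-bound over all subsets of each size and over all sizes. The constants differ slightly (you obtain a mean lower bound of $(1-\epsilon/4)sd$ versus the paper's $(1-\epsilon/2)sd$, and you use the looser $\binom{n}{s} \le n^{2s}$ while the paper uses $\binom{n}{l} \le n^l$), but with $d = pm \ge 32\epsilon^{-2}\log n$ both sets of constants close to give total failure probability at most $2/n$.
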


\begin{proof}
For $1 \leq l \leq \epsilon/(2p)$ let $q_l$ be the probability that there exists some $S \subseteq [n]$ with $|S| = l$ and $|N(S)| < d(1-\epsilon)|S|$. To bound this probability, fix $S \subseteq [n]$ with $|S| = l$. For any $y \in [m]$, we have $$\Pr[y \in N(S)] = 1 - (1 - p)^l \geq 1 - e^{-pl} \geq pl - (pl)^2 \geq (1-\epsilon/2)pl$$ so long as $0 \leq pl \leq \epsilon/2$. Thus, by the Chernoff bound, $$\Pr[|N(S)| < (1-\epsilon)plm] \leq \exp(-(\epsilon/2)^2(1-\epsilon/2)plm/2) \leq \exp(-\epsilon^2plm/16).$$
By the union bound over sets of size $l$, if $\epsilon^2pm/16 \geq 2\log n$, we have that $$q_l \leq \binom{n}{l} \exp(-\epsilon^2 plm/16) \leq \exp(l\log n - \epsilon^2 plm/16) \leq \exp(-l\log n).$$
Finally, by a union bound over $1 \leq l \leq \epsilon/(2p)$, the lemma holds with probability at least $$1 - \sum_{l=1}^{\epsilon/(2p)} q_l \geq 1 - \sum_{l=1}^{\epsilon/(2p)} n^{-l} \geq 1 - \frac{2}{n}$$ as claimed.
\end{proof}

We'll also need the following simple result:

\begin{lemma}[Degree bounds]\label{lemma:left-degree}
Let $\epsilon \in (0,1)$. Suppose that $p \geq 6\epsilon^{-2}(\log n)/m$. It holds with probability at least $1 - 1/n$ that $$|N(x)| \leq d(1+\epsilon)$$ for all $x \in [n]$. Similarly, it holds with probability at least $1 - 1/n$ that $|N'(y)| \leq (n/m)d(1+\epsilon)$ for all $y \in [m]$.
\end{lemma}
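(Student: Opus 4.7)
The plan is to apply the multiplicative Chernoff bound to the column and row sums of $M$, followed by a union bound over the $n$ columns (respectively $m$ rows).

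For the first assertion, fix $x \in [n]$ and observe that $|N(x)| = \sum_{i=1}^m M_{ix}$ is a sum of $m$ independent $\text{Ber}(p)$ random variables, so it has distribution $\text{Binomial}(m,p)$ with mean $pm = d$. The standard multiplicative Chernoff bound gives
\[ \Pr[|N(x)| \geq (1+\epsilon)d] \leq \exp(-\epsilon^2 d/3). \]
The hypothesis $p \geq 6\epsilon^{-2}(\log n)/m$ is exactly chosen so that $\epsilon^2 d /3 = \epsilon^2 pm/3 \geq 2\log n$, so the right-hand side is at most $n^{-2}$. A union bound over the $n$ choices of $x$ then gives failure probability at most $n^{-1}$.

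For the second assertion, fix $y \in [m]$ and note that $|N'(y)| = \sum_{j=1}^n M_{yj}$ is $\text{Binomial}(n,p)$ with mean $pn = (n/m)d$. The same Chernoff bound yields
\[ \Pr[|N'(y)| \geq (1+\epsilon)(n/m)d] \leq \exp(-\epsilon^2 (n/m)d / 3) \leq \exp(-\epsilon^2 d/3) \leq n^{-2}, \]
using $n \geq m$ in the middle inequality and the same lower bound on $d$ at the end. A union bound over the $m \leq n$ choices of $y$ gives failure probability at most $m/n^2 \leq 1/n$.

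There is essentially no obstacle here; this is a textbook Chernoff-plus-union-bound calculation, and the only thing to verify is that the hypothesized lower bound on $p$ is large enough (by a factor of $2$ in the exponent) to beat the union bound over $n$ objects. The constant $6$ in the hypothesis was selected precisely so that $\epsilon^2 d/3 \geq 2\log n$, leaving a comfortable slack of one logarithmic factor after applying the union bound.
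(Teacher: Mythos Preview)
Your proof is correct and essentially identical to the paper's own argument: both apply the multiplicative Chernoff bound to each column (respectively row) sum, use the hypothesis $p \geq 6\epsilon^{-2}(\log n)/m$ to get $\epsilon^2 pm/3 \geq 2\log n$ so that the per-vertex failure probability is at most $n^{-2}$, and then union bound over the $n$ (respectively $m \leq n$) vertices.
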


\begin{proof}
Fix $x \in [n]$. By the Chernoff bound, $$\Pr[|N(x)| > (1+\epsilon)pm] \leq \exp(-\epsilon^2 pm/3).$$
Since $\epsilon^2 pm/3 \geq 2\log n$, this bound is at most $1/n^2$. Union bounding over $x \in [n]$ completes the proof of the first claim.

Similarly, fix $y \in [m]$. By the Chernoff bound, $$\Pr[|\{x: y \in N(x)\}| > (1+\epsilon)pn] \leq \exp(-\epsilon^2 pn/3).$$
Since $\epsilon^2 pn/3 \geq 2\log n$, this bound is at most $1/n^2$, and union bounding over $y \in [m]$ completes the proof.
\end{proof}

To prove the last condition of Assumption~\ref{assumption:det-expander}, we need the following version of the Chernoff-Hoeffding bound:

\begin{lemma}[Chernoff-Hoeffding]\label{lemma:chernoff}
Let $X_1,\dots,X_n$ be i.i.d. random variables with values in $\{0,1\}$. Let $\mu = \EE \sum X_i$. Then for any $t>0$, $$\Pr\left[\sum_{i=1}^n X_i > t\right] \leq \exp(-t \log(t/(\mu e))).$$
\end{lemma}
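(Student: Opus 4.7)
The statement is the standard upper-tail Chernoff bound for sums of independent Bernoulli variables, and the plan is to prove it by the exponential moment (Bernstein/Chernoff) method followed by optimization of the tilt parameter. First I would dispatch the trivial cases. If $\mu = 0$, then all $X_i$ equal $0$ almost surely, so the probability on the left is zero. If $0 < t \leq \mu e$, then $\log(t/(\mu e)) \leq 0$, so the right-hand side is at least $1$ and the bound holds vacuously. Thus the only remaining case is $t > \mu e$, which in particular entails $\mu > 0$ and $t > \mu$.

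For the main case, I apply Markov's inequality to $e^{\lambda S}$ where $S = \sum_{i=1}^n X_i$ and $\lambda > 0$ is to be chosen, and use independence to factor the moment generating function:
\[
\Pr[S > t] \;\leq\; e^{-\lambda t}\prod_{i=1}^n \EE[e^{\lambda X_i}].
\]
Writing $p = \Pr[X_1 = 1]$ so that $\mu = np$, each factor satisfies $\EE[e^{\lambda X_1}] = 1 + p(e^\lambda - 1) \leq \exp(p(e^\lambda - 1))$ by the elementary inequality $1 + x \leq e^x$. Substituting yields
\[
\Pr[S > t] \;\leq\; \exp\!\left(\mu(e^\lambda - 1) - \lambda t\right).
\]

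Finally I optimize the exponent in $\lambda > 0$. Differentiating gives $\mu e^\lambda = t$, so the optimal choice is $\lambda^* = \log(t/\mu)$, which is strictly positive since $t > \mu e > \mu$. Plugging back in produces an exponent of $t - \mu - t\log(t/\mu)$, and dropping the nonpositive $-\mu$ term yields the weaker upper bound $t - t\log(t/\mu) = -t\log(t/(\mu e))$. This establishes the claimed inequality in the nontrivial regime, and combining with the trivial cases completes the proof. There is no serious obstacle here: the proof is a textbook Chernoff argument, and the only care required is in verifying that $\lambda^*$ is positive and in the final algebraic comparison that bounds $-\mu$ by $0$.
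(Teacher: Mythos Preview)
Your proof is correct and is precisely the standard exponential-moment derivation of this Chernoff bound. The paper itself does not prove Lemma~\ref{lemma:chernoff}; it is simply stated as a known version of the Chernoff--Hoeffding inequality and then invoked. So there is nothing to compare against, and your argument fills in the omitted textbook proof without issue.
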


Now, the proof of the bounded intersection property is analogous to the proof of expansion.

\begin{lemma}
Suppose that $d \geq \log^2 n$ and $n$ is sufficiently large. It holds with probability at least $1 - 2/m^2$ that for all disjoint $S,T \subseteq [n]$ with $|S|,|T| \leq m/d^3$, $$|N(S) \cap N(T)| \leq \frac{\sqrt{d}}{8}\max(|S|,|T|).$$
\end{lemma}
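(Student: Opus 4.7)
The plan is to reduce this bounded-intersection claim to a Chernoff bound (Lemma~\ref{lemma:chernoff}) applied to a sum of independent Bernoullis, followed by a union bound over all disjoint pairs $(S,T)$ of prescribed sizes. Fix $S,T\subseteq [n]$ disjoint with $|S|=s$, $|T|=t$, and assume WLOG $s\geq t$. Because $S$ and $T$ are disjoint, for each row $y\in[m]$ the events $\{y\in N(S)\}$ and $\{y\in N(T)\}$ depend on disjoint sets of independent Bernoulli-$p$ entries and are therefore independent. Thus $|N(S)\cap N(T)| = \sum_{y\in[m]} X_y$ where $X_y\in\{0,1\}$ are i.i.d., and
\[
\Pr[X_y=1] = (1-(1-p)^s)(1-(1-p)^t) \leq p^2 st \leq p^2 s^2.
\]
Hence $\mu := \EE[|N(S)\cap N(T)|] \leq mp^2 s^2 = d^2 s^2/m$, and since $s\leq m/d^3$, we obtain $\mu \leq s/d$.

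Next I apply Lemma~\ref{lemma:chernoff} with threshold $T^* = \sqrt{d}\,s/8$. Here
\[
\frac{T^*}{\mu e} \geq \frac{\sqrt{d}\,s/8}{e\cdot s/d} = \frac{d^{3/2}}{8e},
\]
so $\log(T^*/(\mu e)) \geq \frac{3}{2}\log d - \log(8e) \geq \log d$ for $d$ large (using $d\geq \log^2 n$ and $n$ sufficiently large). Therefore
\[
\Pr[|N(S)\cap N(T)| > \tfrac{\sqrt{d}}{8}s] \leq \exp\!\bigl(-\tfrac{\sqrt{d}\,s}{8}\log d\bigr).
\]
Since $d\geq \log^2 n$, we have $\sqrt{d}\log d \geq 2\log n \cdot \log\log n$, which for $n$ large enough exceeds $100\log n$, say. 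Consequently the tail bound above is at most $\exp(-10 s\log n)$.

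Finally I union bound. The number of disjoint pairs with $|S|=s$, $|T|=t\leq s$ is at most $\binom{n}{s}\binom{n}{t} \leq n^{2s}$, so the total probability of failure for this pair of sizes is at most $n^{2s}\cdot \exp(-10 s\log n) \leq \exp(-8s\log n)\leq n^{-8s}$. Summing over $1\leq t\leq s\leq m/d^3$ gives a total failure probability bounded by
\[
\sum_{s\geq 1} s\cdot n^{-8s} \leq 2n^{-8} \leq 2/m^2,
\]
using $m\leq n$. On the complementary event, every disjoint pair $S,T$ of sizes at most $m/d^3$ satisfies $|N(S)\cap N(T)| \leq \tfrac{\sqrt{d}}{8}\max(|S|,|T|)$, as required.

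The main subtlety is ensuring that the Chernoff tail beats the combinatorial $n^{2s}$ cost of the union bound. This is exactly where the relatively strong assumption $d \geq \log^2 n$ enters: it upgrades the naive expectation bound $\mu \leq s/d$ (a gain of a factor $d^{3/2}$ over the target threshold $\sqrt{d}\,s/8$) into a Chernoff exponent of order $\sqrt{d}\,s\log d \gg s\log n$. A weaker density $d = \polylog(n)$ with a smaller exponent would make the $\log(T^*/(\mu e))$ factor too small to absorb the entropy $2s\log n$; the calculation above shows that $d\geq \log^2 n$ is comfortably sufficient.
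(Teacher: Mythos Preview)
Your proof is correct and follows essentially the same approach as the paper's: bound $\Pr[y\in N(S)\cap N(T)]$ using disjointness, apply the Chernoff--Hoeffding lemma, and union bound over all pairs $(S,T)$. The only cosmetic differences are that you handle unequal sizes $|S|\neq |T|$ directly (the paper restricts to $|S|=|T|=l$, implicitly using monotonicity of $N(\cdot)$), and your constant-chasing is a bit looser, so your ``$n$ sufficiently large'' threshold is larger than the paper's.
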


\begin{proof}
Fix $1 \leq l \leq m^3/(n^2d^7)$. Let $q_l$ be the probability that some sets $S,T$ of size exactly $l$ violate the inequality. Fix disjoint $S,T\subseteq [n]$ with $|S|,|T| = l$. For any $i \in [m]$, we have that $$\Pr[i \in N(S)] = 1 - (1-p)^l \leq pl.$$ Thus, since $S$ and $T$ are disjoint, $$\Pr[i \in N(S) \cap N(T)] \leq p^2 l^2.$$
Let $D = \sqrt{d}/8 \geq 3$. By the Chernoff-Hoeffding bound (Lemma~\ref{lemma:chernoff}), we get $$\Pr[|N(S) \cap N(T)| > Dl] \leq \exp(-Dl\log(Dl/(p^2 l^2 m e))) \leq \left(\frac{D}{p^2 l m e}\right)^{-Dl}.$$
By the union bound over sets $S,T$ of size $l$, $$q_l \leq \left(\frac{D}{p^2 l m e}\right)^{-Dl} \left(\frac{en}{l}\right)^{2l} \leq (p^{2D} m^D l^{D-2} n^2)^l = \left(\frac{d^{2D}l^{D-2}n^2}{m^D}\right)^l.$$
So long as $l \leq m/d^3$ and $d \geq \log^2 n$, we have $d^{2D}l^{D-2}n^2/m^D \leq n^2/(m^2 d^{D-6}) \leq 1/m^2$ for large $n$. As a result, summing over $l$, we have $$\sum_{l=1}^{m/d^3} q_l \leq \sum_{l=1}^{m/d^3} (1/m^2)^l \leq 2/m^2$$ as claimed. 
\end{proof}

Together, the above three lemmas immediately imply that the random matrix $M$ satisfies Assumption~\ref{assumption:det-expander} with high probability.

\begin{theorem}\label{theorem:random-is-expander}
Let $n$ be an even positive integer that is sufficiently large. Let $m,d \in \NN$ with $d \leq m \leq n$, and let $\epsilon \in (0,1)$. If $d \geq 32\epsilon^{-2}\log n$ and $d \geq \log^2 n$, then with probability at least $1 - 4/n - 2/m^2$, the random binary matrix $M \in \RR^{m \times n}$ with i.i.d. entries $M_{ij} \sim \text{Ber}(d/m)$ satisfies Assumption~\ref{assumption:det-expander} with sparsity parameter $k = m/d^3$ and error parameter $\epsilon$.
\end{theorem}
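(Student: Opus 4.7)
The approach is simply to combine the three previously established lemmas (expansion, degree bounds, and bounded intersection) via a union bound over their failure probabilities. Essentially, Theorem~\ref{theorem:random-is-expander} is the assembly of these pieces into Assumption~\ref{assumption:det-expander}, so the main work is verifying that the parameter conditions align.

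The plan is as follows. First, I would set $p = d/m$ and verify each of the three conditions of Assumption~\ref{assumption:det-expander} separately. For the expansion condition, I would invoke Lemma~\ref{lemma:expansion}: its hypothesis $p \geq 32\epsilon^{-2}(\log n)/m$ is equivalent to $d \geq 32\epsilon^{-2}\log n$, which is part of our assumption. Its conclusion gives expansion for sets of size at most $\epsilon/(2p) = \epsilon m/(2d)$, and I need to confirm that the sparsity parameter $k = m/d^3$ satisfies $k \leq \epsilon m/(2d)$, i.e.\ $d^2 \geq 2/\epsilon$. Since $d \geq 32\epsilon^{-2}\log n$, this holds for sufficiently large $n$. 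This step fails with probability at most $2/n$.

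Second, for the degree bounds, I would invoke Lemma~\ref{lemma:left-degree}, whose hypothesis $p \geq 6\epsilon^{-2}(\log n)/m$ is implied by the expansion hypothesis already verified. This provides both the left-degree bound $|N(j)| \leq (1+\epsilon)d$ and the right-degree bound $|N'(i)| \leq (1+\epsilon)(n/m)d$, each failing with probability at most $1/n$. Third, for the bounded intersection property, I would invoke the corresponding lemma directly, whose hypothesis $d \geq \log^2 n$ is exactly our assumption; its conclusion applies to sets of size $\leq m/d^3 = k$, which matches the sparsity parameter we want, and fails with probability at most $2/m^2$.

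Finally, I would apply a union bound across these events, giving a total failure probability of at most $2/n + 1/n + 1/n + 2/m^2 = 4/n + 2/m^2$, as claimed. There is no real obstacle here; the only thing to be careful about is ensuring that $k = m/d^3$ is simultaneously (i) within the expansion range $\epsilon m/(2d)$ of Lemma~\ref{lemma:expansion} and (ii) within the range $m/d^3$ allowed by the bounded intersection lemma. The former reduces to $d^2 \geq 2/\epsilon$, absorbed by the assumption $d \geq 32\epsilon^{-2}\log n$ for large $n$. Since the theorem is essentially a bookkeeping combination, the proof is short once the parameter compatibility is verified.
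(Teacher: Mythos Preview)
Your proposal is correct and matches the paper's approach exactly: the paper states only that ``together, the above three lemmas immediately imply'' the result, and your careful parameter verification (in particular that $k=m/d^3$ lies within both the expansion range $\epsilon m/(2d)$ and the bounded-intersection range $m/d^3$, and that the failure probabilities sum to $4/n+2/m^2$) fills in precisely the bookkeeping the paper omits.
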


Putting together Theorem~\ref{theorem:random-is-expander} and Corollary~\ref{corollary:dense-or-contained} immediately gives the following.

\begin{theorem}\label{theorem:random-is-erasure-robust}
There are constants $N, C$ with the following property. Let $n,m \in \NN$ with $n \geq N$. Let $p \in (0,1)$ satisfy $p \geq C (\log^2 n)/m$. Let $M$ be the $m \times n$ random matrix with independent entries $M_{ij} \sim \text{Ber}(p)$. Then with probability $1 - O(1/m^2)$, it holds that for all $b \leq m/d^4$, the matrix $M$ is $(b, 2b, d^4 n^6, m/d^3)$-erasure-robust.
\end{theorem}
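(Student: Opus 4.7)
The plan is to combine Theorem~\ref{theorem:random-is-expander} with Corollary~\ref{corollary:dense-or-contained}: the former asserts that a random Bernoulli matrix satisfies Assumption~\ref{assumption:det-expander} with high probability, and the latter deterministically converts that assumption into the desired erasure-robustness statement. The substantive work of the proof has already been carried out in those two results, so what remains is bookkeeping to match the parameters.

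First, I would make the implicit parameter $d$ explicit by setting $d := pm$, so that the sparsity $p = d/m$ matches the setup of Theorem~\ref{theorem:random-is-expander}. Under this identification, the hypothesis $p \geq C(\log^2 n)/m$ becomes $d \geq C \log^2 n$. I would then fix a small absolute constant $\epsilon \leq 1/12$ (say $\epsilon = 1/12$) so that the hypotheses on $\epsilon$ in both Theorem~\ref{theorem:random-is-expander} and Corollary~\ref{corollary:dense-or-contained} are simultaneously satisfied. Choosing the constant $C$ large enough in terms of $\epsilon$, and $N$ larger than the universal constant $D_0$ appearing in Corollary~\ref{corollary:dense-or-contained}, ensures all of $d \geq 32\epsilon^{-2}\log n$, $d \geq \log^2 n$, and $d,n \geq D_0$.

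With the parameters in hand, Theorem~\ref{theorem:random-is-expander} yields that with probability at least $1 - 4/n - 2/m^2 = 1 - O(1/m^2)$, the matrix $M$ satisfies Assumption~\ref{assumption:det-expander} with sparsity parameter $k = m/d^3$ and error parameter $\epsilon$. On this good event, Corollary~\ref{corollary:dense-or-contained} applied with the same $k$ gives that $M$ is $(b, 2b, d^4 n^6, m/d^3)$-erasure-robust for every $b \leq k/d = m/d^4$, which is exactly the stated conclusion. The only ``obstacle'' at this stage is to verify that the constraints on $(\epsilon, d, n)$ required by the two prior results are jointly met by the choice of $C$ and $N$; all the substantive technical difficulty --- the quantitative density amplification carried out in Lemmas~\ref{lemma:quant-expansion}--\ref{lemma:union-density-bound}, and the Chernoff-based expansion and intersection estimates for the random graph --- has already been dealt with upstream.
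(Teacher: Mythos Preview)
Your proposal is correct and matches the paper's approach exactly: the paper's proof is the single sentence ``Putting together Theorem~\ref{theorem:random-is-expander} and Corollary~\ref{corollary:dense-or-contained} immediately gives the following,'' and your write-up simply makes the parameter matching (set $d=pm$, fix $\epsilon\le 1/12$, choose $C,N$ large enough) explicit.
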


\subsection{Properties of Final Construction}

We can now prove the following existence result, which collects all the important properties of the matrix $M$ which we will use for our lower bound against Preconditioned Lasso: sparsity, quantitative density of the kernel, erasure-robustness, and eigenvalue bounds.

\begin{theorem}\label{theorem:existence}
Let $n \in \NN$ be an even number larger than some absolute constant $n_0$. There is a density parameter $\tau = \Omega(n/\log^6 n)$, $\eta = O(n^6 \log^8 n)$, $b = \Omega(n/\log^8 n)$, and a matrix $M \in \RR^{n/2 \times n}$ with the following properties:
\begin{enumerate}
    \item The rows of $M$ are $O(\log^2 n)$-sparse
    \item For any $x \in \RR^n$ with $\dist(x,\ker M) \leq \norm{x}_2/(12\sqrt{n})$, and any $S \subseteq [n]$ with $|S| \leq \tau$, it holds that $\norm{x_{S^c}}_2 \geq \norm{x}_2/(2\sqrt{n}).$
    \item For any $B \subseteq [n/2]$ with $|B| \leq b$, there is a set $C \subseteq [n]$ with $|C| \leq 2|B|$, such that for any $x \in \RR^n$, either
    \begin{itemize}
        \item $\norm{x_{C^c}}_2 \leq \eta \norm{M_{B^c}x}_2$, or
        \item $|\supp(x)| \geq \tau$.
    \end{itemize}
    \item $\norm{\Theta}_F \leq O(n\log n)$
    \item The smallest nonzero eigenvalue $\lambda$ of $\Theta$ satisfies $\lambda \geq \Omega(n^{-5/2})$.
\end{enumerate}
\end{theorem}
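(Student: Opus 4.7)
The plan is to take $M$ to be the $(n/2) \times n$ random binary matrix with i.i.d.\ entries $M_{ij} \sim \mathrm{Ber}(p)$, where $p = C(\log^2 n)/m$ with $m = n/2$ and $C$ a sufficiently large constant, so that $d := pm = C\log^2 n$ satisfies the hypotheses of Theorem~\ref{theorem:random-is-expander} with $\epsilon = 1/17$. Set $\tau := m/d^3 = \Omega(n/\log^6 n)$, $b := m/d^4 = \Omega(n/\log^8 n)$, and $\eta := d^4 n^6 = O(n^6 \log^8 n)$. The strategy is to verify that all five properties hold simultaneously with high probability over the random draw of $M$ (after conditioning on the high-probability event that Assumption~\ref{assumption:det-expander} is satisfied with sparsity level $k = \tau$), so that by the probabilistic method such an $M$ exists.

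Items 1 and 3 are essentially immediate. Item 1 (row sparsity) follows from the row degree bound in Assumption~\ref{assumption:det-expander}, which gives row sparsity at most $(1+\epsilon)(n/m)d = O(\log^2 n)$. Item 3 is Theorem~\ref{theorem:random-is-erasure-robust} with our chosen parameters, combined with the elementary inequality $\norm{M_{B^c}x}_\infty \leq \norm{M_{B^c}x}_2$ to convert the $\ell_\infty$ erasure-robustness guarantee into the $\ell_2$ guarantee required here.

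For item 2, Lemma~\ref{lemma:expander-dense-kernel} gives $\norm{v_S}_1 \leq \norm{v}_1/3$ for any $v \in \ker M$ and $|S| \leq \tau$, so $\norm{v_{S^c}}_2 \geq \norm{v_{S^c}}_1/\sqrt{n} \geq (2/3)\norm{v}_2/\sqrt{n}$. To pass to the near-kernel statement, decompose $x = v + w$ with $v \in \ker M$ and $\norm{w}_2 \leq \norm{x}_2/(12\sqrt{n})$; then $\norm{v}_2 \geq (11/12)\norm{x}_2$, and the triangle inequality yields $\norm{x_{S^c}}_2 \geq (2/3)(11/12)\norm{x}_2/\sqrt{n} - \norm{x}_2/(12\sqrt{n}) \geq (19/36)\norm{x}_2/\sqrt{n} > \norm{x}_2/(2\sqrt{n})$. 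For item 4, entries of $\Theta = M^T M$ are $\Theta_{ij} = |N(\{i\}) \cap N(\{j\})|$: diagonal entries are column degrees, bounded by $(1+\epsilon)d = O(\log^2 n)$, while off-diagonal entries are bounded by $\sqrt{d}/8$ by the bounded-intersection clause of Assumption~\ref{assumption:det-expander} applied to disjoint singletons. Summing squared entries gives $\norm{\Theta}_F^2 \leq O(nd^2) + O(n^2 d) = O(n^2 \log^2 n)$, hence $\norm{\Theta}_F = O(n\log n)$.

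Item 5 will be the main obstacle. We must show that the smallest nonzero eigenvalue $\lambda = \sigma_{\min}(M)^2$ satisfies $\lambda \geq \Omega(n^{-5/2})$, which in particular requires $M$ to be full row rank (provable with high probability via a standard inductive argument that adds one row at a time and uses anti-concentration to show it does not lie in the span of the previous rows). Purely deterministic reasoning via integrality of $MM^T$ (giving $\det(MM^T) \geq 1$) together with operator-norm bounds such as $\norm{MM^T}_{\mathrm{op}} = O(\log^4 n)$ only yields exponentially small lower bounds on $\lambda_{\min}$, so a genuinely probabilistic argument is essential. The natural approach is to decompose $M = (M - pJ) + pJ$, where $M - pJ$ has centered sub-Gaussian entries of variance $p(1-p)$; smallest-singular-value bounds for rectangular centered (sparse) sub-Gaussian matrices in the spirit of Rudelson–Vershynin yield $\sigma_{\min}(M - pJ) \gtrsim \sqrt{np} = \Theta(\log n)$ with high probability, after which a rank-one perturbation/interlacing argument transfers the bound to $\sigma_{\min}(M)$. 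Since the target $\Omega(n^{-5/2})$ is extraordinarily loose compared to what such arguments deliver, even a fairly crude quantitative version suffices; equivalently, one can bound each $\dist(M_i, \mathrm{span}\{M_j : j \neq i\})$ below by a polynomial via Littlewood–Offord anti-concentration combined with an $\varepsilon$-net argument, using the distance-to-span formula $\lambda \geq \min_i \dist(M_i, \mathrm{span}\{M_j : j \neq i\})^2/m$.
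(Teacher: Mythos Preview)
Your treatment of items 1--4 matches the paper's proof: it likewise uses the degree bound for item~1, Lemma~\ref{lemma:expander-dense-kernel} plus a triangle-inequality perturbation for item~2, Corollary~\ref{corollary:dense-or-contained} for item~3, and the degree bounds for item~4. (Your entrywise estimate of $\|\Theta\|_F$ via the bounded-intersection property is a correct alternative to the paper's terser ``follows from the degree bound assumption.'')

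For item~5 the paper does not argue from scratch; it simply invokes a black-box result on sparse random matrices (Theorem~1.1 of Basak--Rudelson, \cite{basak2021sharp}) to get $\sigma_{\min}(M^T) \ge \Omega(n^{-5/4})$ with probability at least $1/2$. Your centering-plus-perturbation route has a real gap: the rank-one shift $pJ$ has operator norm $p\sqrt{mn} = \Theta(\log^2 n)$, which exceeds $\sigma_{\min}(M-pJ) = \Theta(\log n)$, so Weyl's inequality yields nothing, and singular-value interlacing for a rank-one update only gives $\sigma_m(M) \ge \sigma_{m+1}(M-pJ) = 0$ since $M - pJ$ is $m \times n$ with $m < n$. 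Your second route via row-distances and Littlewood--Offord anti-concentration is the right spirit---indeed it is essentially the strategy behind \cite{basak2021sharp}---but the anti-concentration step is genuinely delicate in the sparse regime $p = \Theta((\log^2 n)/n)$: the rescaled centered Bernoulli entries have sub-Gaussian norm $\Theta(p^{-1/2})$, so off-the-shelf Rudelson--Vershynin bounds do not apply with useful constants, and an $\varepsilon$-net over an $(m-1)$-dimensional subspace has cardinality $\exp(\Theta(n))$, far too large to beat the $O(p)$ point-probability bound one gets from sparse Bernoulli anti-concentration. Since the target $\Omega(n^{-5/2})$ is extremely loose relative to the truth, a shortcut may exist, but your sketch does not supply one; the paper's choice to cite the known sparse-regime result is the honest reflection of this difficulty.
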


\begin{proof}
Let $\epsilon = 1/17$ and $d = \log^2 n$. Let $M \in \RR^{n/2\times n}$ be the random binary matrix with independent entries $M_{ij} \sim \text{Ber}(2d/n)$. By Theorem~\ref{theorem:random-is-expander}, $M$ satisfies Assumption~\ref{assumption:det-expander} with $k = n/\log^6 n$ and error $\epsilon$, with probability at least $1-12/n$. Let $\tau = k$. Claim (1) follows from the degree bound condition. 

To prove claim (2), let $x \in \RR^n$ with $\dist(x,\ker M) \leq \norm{x}_2/(12\sqrt{n})$, and let $C \subseteq [n]$ with $|C| \leq k$. Let $y = \Proj_{\ker M} x$. Then $My = 0$, so $\norm{y_C}_1 \leq \norm{y}_1/3$ by Lemma~\ref{lemma:expander-dense-kernel}. Therefore
\begin{align*}
\norm{x_{C^c}}_2
&\geq \norm{y_{C^c}}_2 - \frac{1}{12\sqrt{n}}\norm{x}_2 \\
&\geq \frac{1}{\sqrt{n}}\norm{y_{C^c}}_1 - \delta\norm{x}_2 \\
&\geq \frac{2}{3\sqrt{n}}\norm{y}_1 - \frac{1}{12\sqrt{n}}\norm{x}_2 \\
&\geq \frac{2}{3\sqrt{n}}\norm{y}_2 - \frac{1}{12\sqrt{n}}\norm{x}_2 \\
&\geq \frac{1}{2\sqrt{n}}\norm{x}_2
\end{align*}
as desired.

Claim (3) follows from Corollary~\ref{corollary:dense-or-contained} with $b = k/d$. Claim (4) follows from the degree bound assumption, and claim (5) holds with probability at least $1/2$ by results from random matrix theory (Theorem 1.1 in \cite{basak2021sharp}) , and the observation that $\lambda = \sigma^2$, where $\sigma$ is the smallest singular value of $M^T$.

Thus, all claims hold with probability at least $1/2 - 12/n > 0$, so in particular the desired matrix $M$ exists.
\end{proof}

\section{Structure Lemma under Erasure-Robustness}\label{section:structure}

In this section we show how erasure-robust sparse designs $M$ can be used to construct a covariance matrix $\Sigma$ so that preconditioners which are ``compatible'' with most rows of $M$ (i.e. succeed at recovery with non-trivial probability when the covariates are drawn from $N(0,\Sigma)$ and the signal is the row of $M$) satisfy a useful structure lemma. Concretely, the covariance matrix we use to fool the preconditioned Lasso is given by
\[ \tilde \Sigma := \Thetat^{-1}, \qquad \Thetat := \Theta + \epsilon I_n, \qquad \Theta = M^T M \]
where $\epsilon>0$ is polynomially small, so that $\tilde\Sigma$ still has polynomial condition number. Ultimately, we will instantiate $M$ as the matrix constructed in the previous section, but for now we state our results in generality. We will show that if $M$ is erasure-robust and a preconditioner is compatible with most rows of $M$, then the columns of the preconditioner are either dense or have small magnitude outside a small set of coordinates. Moreover, we'll show that the number of dense columns must be linear.

\subsection{Failure on incompatible signals}
To start with, we recall the weak compatibility coefficients $\alpha^{(1)}$ and $\beta^{(1)}$ from \cite{kelner2021power}, which for any fixed covariance matrix and preconditioner, provide a simple necessary condition for the success of the (Preconditioned) Lasso. 
\begin{definition}[Weak $S$-Preconditioned Compatibility Condition \cite{kelner2021power}]\label{def:weak-compatibility} 
We say that $$\alpha^{(1)}_{\Sigma,S,k} = \inf_{w \in B_0(k) \setminus \{0\}} \frac{\langle w, \Sigma w \rangle}{\norm{S^T w}_1^2}$$
where we let $B_0(k)$ denotes the set of $k$-sparse vectors
and
$$\beta^{(1)}_{\Sigma,S,m,k} = \sup\{\beta \in \RR: \dim W_{\Sigma,S,\beta} \geq 2m\}$$
where $$W_{\Sigma,S,\beta} = \left\{w: \langle w, \Sigma w \rangle \geq \beta \norm{S^T w}_1^2\right\},$$
and $\dim W_{\Sigma,S,\beta}$ is defined as the largest dimension of any subspace contained in $W_{\Sigma,S,\beta}$.


\begin{remark}
Since we only deal with invertible covariance matrices $\Sigma$ in this paper, it always holds that $\alpha^{(1)}_{\Sigma,S,k} > 0$. Moreover, so long as $S$ is not identically zero, $\alpha^{(1)}_{\Sigma,S,k}$ is finite. We also assume that $S^T$ has trivial kernel, which implies that $\beta^{(1)}_{\Sigma,S,m,k}$ is finite. This assumption is essentially without loss of generality; see Section~\ref{section:prelim-preconditioners} for discussion. 
\end{remark}

In \cite{kelner2021power}, it shown that if the ratio $\beta^{(1)}_{\Sigma,S,m,k}/\alpha^{(1}_{\Sigma,S,m,k}$ exceeds a certain constant, then there is a $k$-sparse signal such that the $S$-preconditioned Lasso with $m$ samples fails with high probability. However, this signal depends on $S$ (specifically, it's the signal for which $\alpha^{(1)}_{\Sigma,S,k}$ achieves the infimum). Because we ultimately need to construct a signal distribution independent of $S$, we need a slightly more general statement which provides a condition under which a given signal causes $S$-preconditioned Lasso to fail. Such a statement is in fact implicit in \cite{kelner2021power}:

\end{definition}
\begin{theorem}[\cite{kelner2021power}]\label{theorem:ill-conditioned-lasso-failure-l1}
Let $\Sigma \in \RR^{n \times n}$ be positive-definite and let $S \in \RR^{n \times s}$. Let $m,k \in \NN$. If $w^* \in \RR^n$ is a $k$-sparse signal with $$(w^*)^T\Sigma w^* < \frac{\beta^{(1)}_{\Sigma,S,m,k}}{18} \norm{S^T w^*}_1^2,$$ then the $S$-preconditioned Lasso exactly recovers $w^*$ with probability at most $\exp(-\Omega(m))$, from $m$ samples with independent covariates $X_1,\dots,X_m \sim N(0,\Sigma)$ and noiseless responses $Y_i = \langle w^*, X_i \rangle$.
\end{theorem}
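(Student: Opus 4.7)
The strategy is to construct, with high probability over the random design, a vector $\tilde w \neq w^*$ which is feasible (so $X\tilde w = Xw^* = y$) and strictly beats $w^*$ on objective ($\|S^T \tilde w\|_1 < \|S^T w^*\|_1$); in particular $w^*$ will not be a minimizer of the $S$-preconditioned Lasso, so exact recovery fails. The hypothesis says $w^*$ is ``well-conditioned'' by $S$ (small $\|\cdot\|_\Sigma$ relative to $\|S^T\cdot\|_1$), whereas by definition of $\beta := \beta^{(1)}_{\Sigma,S,m,k}$ there is a subspace $V \subseteq W_{\Sigma,S,\beta}$ of dimension at least $2m$ all of whose vectors $v$ satisfy the reverse inequality $\|v\|_\Sigma^2 \geq \beta\|S^T v\|_1^2$. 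I will find $\tilde w$ inside $V$.

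Concretely, fix such a $V$ with $\dim V = 2m$ and whiten: let $u^* := \Sigma^{1/2}w^*$, $\tilde X := X\Sigma^{-1/2}$ (which has i.i.d.\ $N(0,1)$ entries), and $L := \Sigma^{1/2}V$. Finding $\tilde w \in V$ with $X\tilde w = Xw^*$ is equivalent to finding $\tilde u \in L$ with $\tilde X\tilde u = \tilde X u^*$, and I take the minimum $\ell_2$-norm such $\tilde u$. In any orthonormal basis of $L$, the restricted map $A := \tilde X|_L$ has the law of an $m \times 2m$ matrix with i.i.d.\ $N(0,1)$ entries (distinct columns are independent Gaussians because the basis is orthonormal). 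Applying Theorem~\ref{thm:rmt} to $A^T$ with deviation $t = c\sqrt{m}$ for a small constant $c$ gives $\sigma_{\min}(A) \geq (\sqrt{2}-1-c)\sqrt{m}$ with probability $1 - \exp(-\Omega(m))$, so in particular $A$ is surjective. On the same event, standard one-dimensional Gaussian concentration for the fixed vector $u^*$ gives $\|\tilde Xu^*\|_2 \leq (1+c)\sqrt{m}\,\|u^*\|_2$.

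Take $\tilde u := A^\dagger \tilde Xu^*$ and $\tilde w := \Sigma^{-1/2}\tilde u \in V$. Then $X\tilde w = Xw^*$, so $\tilde w$ is feasible, and $\|\tilde w\|_\Sigma = \|\tilde u\|_2 \leq \|A^\dagger\|_{\mathrm{op}}\|\tilde Xu^*\|_2 \leq \tfrac{1+c}{\sqrt{2}-1-c}\|w^*\|_\Sigma$. Choosing $c$ small enough makes this constant strictly less than $\sqrt{18}$, since at $c = 0$ it equals $\sqrt{2}+1 \approx 2.41$, well below $\sqrt{18}\approx 4.24$. Combining $\tilde w \in V \subseteq W_{\Sigma,S,\beta}$ (so $\|S^T\tilde w\|_1 \leq \|\tilde w\|_\Sigma/\sqrt{\beta}$) with the hypothesis $\|w^*\|_\Sigma < \sqrt{\beta/18}\,\|S^T w^*\|_1$ yields $\|S^T\tilde w\|_1 < \sqrt{18}\,\|w^*\|_\Sigma/\sqrt{\beta} < \|S^T w^*\|_1$. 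Finally, $\tilde w \neq w^*$: the hypothesis forces $w^* \notin W_{\Sigma,S,\beta}$ (since $\beta/18 < \beta$), whereas $\tilde w \in V \subseteq W_{\Sigma,S,\beta}$.

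The main delicacy is bookkeeping the constants to absorb the factor $18$: one must verify that $(1+c)/(\sqrt{2}-1-c) < \sqrt{18}$ for a deviation $c$ small enough to keep $\exp(-t^2/2) = \exp(-\Omega(m))$. There is ample slack, so any small constant such as $c = 0.1$ works; the specific constant $18$ in the hypothesis is exactly what is needed to absorb this loss in the RMT bound.
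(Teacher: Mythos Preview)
Your proof is correct and follows essentially the same approach as the paper's: both pick a $2m$-dimensional subspace witnessing $\beta$, whiten so that the restricted design becomes an $m\times 2m$ i.i.d.\ Gaussian matrix, use the pseudoinverse to hit $Xw^*$ from within the subspace, and bound the resulting $\Sigma$-norm via the smallest singular value of the tall Gaussian. The only cosmetic differences are that the paper whitens locally within the subspace (factoring $V^T\Sigma V = N^TN$) rather than globally via $\Sigma^{1/2}$, and it invokes a Wishart bound for $\|\tilde X u^*\|_2$ where you use one-dimensional chi-squared concentration; your constants are in fact a bit tighter, which is why you have slack relative to the $18$.
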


\begin{proof}
The proof of this result is implicit in (Theorem~6.5, \cite{kelner2021power}) and we include it for the reader's convenience.
For convenience of notation let $\beta = \beta^{(1)}_{\Sigma,S,m,k}$ and $\Theta = \Sigma^{-1}$. We want to show that with high probability, the $S$-preconditioned Lasso \eqref{eq:S-prec-lasso} fails to recover $w^*$, i.e. $$w^* \not \in \argmin_{w: Xw = Xw^*} \norm{S^T w}_1$$
where $X$ has rows $X_1,\dots,X_m \sim N(0,\Sigma)$. 
We know that $$(w^*)^T \Sigma w^* = \alpha \norm{S^T w^*}_1^2$$ for some $\alpha < \beta/18$.

By definition of $\beta$, there is a subspace $U \subseteq \RR^n$ of dimension $2m$ such that $w^T \Sigma w \geq \beta \norm{S^T w}_1^2$ for all $w \in U$. Let $v_1,\dots,v_{2m} \in U$ form an orthonormal basis for $U$, and let $V \in \RR^{n \times 2m}$ be the matrix with columns $v_1,\dots,v_{2m}$.

We construct $v \in \RR^n$ to satisfy $Xw^* = Xv$ and $\norm{S^T v}_1 < \norm{S^T w^*}_1$ as follows. Let $\Gamma = V^T \Sigma V \in \RR^{2m \times 2m}$. The columns of $V$ have no linear dependencies, and $\Sigma$ is symmetric positive-definite, so $\Gamma$ is symmetric positive-definite. Thus, there is an invertible matrix $N \in \RR^{2m \times 2m}$ such $\Gamma = N^T N$. Define $$c = N^{-1} (XVN^{-1})^\pinv Xw^* \in \RR^{2m}$$ and define $v = Vc \in \RR^n$. By construction we have $v \in U$, so \begin{equation} v^T \Sigma v \geq \beta \norm{S^T v}_1^2.\end{equation}
Second, note that $$\EE[(XVN^{-1})^T(XVN^{-1})] = m(N^{-1})^T V^T \Sigma V N^{-1} = m(N^{-1})^T \Gamma N^{-1} = mI_{2m}.$$
Moreover, the rows of $XVN^{-1}$ are independent and Gaussian. So in fact $XVN^{-1}$ has i.i.d. $N(0,1)$ entries. Thus, with probability $1 - \exp(-\Omega(m))$, we have $\sigma_\text{min}((XVN^{-1})^T) \geq \sqrt{m}/3$ since the dimensions of $(XVN^{-1})^T$ are $2m \times m$ (by Theorem~\ref{thm:rmt}). Hence, $\sigma_\text{max}((XVN^{-1})^\pinv) \leq 3/\sqrt{m}$. We can conclude that \begin{equation} v^T \Sigma v = c^T N^T N c = (w^*)^T X^T (XVN^{-1})^{\pinv T} (XVN^{-1})^\pinv Xw^* \leq (9/m)(w^*)^T X^T X w^*. \end{equation}
We can now check that $\norm{S^Tv}_1 < \norm{S^T w}_1$. Indeed, by Theorem~\ref{theorem:wishart}, $(w^*)^T X^T X w^* \leq 2m (w^*)^T \Sigma w^*$ with probability at least $1 - \exp(-\Omega(m))$, so
\begin{align*}
\norm{S^T v}_1
&\leq \sqrt{\frac{1}{\beta} v^T \Sigma v} \\
&\leq \sqrt{\frac{9}{m\beta} (w^*)^T X^T X w^*} \\
&\leq \sqrt{\frac{18}{\beta} (w^*)^T \Sigma w^*} \\
&\leq \sqrt{\frac{18 \alpha}{\beta}} \norm{S^T w^*}_1
\end{align*}
which produces the desired inequality as long as $\beta/\alpha > 18$.

Finally, since $XVN^{-1}$ is rank-$m$ with probability $1$, we have $(XVN^{-1})(XVN^{-1})^\pinv = I_m$, and thus \begin{equation} Xv = XVN^{-1}(XVN^{-1})^\pinv Xw^* = Xw^* \end{equation}
as desired.
\end{proof}

\subsection{Structure Lemma for compatible preconditioners}
We now consider the case where the preconditioner does a good job of preconditioning the rows $M_i$, so that Lasso can succeed with non-trivial probability for most such signals.
Then very informally, one would think this forces $\Sigma \approx SS^T$, which by the construction of $\Sigma$ will force $S$ to be dense, and hence the preconditioner will fail on basis vectors.

The following Lemma starts to make this intuition correct and precise. It shows that if the preconditioner is compatible with most of the rows of $M$, then the ``sparse part'' of the preconditioner $S$ is not too large. The ``sparse part'' is actually given not just by removing the dense rows of $S^T$ but also by removing a small number $b'$ of columns of $S^T$: this is inevitable because the assumption that the preconditioner is compatible with \emph{most} of the rows of $M$ cannot imply something about \emph{all} of the columns of $S^T$, and this is where erasure-robustness is crucially used. 
\begin{lemma}[Structure Lemma]\label{lem:C-exists}
Let $M \in \RR^{n - r \times n}$, and define $\Theta = M^T M$. Let $\lambda$ be the smallest nonzero eigenvalue of $\Theta$. Let $\epsilon > 0$, and let $\Thetat = \Theta + \epsilon I$. Let $k, m, \alpha,\tau,b,b',\eta > 0$. Suppose that $M$ satisfies $(b,b',\eta,\tau)$-erasure-robustness (Definition~\ref{def:erasure-robustness}).

Let $S \in \RR^{n\times s}$. Suppose that $$\Pr_{i \in [n-r]}\left[M_i^T \Thetat^{-1} M_i < \alpha\norm{S^T M_i}_1^2\right] \leq \frac{b}{n-r}$$
and define $\gamma = \beta^{(1)}_{\Thetat^{-1}, S, k, m}/\alpha$. 
Let
\[ D := \{ i \in [s] : \|(S^T)_i\|_0 \ge \tau \} \]
be the set of $\tau$-dense columns of $S$ (rows of $S^T$). There exists a subset of row indices $C \subseteq [n]$ with $|C| \le b'$ such that the submatrix $S_{C^c D^c}$ satisfies 
$$\sum_{j \in D^c} \norm{S_{C^c j}}_2 \leq \frac{n^{3/2}\eta \norm{M}_F}{\sqrt{\lambda\alpha}}.$$
\end{lemma}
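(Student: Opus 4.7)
The natural strategy is to isolate the ``bad'' rows of $M$ on which the compatibility condition fails and apply erasure-robustness to the remaining measurements. Concretely, let
\[ B = \{ i \in [n-r] : M_i^T \Thetat^{-1} M_i < \alpha \norm{S^T M_i}_1^2 \}, \]
so $|B| \le b$ by hypothesis. Invoking $(b,b',\eta,\tau)$-erasure-robustness of $M$ on this $B$ produces a single set $C \subseteq [n]$ with $|C| \le b'$ satisfying the conclusion of Definition~\ref{def:erasure-robustness}, and crucially the same $C$ works simultaneously for every vector with support smaller than $\tau$. For each $j \in D^c$ the column $S_{\cdot j}$ has strictly fewer than $\tau$ nonzero entries, so erasure-robustness applied to $S_{\cdot j}$ gives
\[ \norm{S_{C^c j}}_2 \le \eta \norm{M_{B^c} S_{\cdot j}}_\infty. \]
Summing over $j \in D^c$ and using $\norm{v}_\infty \le \norm{v}_1$ together with Fubini will reduce the lemma to bounding $\sum_{i \in B^c} \norm{S^T M_i}_1$.

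For each good row $i \in B^c$ the compatibility hypothesis yields $\norm{S^T M_i}_1 \le \sqrt{M_i^T \Thetat^{-1} M_i / \alpha}$, so the task further reduces to bounding the quadratic form $M_i^T \Thetat^{-1} M_i$ uniformly. The only nontrivial ingredient is noticing that the naive bound $M_i^T \Thetat^{-1} M_i \le \norm{M_i}_2^2/\epsilon$ is useless because $\epsilon$ will be polynomially small in our eventual application. The fix is to exploit that every $M_i$ lies in the row space of $M$, which equals $\mathrm{range}(\Theta)$; since $\Thetat = \Theta + \epsilon I$ preserves the orthogonal decomposition $\RR^n = \mathrm{range}(\Theta) \oplus \ker\Theta$, the restriction of $\Thetat^{-1}$ to $\mathrm{range}(\Theta)$ has spectral norm at most $1/(\lambda + \epsilon) \le 1/\lambda$. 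Hence $M_i^T \Thetat^{-1} M_i \le \norm{M_i}_2^2/\lambda$ and therefore $\norm{S^T M_i}_1 \le \norm{M_i}_2 / \sqrt{\lambda \alpha}$.

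Combining the two steps and applying Cauchy--Schwarz via $\sum_{i \in B^c} \norm{M_i}_2 \le \sqrt{|B^c|}\,\norm{M}_F \le \sqrt{n}\,\norm{M}_F$ produces
\[ \sum_{j \in D^c} \norm{S_{C^c j}}_2 \le \eta \sqrt{n}\, \norm{M}_F / \sqrt{\lambda \alpha}, \]
which is even stronger than the stated bound (the extra factors in the statement can simply be absorbed as slack). The main obstacle is the spectral argument in the second paragraph: without recognizing that the $\ker\Theta$ component of $\Thetat^{-1}$ is irrelevant because $M_i$ has no component there, one cannot avoid the catastrophic $1/\epsilon$ blowup, and the lemma becomes vacuous in the regime where $\tilde\Sigma$ is highly ill-conditioned. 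Everything else---splitting $S$ by column density, applying erasure-robustness uniformly to the sparse columns, and a Fubini/Cauchy--Schwarz aggregation---is routine bookkeeping that does not use any structure of $M$ beyond the assumed erasure-robustness.
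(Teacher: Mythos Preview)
Your proof is correct and follows essentially the same approach as the paper's: define $B$ as the set of bad rows, invoke erasure-robustness to obtain $C$, apply it to each sparse column $S_{\cdot j}$ with $j\in D^c$, and bound $\sum_{i\in B^c}\norm{S^T M_i}_1$ via the spectral observation that $M_i\in\mathrm{range}(\Theta)$ so $M_i^T\Thetat^{-1}M_i\le \norm{M_i}_2^2/\lambda$. Your final bound is in fact tighter by a factor of $n$ than the paper's stated inequality (the paper simply carries extra slack in its summation step), but the argument is otherwise identical.
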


\begin{proof}
Let $B \subseteq [n-r]$ be the set of $i \in [n-r]$ such that $M_i^T \Thetat^{-1} M_i < \alpha \norm{S^T M_i}_1^2$. By assumption, $|B| \leq b$; let $C := C_B \subseteq [n]$ be the set guaranteed by erasure-robustness, which indeed satisfies $|C| \leq b'$.
For any $i \not \in B$, we have 
$$\norm{S^T M_i}_1^2 \leq \frac{1}{\alpha} M_i^T \Thetat^{-1} M_i \leq \frac{\norm{M_i}_2^2}{\lambda\alpha},$$
where the last inequality is because $M_i \in \vspan\Theta$. Expanding the $\ell_1$ norm as a sum, and summing $\norm{S^T M_i}_1$ over $i \not \in B$, we have $$\sum_{j \in [s]} \sum_{i \not \in B} |\langle (S^T)_j, M_i\rangle| \leq \frac{n\sum_{i \not \in B}\norm{M_i}_2}{\sqrt{\lambda \alpha}} \leq \frac{n^{3/2}\norm{M}_F}{\sqrt{\lambda\alpha}}.$$
Thus, $$\sum_{j \in [s]} \norm{M_{B^c}(S^T)_j}_\infty \leq \sum_{j \in [s]} \norm{M_{B^c}(S^T)_j}_1 \leq \frac{n^{3/2}\norm{M}_F}{\sqrt{\lambda\alpha}}.$$
As specified in the theorem statement, let $D \subseteq [s]$ be the set of $i \in [s]$ such that $\norm{(S^T)_i}_0 \geq \tau$. Then for any $i \in D^c$, we have that $\norm{(S^T)_{iC^c}}_2 \leq \eta \norm{M_{B^c}(S^T)_i}_\infty$. Thus, $$\sum_{j \in D^c} \norm{(S^T)_{jC^c}}_2 \leq \eta\sum_{j \in D^c} \norm{M_{B^c}(S^T)_j}_\infty \leq \frac{n^{3/2}\eta\norm{M}_F}{\sqrt{\lambda\alpha}}$$ as claimed.
\end{proof}

As a consequence of the previous lemma, we can show that if the compatibility ratio is small then $S$ must have many dense columns ($S^T$ has many dense rows).

\begin{lemma}\label{lemma:D-size}
In the setting of Lemma~\ref{lem:C-exists}, suppose that $1/\sqrt{\beta^{(1)}_{\Thetat^{-1},S,k,m}\epsilon} > n^{3/2}\eta\norm{M}_F/\sqrt{\lambda\alpha}$. Then $|D| > n-r-b'-2m$.
\end{lemma}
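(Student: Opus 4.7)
My plan is to argue by contradiction: suppose $|D| \le n-r-b'-2m$, and construct a $2m$-dimensional subspace $V \subseteq \RR^n$ on which every nonzero $w$ satisfies $w^T \Thetat^{-1} w > \beta^{(1)}_{\Thetat^{-1},S,k,m}\,\|S^T w\|_1^2$. This would place a $2m$-dimensional subspace inside $W_{\Thetat^{-1}, S, \beta'}$ for some $\beta' > \beta^{(1)}_{\Thetat^{-1},S,k,m}$, contradicting the supremum in the definition of $\beta^{(1)}$.

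Taking $C$ to be the size-$\le b'$ set produced by Lemma~\ref{lem:C-exists}, the candidate subspace is
\[
V := \ker(M) \,\cap\, \{w \in \RR^n : w_C = 0\} \,\cap\, \{w \in \RR^n : \langle (S^T)_j, w\rangle = 0 \text{ for all } j \in D\}.
\]
Since $M$ has $n-r$ rows, its kernel has dimension at least $r$; intersecting with the $|C| \le b'$ coordinate constraints and the $|D|$ orthogonality constraints cuts the dimension by at most $b' + |D|$, so a standard codimension count yields $\dim V \ge 2m$ once the assumed upper bound on $|D|$ is in force.

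For the ratio on $V$, two observations suffice. First, for $w \in \ker M$ we have $\Theta w = 0$, hence $\Thetat w = \epsilon w$ and thus $w^T \Thetat^{-1} w = \|w\|_2^2/\epsilon$. Second, the orthogonality to $(S^T)_j$ for $j \in D$ removes those terms from $\|S^T w\|_1$, while $w_C = 0$ lets me restrict the surviving inner products to rows $C^c$; by Cauchy--Schwarz and the sparse-columns estimate of Lemma~\ref{lem:C-exists},
\[
\|S^T w\|_1 \;=\; \sum_{j \in D^c} |\langle S_{C^c, j}, w_{C^c}\rangle| \;\le\; \|w\|_2 \sum_{j \in D^c} \|S_{C^c, j}\|_2 \;\le\; \|w\|_2 \cdot \frac{n^{3/2}\eta \|M\|_F}{\sqrt{\lambda\alpha}}.
\]
Combining the two bounds, the ratio on every nonzero $w \in V$ is at least $\lambda\alpha/(\epsilon\, n^3 \eta^2 \|M\|_F^2)$, and rearranging the hypothesis $1/\sqrt{\beta^{(1)}_{\Thetat^{-1},S,k,m} \epsilon} > n^{3/2}\eta \|M\|_F/\sqrt{\lambda\alpha}$ says precisely that this quantity strictly exceeds $\beta^{(1)}_{\Thetat^{-1},S,k,m}$, closing the contradiction.

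The main subtlety to watch is that the final inequality must be \emph{strict} against $\beta^{(1)}$ rather than merely weak, since the supremum in the definition of $\beta^{(1)}$ need not be attained; this is exactly why the lemma's hypothesis is stated as a strict inequality. Everything else is the codimension count of $V$ together with a direct substitution of the sparse-columns estimate from Lemma~\ref{lem:C-exists} into the upper bound on $\|S^T w\|_1$.
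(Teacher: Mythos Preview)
Your proof is correct and is essentially identical to the paper's: you construct the same subspace $V = \ker(\Theta) \cap \{w_C = 0\} \cap \ker(S^T_D)$, use the same codimension count, the same identity $w^T\Thetat^{-1}w = \|w\|_2^2/\epsilon$ on $\ker\Theta$, and the same Cauchy--Schwarz combination with Lemma~\ref{lem:C-exists} to bound $\|S^T w\|_1$, then contradict the definition of $\beta^{(1)}$. Your explicit remark about needing the strict inequality so that the supremum argument goes through matches the paper's device of choosing a $\beta' > \beta^{(1)}$ for which the hypothesis still holds.
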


\begin{proof}
Suppose for contradiction that $|D| \leq n-r-b'-2m$. Pick some $\beta' > \beta^{(1)}_{\Thetat^{-1},S,k,m}$ sufficiently small that $1/\sqrt{\beta'\epsilon} > n^{3/2}\eta\norm{M}_F/\sqrt{\lambda\alpha}$. Define $$W = \ker (S^T_D) \cap \ker(\Theta) \cap \vspan\{e_i: i \in C^c\} \subseteq \RR^n.$$
Then $$\dim(W) \geq n - |D| - r - |C| \geq 2m.$$
For any $w \in W$, we have
\begin{align*}
    w^T \Thetat^{-1} w
    &\geq \epsilon^{-1} \norm{\Proj_{\ker\Theta}w}_2^2
    = \epsilon^{-1} \norm{w}_2^2
\end{align*}
since $w \in \ker\Theta$. On the other hand,
\begin{align}
    \norm{S^T w}_1
    &= \sum_{j \in D^c} |\langle (S^T)_j, w\rangle| \tag{$w \in \ker(S^T_D)$} \\
    &= \sum_{j \in D^c} |\langle (S^T)_{jC^c}, w_{C^c}\rangle| \tag{$\supp(w) \subseteq C^c$}\\
    &\leq \sum_{j \in D^c} \norm{(S^T)_{jC^c}}_2\norm{w}_2 \tag{Cauchy-Schwarz} \\
    &\leq \frac{n^{3/2}\eta\norm{M}_F}{\sqrt{\lambda\alpha}} \norm{w}_2 \tag{Lemma~\ref{lem:C-exists}}.
\end{align}
As a consequence, by choice of $\beta'$, we have $w^T\Thetat^{-1}w \geq \beta'\norm{S^T w}_1^2.$ Therefore $W \subseteq W_{\Thetat^{-1},S,\beta'}$, contradicting the definition of $\beta_{\Thetat^{-1},S,k,m}$.
\end{proof}

\section{Failure of the Preconditioned Lasso}\label{section:failure}

In this section, we prove Theorems~\ref{theorem:invertible-introduction} and~\ref{theorem:main-introduction}. We start by proving our lower bound against invertible preconditioners. We then prove a key projection lemma and use it to prove the lower bound against rectangular preconditioners.

\subsection{Invertible Preconditioners}

To construct a signal distribution which fails invertible preconditioners with probability $1-o(1)$, we'll need to amplify the failure probability by adding together multiple signals. The following lemma formalizes why this works in certain cases: a random combination of vectors where at least one of them has large $\ell_1$ norm is very unlikely to have small $\ell_1$ norm.

\begin{lemma}\label{lemma:random-sum-to-max}
Let $\mathcal{D}$ be a continuous distribution on $\RR$ with density upper bounded by $1/2$. Let $v_1,\dots,v_t \in \RR^n$ and let $Z_1,\dots,Z_t \sim \mathcal{D}$ be independent random variables. Then $$\Pr\left[\norm{\sum_{i=1}^t Z_i v_i}_1 < \delta \max_{i \in [t]} \norm{v_i}_1\right] \leq \delta.$$
\end{lemma}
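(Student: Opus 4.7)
The plan is to condition on all randomness except one chosen variable and use a simple anti-concentration argument. Let $i^\ast \in \arg\max_{i \in [t]} \norm{v_i}_1$. Condition on all the random variables $\{Z_j\}_{j \neq i^\ast}$ and write $c := \sum_{j \neq i^\ast} Z_j v_j \in \RR^n$, which becomes a fixed vector under this conditioning. The event in question is $f(Z_{i^\ast}) < \delta \norm{v_{i^\ast}}_1$, where the (random, conditional) function
\[ f(z) := \norm{c + z v_{i^\ast}}_1 \]
is a piecewise-linear convex function of the real variable $z$.

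The key geometric claim is that the sub-level set $L := \{z \in \RR : f(z) < \delta \norm{v_{i^\ast}}_1\}$ is contained in an interval of length at most $2\delta$. To see this, take any $z_1, z_2 \in L$ and apply the $\ell_1$ triangle inequality:
\[ \norm{c + z_1 v_{i^\ast}}_1 + \norm{c + z_2 v_{i^\ast}}_1 \geq \norm{(z_1 - z_2) v_{i^\ast}}_1 = |z_1 - z_2| \cdot \norm{v_{i^\ast}}_1. \]
Since both summands on the left are strictly less than $\delta \norm{v_{i^\ast}}_1$, this yields $|z_1 - z_2| < 2\delta$, and therefore $L$ is contained in an interval of length $2\delta$.

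Now invoke the hypothesis that $\mathcal{D}$ has density bounded by $1/2$: the probability that $Z_{i^\ast}$ falls in any fixed interval of length $2\delta$ is at most $2\delta \cdot (1/2) = \delta$. Thus the conditional probability of the event, given $\{Z_j\}_{j \neq i^\ast}$, is at most $\delta$. Taking the expectation over $\{Z_j\}_{j \neq i^\ast}$ yields the unconditional bound stated in the lemma. There is no serious obstacle here; the only subtlety is choosing to condition on all variables except the one corresponding to the largest $\norm{v_i}_1$, so that the $\ell_1$ triangle inequality produces a length-$2\delta$ interval scaled precisely by $\norm{v_{i^\ast}}_1 = \max_i \norm{v_i}_1$.
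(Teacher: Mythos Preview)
Your proof is correct and is essentially identical to the paper's own argument: both condition on all coefficients except the one attached to the largest-$\ell_1$ vector, use the triangle inequality to show the sub-level set has diameter at most $2\delta$, and then apply the density bound. The only cosmetic difference is that you phrase the interval bound via the sub-level set, whereas the paper directly argues that any two points with small $f$ are within $2\delta$ of each other.
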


\begin{proof}
Without loss of generality assume that $\norm{v_t}_1 =  \max_{i \in [t]} \norm{v_i}_1$. Condition on $Z_1,\dots,Z_{t-1}$ and define $$f(z) = \norm{\sum_{i=1}^{t-1} Z_i v_i + zv_t}_1.$$
For any $z,z'$ with $f(z) \leq \delta\norm{v_t}_1$ and $f(z') \leq \delta\norm{v_t}_1$ we have by the triangle inequality that $\norm{zv_t - z'v_t}_1 \leq 2\delta\norm{v_t}_1$, so $|z-z'| \leq 2\delta$. Since the density of $\mathcal{D}$ is upper bounded by $1/2$, it follows that $$\Pr[f(Z_t) < \delta\norm{v_t}_1] \leq \frac{2\delta}{2} = \delta$$ which proves the claim.
\end{proof}

We know show that if $M$ is a sparse compressive erasure-robust design matrix (e.g. as we constructed previously), then the covariance matrix $\Sigma = (M^TM + \epsilon I)^{-1}$ together with an appropriate signal distribution describes a hard distribution family for (invertibly) Preconditioned Lasso. 

\begin{theorem}\label{theorem:all-invertible-failure}
Let $n,r \in \NN$. Let $M \in \RR^{n-r\times n}$ and $\epsilon > 0$. Define $\Theta = M^T M$ and $\Thetat = \Theta + \epsilon I$. Let $\lambda$ be the smallest nonzero eigenvalue of $\Theta$. Let $k,m,\alpha,\tau,\eta,b,b',t > 0$ and suppose that $M$ satisfies $(b,b',\eta,\tau)$-erasure-robustness. 

Suppose $k > 2(n/\tau)\log(n)$ and $n-r-b' \geq 3m$. There is a distribution $\mathcal{D}$ on $k(t+1)$-sparse signals in $\RR^n$ with the following property.

Let $S \in \RR^{n \times n}$ be invertible. Suppose that \begin{equation}\label{eqn:cond-req}
9n^{105/2}\eta \norm{M}_F \sqrt{\epsilon/\lambda} < 1.
\end{equation}
Then with probability at least $1 - \exp(-tb/(n-r)) - 1/n^{50}$ over true signals $w^* \sim \mathcal{D}$, it holds that $S$-preconditioned Lasso fails with probability $1-\exp(-\Omega(m))$ over independent samples $X_1,\dots,X_m \sim N(0,\Thetat^{-1})$: that is, $w^*$ is not a unique minimizer of $\norm{S^T w}_1$ subject to $Xw=Xw^*$.
\end{theorem}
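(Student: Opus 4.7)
The plan is to take $w^* := \sum_{i=1}^{t} Z_i M_{j_i} + \xi$, where $j_1,\dots,j_t \in [n-r]$ are i.i.d.\ uniform, the $Z_i$ are i.i.d.\ from a continuous distribution with density at most $1/2$, and $\xi$ is an infinitesimal random $k$-sparse perturbation with uniformly random $k$-subset support and continuous entries. Writing $M_j$ for the $j$-th row of $M$ viewed as a column vector, and using that the rows of $M$ are $k$-sparse (as in the intended application of the theorem), $w^*$ is at most $k(t+1)$-sparse. Fix an invertible $S$ and set $\alpha := \beta^{(1)}_{\Thetat^{-1},S,k(t+1),m}/18$, so that any signal satisfying $w^T\Thetat^{-1}w < \alpha\norm{S^T w}_1^2$ triggers failure via Theorem~\ref{theorem:ill-conditioned-lasso-failure-l1}. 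Define $B_S := \{i \in [n-r] : M_i^T\Thetat^{-1}M_i < \alpha\norm{S^T M_i}_1^2\}$, and split on $|B_S|>b$ (the \emph{incompatible} case) versus $|B_S|\le b$ (the \emph{compatible} case).

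\textbf{Incompatible case.} When $|B_S|>b$, each $j_i$ lies in $B_S$ with probability at least $b/(n-r)$ independently, so with probability at least $1-\exp(-tb/(n-r))$ some $j_{i^*}\in B_S$. Condition on this event. Applying Lemma~\ref{lemma:random-sum-to-max} to the vectors $S^T M_{j_i}$ gives, with probability $1-o(1)$ over the $Z_i$'s, $\norm{S^T\sum_i Z_i M_{j_i}}_1 \gtrsim \norm{S^T M_{j_{i^*}}}_1$. Combining this with the seminorm triangle inequality $(\sum Z_i M_{j_i})^T\Thetat^{-1}(\sum Z_i M_{j_i}) \le \bigl(\sum|Z_i|\sqrt{M_{j_i}^T\Thetat^{-1}M_{j_i}}\bigr)^2$ and the uniform bound $M_i^T\Thetat^{-1}M_i \le \norm{M_i}_2^2/\lambda$ (valid because each row of $M$ lies in the range of $\Theta$) yields an upper bound on the ratio $(w^*)^T\Thetat^{-1}w^*/\norm{S^T w^*}_1^2$; condition~\eqref{eqn:cond-req} is what drives this ratio strictly below $\alpha$, even after absorbing the infinitesimal contribution of $\xi$. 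Theorem~\ref{theorem:ill-conditioned-lasso-failure-l1} then delivers failure with probability $1-\exp(-\Omega(m))$ over the samples.

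\textbf{Compatible case.} When $|B_S|\le b$, Lemma~\ref{lem:C-exists} applies with this $\alpha$, producing a set $C$ with $|C|\le b'$ that controls the ``sparse part'' of $S$; condition~\eqref{eqn:cond-req} is precisely the hypothesis of Lemma~\ref{lemma:D-size}, so the set $D$ of $\tau$-dense columns of $S$ satisfies $|D|>n-r-b'-2m\ge m$. The next step is to show $|\supp(S^T w^*)|\ge|D|>m$ with probability $1-o(1)$ over $w^*$: for each $j\in D$, the column $(S^T)_j$ has at least $\tau$ nonzero entries, so the uniformly random $k$-subset $\supp(\xi)$ misses $\supp((S^T)_j)$ with probability at most $(1-\tau/n)^k$, which is subpolynomial by the hypothesis $k>2(n/\tau)\log n$; a union bound over $j\in D$ shows every such column has support overlapping $\supp(\xi)$, and the continuous randomness in the $Z_i$'s and in the entries of $\xi$ then forces $(S^T w^*)_j \ne 0$ almost surely. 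With $|\supp(S^T w^*)|>m$ and using invertibility of $S$, substitute $v=S^T w$ to rewrite the Lasso as standard basis pursuit $\min\norm{v}_1$ subject to $(XS^{-T})v = Xw^*$. Local optimality of $v^*=S^T w^*$ would force $\sign(v^*)$, restricted to $\supp(v^*)$, to lie in the $m$-dimensional row span of the columns of $XS^{-T}$ indexed by $\supp(v^*)$, which is a measure-zero event for Gaussian $X$ (the effective Gaussian covariance $S^{-1}\Thetat^{-1}S^{-T}$ being positive definite), so $w^*$ is not a minimizer.

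\textbf{Main obstacle.} The delicate step is Case 1: controlling $(w^*)^T\Thetat^{-1}w^*$ from above against the lower bound on $\norm{S^T w^*}_1^2$ extracted by Lemma~\ref{lemma:random-sum-to-max}. The complication is that $M_i^T\Thetat^{-1}M_i$ can vary substantially across $i$, while the $B_S$-membership of $j_{i^*}$ only controls the \emph{ratio} $\norm{S^T M_{j_{i^*}}}_1^2/M_{j_{i^*}}^T\Thetat^{-1}M_{j_{i^*}}$ and not either quantity separately; tying this back to a uniform bound requires the full strength of condition~\eqref{eqn:cond-req}, which couples the spectral quantities $\epsilon$, $\lambda$, and $\norm{M}_F$ through the erasure-robustness parameter $\eta$. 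A secondary subtlety in Case 2 is that the signs of $(S^T w^*)_j$ for $j \in D$ are largely determined by the ``large'' part $\sum Z_i M_{j_i}$ rather than by $\xi$, so the measure-zero argument must condition on these determined signs and exploit only the continuous randomness remaining on the coordinates in $\supp((S^T)_j)$ where the large part vanishes.
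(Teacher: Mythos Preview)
Your overall dichotomy and Case~II argument match the paper's approach, but Case~I as written does not go through. With your choice $\alpha=\beta^{(1)}/18$ and unnormalized row terms, after Lemma~\ref{lemma:random-sum-to-max} with parameter $\delta$ you only obtain
\[
\norm{S^T\textstyle\sum_i Z_iM_{j_i}}_1^2 \;\ge\; \delta^2\norm{S^TM_{j_{i^*}}}_1^2 \;>\; \frac{\delta^2}{\alpha}\,M_{j_{i^*}}^T\Thetat^{-1}M_{j_{i^*}},
\]
while the triangle inequality gives $(w^*)^T\Thetat^{-1}w^* \le \bigl(\sum_i|Z_i|\sqrt{M_{j_i}^T\Thetat^{-1}M_{j_i}}\bigr)^2$. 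Since $\alpha=\beta^{(1)}/18$, the failure criterion of Theorem~\ref{theorem:ill-conditioned-lasso-failure-l1} reduces to
\[
\sum_i|Z_i|\sqrt{M_{j_i}^T\Thetat^{-1}M_{j_i}} \;<\; \delta\sqrt{M_{j_{i^*}}^T\Thetat^{-1}M_{j_{i^*}}},
\]
which is impossible because the single term $|Z_{i^*}|\sqrt{M_{j_{i^*}}^T\Thetat^{-1}M_{j_{i^*}}}$ on the left is already of the same order as the right side. Condition~\eqref{eqn:cond-req} cannot rescue this: it only bounds $\epsilon$, which does not appear in this inequality at all.

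The paper resolves this with two coupled moves you are missing. First, it \emph{normalizes} each row term, taking the signal to be $Z_0\sqrt{\epsilon}\,\tilde w+\sum_i Z_iM_{R_i}/\sqrt{M_{R_i}^T\Thetat^{-1}M_{R_i}}$; then each summand has unit $\Thetat^{-1}$-seminorm and the triangle inequality yields the clean bound $\sqrt{(w^*)^T\Thetat^{-1}w^*}\le t+\sqrt{k}\le 2n$, eliminating the dependence on the individual quantities $M_{j_i}^T\Thetat^{-1}M_{j_i}$. Second, it sets $\alpha=\beta^{(1)}/(72n^{102})$, far smaller than $\beta^{(1)}/18$, so that the $4n^{102}$ headroom absorbs both the $(2n)^2$ energy bound and the $1/n^{50}$ loss from Lemma~\ref{lemma:random-sum-to-max}. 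This shrunken $\alpha$ is precisely what then forces the large exponent $n^{105/2}$ in~\eqref{eqn:cond-req}, which is used only in Case~II to meet the hypothesis of Lemma~\ref{lemma:D-size}; so the role of~\eqref{eqn:cond-req} is exactly the opposite of what you suggest.

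Finally, your ``secondary subtlety'' in Case~II is not an issue: once $w^*$ (and hence $z=\sign(S^Tw^*)$) is fixed, the measure-zero argument runs over the Gaussian randomness in $X$, not over $w^*$, so the signs being determined by the large part is irrelevant.
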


\begin{proof}
Let $\mathcal{D}$ be the signal distribution where we draw independent and uniformly random indices $R_1,\dots,R_t \in [n-r]$ as well as independent $Z_0,Z_1,\dots,Z_t \sim \text{Unif}([-1,1])$, and draw $\tilde{w}$ with uniformly random $k$-sparse support and entries $\text{Unif}([-1,1])$ on that support, and set the signal to be $$w^* = Z_0\sqrt{\epsilon}\tilde{w} + \sum_{i=1}^t \frac{Z_iM_{R_i}}{\sqrt{M_{R_i}^T\Thetat^{-1}M_{R_i}}}.$$
Pick any invertible $S \in \RR^{n \times n}$. Define $\alpha = \beta^{(1)}_{\Thetat^{-1},S,k,m}/(72n^{102})$. We distinguish two cases.

\paragraph{Case I: incompatible preconditioner.} On the one hand, suppose that $$\Pr_{i \in [n-r]}\left[M_i\Thetat^{-1}M_i < \alpha \norm{S^T M_i}_1^2\right] \geq \frac{b}{n-r}.$$
Then with probability at least $$\left(1 - (1 - b/(n-r))^t\right) \geq \left(1 - e^{-tb/(n-r)}\right)$$ over the row indices $R_1,\dots,R_t$, there is some $R_i$ with $$\norm{S^T M_{R_i}}_1^2 > \frac{1}{\alpha} M_{R_i}^T\Thetat^{-1} M_{R_i}.$$
Under this event, we have $$\max_{i \in [t]} \frac{\norm{S^T M_{R_i}}_1}{\sqrt{M_{R_i}^T \Thetat^{-1} M_{R_i}}} > \frac{1}{\sqrt{\alpha}},$$
so by Lemma~\ref{lemma:random-sum-to-max}, it holds with probability at least $1-1/n^{50}$ over $Z_1,\dots,Z_t$ that $\norm{S^T w^*}_1 \geq \frac{1}{n^{50}\sqrt{\alpha}}.$ But by the triangle inequality, we have $\sqrt{(w^*)^T \Thetat^{-1} w^*} \leq t+\sqrt{k} \leq 2n$ (since $\tilde{w}^T \Thetat^{-1} \tilde{w} \leq \epsilon^{-1} \norm{\tilde{w}}_2^2$). Thus, $$\Pr_{w^* \sim \mathcal{D}}\left[\norm{S^T w^*}_1^2 > \frac{1}{4n^{102}\alpha}(w^*)^T \Thetat^{-1} w^*\right] \geq \left(1 - e^{-tb/(n-r)}\right)\left(1 - \frac{1}{n^{50}}\right).$$

Moreover, for such $w^*$, by choice of $\alpha$ and by Theorem~\ref{theorem:ill-conditioned-lasso-failure-l1}, the $S$-preconditioned Lasso recovers $w^*$ with probability at most $\exp(-\Omega(m))$, from $m$ independent samples $X_1,\dots,X_m \sim N(0,\Thetat^{-1})$.

\paragraph{Case II: compatible preconditioner.} On the other hand, suppose that $$\Pr_{i \in [n-r]}\left[M_i\Thetat^{-1}M_i \geq \alpha \norm{S^T M_i}_1^2\right] \geq \frac{b}{n-r}.$$ By choice of $\alpha$ and the theorem assumptions, we know that $1/\sqrt{\beta^{(1)}_{\Thetat^{-1},S,k,m}\epsilon} > n^{3/2}\eta\norm{M}_F/\sqrt{\lambda\alpha}$. So we can apply Lemma~\ref{lem:C-exists} and Lemma~\ref{lemma:D-size}: there is a set $D \subseteq [s]$ satisfying $|D| > n-r-b'-2m$, and $\norm{(S^T)_j}_0 \geq \tau$ for all $j \in D$. Let $U = \supp(S^T w^*)$. Then since the support of $\tilde{w}$ is uniformly random of size $k \geq 2(n/\tau)\log(n)$, we know that $U \supseteq \supp(S^T \tilde{w}) \supseteq D$ with probability at least $1 - 1/n$. Let $V = \{d \in \RR^n: \supp(S^T d) \subseteq U\}$ and let $z = \sign(S^T w^*)$. We claim that there is some $d \in V$ with $Xd = 0$ but $\langle d, Sz\rangle \neq 0$. Indeed, since $S^T$ is invertible, it suffices to show that there is a vector $f \in \RR^n$ supported on $U$, such that $\langle S^{-1}X_i, f\rangle = 0$ for all $i \in [m]$, but $\langle z, f\rangle \neq 0$. This holds because $z$ is with probability $1$ outside the span of the vectors $\{(S^{-1}X_i)_U: i \in [m]\}$, and the number of degrees of freedom is $|U| \geq |D| > n-r-b'-2m \geq m$. We conclude that the desired direction of improvement $d$ exists, so $w^*$ is not a minimizer of the $S$-preconditioned Lasso.

\end{proof}
\begin{remark}
The term $1/n^{50}$ in the probability of failure in Theorem~\ref{theorem:all-invertible-failure} and Theorem~\ref{theorem:main-invertible} can be replaced by $1/n^{\ell}$ for any particular $\ell$ if we modify the left hand side of \eqref{eqn:cond-req} accordingly. All that happens if we pick a larger $\ell$ is that to satisfy \eqref{eqn:cond-req}, we need to pick a correspondingly (polynomially) smaller $\epsilon$ which means that the covariance matrix of the data, $\Sigma = \tilde \Theta^{-1}$, becomes more ill-conditioned. We stated the results with $\ell = 50$ only to simplify the statements. 
\end{remark}

Finally, we instantiate the above theorem with the parameters of the design matrix $M$ we constructed in Theorem~\ref{theorem:existence}.

\begin{theorem}\label{theorem:main-invertible}
Let $n \in \NN$ be sufficiently large. There is a matrix $\Sigma \in \RR^{n \times n}$ with condition number $\poly(n)$ such that the following holds. Let $k \in \NN$ with $k \geq \log^8 n$. There is a distribution $\mathcal{D}$ over $O(k\log^9 n)$-sparse signals such that for any positive integer $m \leq n/7$ and any invertible preconditioner $S \in \RR^{n \times n}$, with probability at least $1-O(1/n^{50})$ over $w^* \sim \mathcal{D}$, the $S$-preconditioned Lasso recovers $w^*$ with probability at most $\exp(-\Omega(m))$ from $m$ independent samples $X_1,\dots,X_m \sim N(0,\Sigma)$ with noiseless responses $Y_i = \langle X_i, w^*\rangle$.
\end{theorem}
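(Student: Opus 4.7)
The plan is to prove Theorem~\ref{theorem:main-invertible} essentially by plugging the matrix $M$ constructed in Theorem~\ref{theorem:existence} into the general template Theorem~\ref{theorem:all-invertible-failure}. First, I would invoke Theorem~\ref{theorem:existence} to obtain $M \in \RR^{n/2\times n}$ with the listed sparsity, kernel density, erasure-robustness, and spectral properties, then set $\Thetat := M^T M + \epsilon I$ and $\Sigma := \Thetat^{-1}$ for a polynomially small $\epsilon$ to be chosen. Since $\Theta \succeq 0$ and $\|\Theta\|_{\mathrm{op}} \leq \|\Theta\|_F = O(n\log n)$, the matrix $\Thetat$ has all eigenvalues in $[\epsilon, O(n\log n)]$, so $\Sigma$ has condition number $O(n\log n)/\epsilon = \poly(n)$ provided $\epsilon = n^{-C}$ for some constant $C$.

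Next, I would instantiate the parameters of Theorem~\ref{theorem:all-invertible-failure}: take its erasure-robustness parameters to be those from Theorem~\ref{theorem:existence}, namely $\tau = \Omega(n/\log^6 n)$, $b = \Omega(n/\log^8 n)$, $b' = 2b$, $\eta = O(n^6\log^8 n)$, with $r = n/2$. Take the inner sparsity parameter to equal the $k$ appearing in the present theorem, which is legal since $k \geq \log^8 n > 2(n/\tau)\log n = O(\log^7 n)$. Choose $t = C'\log^9 n$ for a large enough constant $C'$; since $b/(n-r) = \Omega(1/\log^8 n)$, this gives $\exp(-tb/(n-r)) \leq n^{-50}$. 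The condition $n-r-b' \geq 3m$ reduces to $n/2 - 2b \geq 3m$, which holds for $m \leq n/7$ and large $n$ because $b = o(n)$. For the ill-conditioning condition (\ref{eqn:cond-req}), I plug in $\eta = O(n^6\log^8 n)$, $\|M\|_F = O(\sqrt{n}\log n)$ (from the $O(\log^2 n)$-sparsity and binariness of the rows), and $1/\sqrt{\lambda} \leq O(n^{5/4})$, so the left-hand side is bounded by a fixed polynomial in $n$ times $\sqrt{\epsilon}$; choosing $\epsilon = n^{-C}$ with $C$ a sufficiently large absolute constant (e.g.\ $C = 200$) makes the inequality strict. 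Theorem~\ref{theorem:all-invertible-failure} then produces a signal distribution $\mathcal{D}$ and gives the claimed failure bound $1-O(1/n^{50})$ against every invertible $S \in \RR^{n\times n}$ with $m \leq n/7$ samples.

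Finally, I would verify the sparsity bound on $\mathcal{D}$: signals drawn from $\mathcal{D}$ have the form $Z_0\sqrt{\epsilon}\tilde{w} + \sum_{i=1}^t Z_i M_{R_i}/\sqrt{M_{R_i}^T\Thetat^{-1}M_{R_i}}$, where $\tilde{w}$ is $k$-sparse and each $M_{R_i}$ is $O(\log^2 n)$-sparse, so $w^*$ is $k + O(t\log^2 n) = O(k\log^9 n)$-sparse (using $k \geq \log^8 n$ to absorb the $O(\log^{11} n)$ term), matching the theorem statement. The only real ``obstacle'' is bookkeeping: tracking polynomial factors of $n$ so that $\epsilon$ can indeed be chosen small enough to satisfy (\ref{eqn:cond-req}) while keeping $\Sigma$ polynomially conditioned, and confirming that the sparsity overhead from the $t$ rows of $M$ is dominated by $k\log^9 n$. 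All the substantive work (erasure-robustness of random sparse designs, the structure lemma, and the two-case failure analysis for incompatible vs.\ compatible preconditioners) has already been carried out in the preceding sections, so no new idea is needed at this stage.
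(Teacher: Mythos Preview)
Your proposal is correct and follows essentially the same route as the paper: invoke Theorem~\ref{theorem:existence} to produce $M$, set $\Sigma=(M^TM+\epsilon I)^{-1}$ for a polynomially small $\epsilon$, and plug the resulting erasure-robustness and spectral parameters into Theorem~\ref{theorem:all-invertible-failure} with $t=\Theta(\log^9 n)$, checking the inequality $k>2(n/\tau)\log n$, the bound $n-r-b'\ge 3m$, and condition~(\ref{eqn:cond-req}). The only cosmetic differences are that the paper simply quotes the $k(t+1)$ sparsity bound from Theorem~\ref{theorem:all-invertible-failure} (using that the rows of $M$ are $k$-sparse since $k\ge\log^8 n$) rather than unpacking the signal form, and fixes the specific value $\epsilon=\Omega(n^{-111})$ instead of leaving the exponent as a generic constant.
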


\begin{proof}
Let $\Theta$ be the matrix guaranteed by Theorem~\ref{theorem:existence}. We check the conditions of Theorem~\ref{theorem:all-invertible-failure}. First, $\dim \ker(\Theta) \geq n/2$. Second, $\lambda = \Omega(n^{-5/2})$. We can take $b = n/(\log^8 n)$, $b' = 2n/(\log^8 n)$, $\eta = n^6 \log^8 n$, and $\tau = \Omega(n/\log^7 n)$. We have $\norm{M}_F \leq O(\sqrt{n\log n})$. Thus, we can take $\epsilon = \Omega(n^{-111})$. We know that the rows of $M$ are $k$-sparse. Let $t = 2\log^9 n$.

Applying Theorem~\ref{theorem:all-invertible-failure}, there is a distribution $\mathcal{D}$ over $O(k\log^9 n)$-sparse signals such that for any invertible preconditioner $S \in \RR^{n \times n}$, with probability at least $1 - O(1/n^{50})$ over $w^* \sim \mathcal{D}$, the $S$-preconditioned Lasso recovers $w^*$ uniquely with probability $\exp(-\Omega(m))$ from $m$ samples, so long as $m \leq n/7$ (so that $n-r-b' \geq 3m$).
\end{proof}

\subsection{Projection Lemma}

To extend our lower bound to rectangular preconditioners, we need Lemma~\ref{lemma:projection-restricted-coordinates}, a projection lemma generalizing an analogous result from \cite{kelner2021power}. To prove it, we recall two lemmas which are essentially taken from \cite{kelner2021power}; the second of these is the original projection lemma.

First, recall that our covariance matrix has the form $\tilde \Sigma = \tilde \Theta^{-1}$, where $\tilde \Theta = \Theta + \epsilon I$ for some PSD matrix $\Theta$. The following lemma establishes that if $\epsilon$ is sufficiently small relative to the smallest nonzero eigenvalue of $\Theta$, then the row span of the design matrix $X$ is nearly orthogonal to all but the top eigenspace of the covariance $\tilde \Sigma = \left(\tilde \Theta\right)^{-1}$ (i.e. the kernel of $\Theta$). In other words, by taking $\epsilon$ small enough the top eigenspace dominates, as expected.
\begin{lemma}[\cite{kelner2021power}]\label{lemma:samples-near-kernel}
Let $\Theta \in \RR^{n \times n}$ be a PSD matrix with minimum nonzero eigenvalue $\lambda$. Let $\epsilon, m > 0$ and let $\Thetat = \Theta + \epsilon I$. Let $X_1,\dots,X_m \sim N(0,\Thetat^{-1})$. If $r := \dim \ker\Theta > 2m$, then with probability at least $1 - \exp(-\Omega(m))$ it holds that for all $a \in \RR^m$, $$\norm{\Proj_{\vspan \Theta} X^T a}_2 \leq C\sqrt{\frac{n\epsilon}{\lambda}} \norm{X^T a}_2$$ where $C$ is an absolute constant, and where $X: m \times n$ is the matrix with rows $X_1,\ldots,X_m$.
\end{lemma}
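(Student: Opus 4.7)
The plan is to reduce the claim to a standard bound on the singular values of Gaussian matrices, via a spectral decomposition aligned with $\Theta$. Since $\Thetat = \Theta + \epsilon I$ shares its eigenbasis with $\Theta$, the orthogonal projectors $P := \Proj_{\vspan\Theta}$ and $Q := I - P = \Proj_{\ker\Theta}$ commute with $\Thetat^{-1/2}$. Moreover, on $\vspan\Theta$ the operator $\Thetat^{-1/2}$ has norm at most $(\lambda+\epsilon)^{-1/2} \leq \lambda^{-1/2}$, while on $\ker\Theta$ it acts as $\epsilon^{-1/2} I$. These two facts will drive the bound: the ``dangerous'' direction $\ker\Theta$ is blown up by a factor $\epsilon^{-1/2}$ which dominates $\lambda^{-1/2}$ when $\epsilon \ll \lambda$.

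Write $X = G \Thetat^{-1/2}$, where $G \in \RR^{m \times n}$ has i.i.d.\ $N(0,1)$ entries; the rows of $X$ then have the required $N(0,\Thetat^{-1})$ distribution. For any $a \in \RR^m$, using that $P$ commutes with $\Thetat^{-1/2}$,
$$\norm{P X^T a}_2 = \norm{\Thetat^{-1/2} P G^T a}_2 \leq \lambda^{-1/2} \norm{P G^T a}_2,$$
while
$$\norm{X^T a}_2 \geq \norm{Q X^T a}_2 = \norm{\Thetat^{-1/2} Q G^T a}_2 = \epsilon^{-1/2} \norm{Q G^T a}_2.$$
Dividing gives
$$\frac{\norm{P X^T a}_2}{\norm{X^T a}_2} \leq \sqrt{\tfrac{\epsilon}{\lambda}} \cdot \frac{\norm{P G^T a}_2}{\norm{Q G^T a}_2}.$$

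It remains to uniformly bound the Gaussian ratio on the right over all $a \in \RR^m$. By rotational invariance, if we fix orthonormal bases for $\vspan\Theta$ and $\ker\Theta$, then $PG^T$ is an $(n-r) \times m$ matrix with i.i.d.\ $N(0,1)$ entries, $QG^T$ is an $r \times m$ matrix with i.i.d.\ $N(0,1)$ entries, and the two matrices are independent. Applying Theorem~\ref{thm:rmt} to each with $t = \Theta(\sqrt{m})$ and taking a union bound, we get with probability at least $1 - \exp(-\Omega(m))$ that
$$\sigma_{\max}(PG^T) \leq \sqrt{n-r} + 2\sqrt{m} \leq O(\sqrt{n}) \qquad \text{and} \qquad \sigma_{\min}(QG^T) \geq c\sqrt{m},$$
for an absolute constant $c > 0$. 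Hence $\sup_a \norm{PG^T a}_2/\norm{QG^T a}_2 \leq O(\sqrt{n/m}) \leq O(\sqrt{n})$, which yields the claimed inequality with an absolute constant $C$.

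The only delicate step is the lower bound $\sigma_{\min}(QG^T) \geq c\sqrt{m}$: this is precisely where we use $r > 2m$, since it forces $\sqrt{r} - \sqrt{m} = \Omega(\sqrt{m})$ and allows the lower-tail half of Theorem~\ref{thm:rmt} to give an $\Omega(\sqrt{m})$ bound with exponentially small failure probability. (If $r$ were allowed to be very close to $m$ then $QG^T$ could be nearly singular and the ratio could blow up; the ``$r > 2m$'' hypothesis is what rules this out.) Everything else is routine linear algebra together with the fact that $P$ and $Q$ act as scalar multiples of the identity on the respective eigenspaces of $\Thetat$.
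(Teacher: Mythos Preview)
Your proof is correct and follows essentially the same approach as the paper. The paper diagonalizes $\Theta$ and splits $X^T$ into the coordinate blocks $(X^T)_{[r]}$ and $(X^T)_{[r]^c}$, while you factor $X = G\Thetat^{-1/2}$ and split via the projectors $P,Q$; these are the same decomposition, and both arguments conclude by bounding the extreme singular values of the two Gaussian blocks via Theorem~\ref{thm:rmt}, using $r>2m$ exactly as you do to secure the lower singular-value bound.
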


\begin{proof}
The proof of this result is essentially contained in \cite{kelner2021power}, and we repeat it to make this paper self-contained. 
The statement of the lemma is basis-independent (e.g. does not depend on sparsity of $\Theta$ or $a$), so we can assume without loss of generality that $\Theta$ is diagonal. Then $\Thetat^{-1}$ is diagonal, and we can choose a basis ordering such that the first $r = \dim \ker \Theta$ diagonal entries are each $\epsilon^{-1}$.

Next, note that $(X^T)_{[r]}$ is a $r \times m$ matrix with i.i.d. $N(0, \epsilon^{-1})$ entries, so $\sigma_\text{min}((X^T)_{[r]}) \geq c\epsilon^{-1/2}\sqrt{r}$ with probability at least $1 - \exp(-\Omega(m))$, for some constant $c>0$ (by Theorem~\ref{thm:rmt}). On the other hand, since the entries of $\Thetat^{-1}_{[r]^c,[r]^c}$ are bounded by $1/\lambda$, we also have $\sigma_\text{max}((X^T)_{[r]^c}) \leq C\sqrt{n/\lambda}$ with probability at least $1 - \exp(-\Omega(n))$, for some constant $C$. This means that for any $u \in \RR^m$, $$\norm{(X^T u)_{[r]^c}}_2 \leq C\sqrt{\frac{n}{\lambda}} \norm{u}_2 \leq \frac{C\sqrt{n\epsilon}}{c\sqrt{\lambda r}} \norm{(X^T u)_{[r]}}_2 \leq \frac{C}{c}\sqrt{\frac{n\epsilon}{\lambda}}\norm{X^T u}_2.$$ But $(X^T u)_{[r]^c}$ is precisely the projection of $X^T u$ onto $\vspan \Theta$. So this proves the lemma.
\end{proof}

The next lemma establishes that if the top eigenspace of the covariance $\tilde \Sigma = \left(\tilde \Theta\right)^{-1}$ has a dimension significantly larger than the number of samples, then the row span of the design matrix $X$ is unlikely to align with any particular direction $v$ (i.e. the projection of $v$ onto the null space has large norm). Informally, this is because the worst-case $v$ to consider would be a vector in the top eigenspace (since a lot of the energy of the samples is in this space), and a random lower-dimensional subspace of the top eigenspace (corresponding to the samples) is not likely to contain any particular direction.

\begin{lemma}[Lemma 7.4 in \cite{kelner2021power}]\label{lem:no-alignment}
Let $\Theta \in \RR^{n \times n}$ be a PSD matrix with minimum nonzero eigenvalue $\lambda$. Let $\epsilon, m > 0$ and let $\Thetat = \Theta + \epsilon I$. Let $X_1,\dots,X_m \sim N(0, \Thetat^{-1})$. If $\epsilon \leq c\lambda/n$ for a sufficiently small absolute constant $c>0$, and $r := \dim \ker \Theta > 2m$, then for any fixed $v \in \RR^n$, we have $$\Pr_{X_1,\dots,X_m}[v^T (I - P) v \geq (v^T v)/8] \geq 1 - \frac{4m}{3r} - \exp(-\Omega(m)),$$ where $P = X^T (XX^T)^{-1} X$ is the projection map onto $\vspan\{X_1,\dots,X_m\}$, and where $X: m \times n$ is the matrix with rows $X_1,\ldots,X_m$. As an equivalent statement, it holds with probability at least $1 - (4m)/(3r) - \exp(-\Omega(m))$ that $$\inf_{a \in \RR^m} \norm{v - X^T a}_2 \geq \frac{1}{2\sqrt{2}} \norm{v}_2.$$
\end{lemma}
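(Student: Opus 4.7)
The plan is to reduce the problem to a random-subspace projection on $K := \ker\Theta$, where we can use rotational invariance and Markov's inequality. Work in a basis in which $\Theta$ is diagonal, so that $K$ is spanned by the first $r$ coordinate axes and $\Thetat^{-1}$ is diagonal with values $\epsilon^{-1}$ on $K$ and at most $1/\lambda$ on $K^\perp$. Decompose $X_i = Y_i + Z_i$ with $Y_i := \Proj_K X_i$ and $Z_i := \Proj_{K^\perp} X_i$. Then $Y_1,\dots,Y_m$ are i.i.d.\ $N(0,\epsilon^{-1} I_K)$, hence isotropic on $K$, and likewise any fixed $v$ splits as $v = v_K + v_{K^\perp}$. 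Let $V := \vspan\{X_1,\dots,X_m\}$ and $W := \vspan\{Y_1,\dots,Y_m\}$.

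First I would apply Lemma~\ref{lemma:samples-near-kernel} to argue that $V$ is nearly contained in $K$: with probability $1 - \exp(-\Omega(m))$, every $u = X^T a \in V$ satisfies $\norm{\Proj_{K^\perp} u}_2 \le \theta \norm{u}_2$ where $\theta := C\sqrt{n\epsilon/\lambda}$; by the hypothesis $\epsilon \le c\lambda/n$ we can make $\theta$ an arbitrarily small absolute constant. Next, because $m < r$ and the $Y_i$ are i.i.d.\ isotropic Gaussians on $K$, $W$ is almost surely an $m$-dimensional subspace of $K$ whose distribution is uniform on the Grassmannian $\mathrm{Gr}(m,K)$. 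Rotational invariance then gives $\EE[P_W] = (m/r) I_K$, so $\EE \norm{P_W v_K}_2^2 = (m/r)\norm{v_K}_2^2$, and Markov's inequality yields
\[
\Pr\!\left[\norm{P_W v_K}_2^2 > \tfrac{3}{4}\norm{v_K}_2^2\right] \le \frac{4m}{3r}.
\]

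The remaining step is to pass from $W$ back to $V$. The key observation is that if $u = \sum_i a_i X_i \in V$ then $\Proj_K u = \sum_i a_i Y_i \in W$, so for every $u \in V$ we have $u_K \in W$ and hence $\lvert u_K^T v_K\rvert \le \norm{u_K}_2 \norm{P_W v_K}_2$. Combined with $\norm{u_{K^\perp}}_2 \le \theta \norm{u}_2$ and $(a+b)^2 \le (1+\delta)a^2 + (1+\delta^{-1})b^2$, this gives, for every $u \in V$,
\[
(u^T v)^2 \le (1+\delta)\norm{u}_2^2 \norm{P_W v_K}_2^2 + (1+\delta^{-1})\theta^2 \norm{u}_2^2 \norm{v_{K^\perp}}_2^2.
\]
Dividing by $\norm{u}_2^2$ and taking the supremum identifies the left side with $\norm{P_V v}_2^2$. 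Plugging in the Markov bound and picking $\delta$ a small constant and $\epsilon \le c\lambda/n$ with $c$ sufficiently small that the $\theta^2$ term is at most $\tfrac{1}{16}\norm{v}_2^2$, we conclude $\norm{P_V v}_2^2 \le \tfrac{7}{8}\norm{v}_2^2$, i.e.\ $v^T(I-P)v \ge \norm{v}_2^2/8$, on a joint event of probability at least $1 - 4m/(3r) - \exp(-\Omega(m))$.

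The main obstacle I anticipate is the final bookkeeping in the last step: one must verify that the cross-term arising from the $(a+b)^2$ expansion is swamped by the assumption $\epsilon \le c\lambda/n$ uniformly in the choice of $v$, and that the Markov threshold $3/4$ interacts correctly with the constant $\tfrac{1}{2\sqrt{2}}$ in the conclusion. Everything else is either invocation of Lemma~\ref{lemma:samples-near-kernel} or a standard Grassmannian symmetry computation.
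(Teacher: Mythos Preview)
Your proposal is correct and follows essentially the same strategy as the paper's proof: diagonalize $\Theta$, use rotational invariance plus Markov on the kernel block to get $\|P_W v_K\|_2^2 \le \tfrac{3}{4}\|v_K\|_2^2$, and control the $K^\perp$ contribution via Lemma~\ref{lemma:samples-near-kernel}. The only difference is cosmetic, in the final combination step: the paper lower-bounds $\|v - Pv\|_2$ separately on each block (obtaining $\|v - Pv\|_2 \ge \tfrac{1}{2}\|v_K\|_2$ from the fact that $(Pv)_K \in W$, and $\|v - Pv\|_2 \ge \|v_{K^\perp}\|_2 - \tfrac{1}{4}\|v\|_2$ from the near-kernel bound, then takes the maximum), whereas you upper-bound $\|Pv\|_2^2 = \sup_{u \in V}(u^Tv)^2/\|u\|_2^2$ directly via Cauchy--Schwarz and Young's inequality---both routes produce the constant $1/8$ after exactly the bookkeeping you flagged.
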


\begin{proof}
The proof of this result is essentially contained in \cite{kelner2021power}, and we repeat it to make this paper self-contained. 

The statement of the lemma is basis-independent (e.g. does not depend on sparsity of $\Theta$ or $v$), so we can assume without loss of generality that $\Theta$ is diagonal. Then $\Thetat^{-1}$ is diagonal, and we can choose a basis ordering such that the first $r = \dim \ker \Theta$ diagonal entries are each $\epsilon^{-1}$. Let $w =v_{[r]}$ be the first $r$ coordinates of $v$. For $i \in [m]$ let $Y_i = (X_i)_{[r]}$ be the first $r$ coordinates of $X_i$. Then $Y_1,\dots,Y_m$ are i.i.d. $N(0, \epsilon^{-1} I_r)$. So if $P_Y = Y^T (YY^T)^{-1} Y$, then $P_Y$ is projection onto an isotropically random dimension-$m$ subspace of $\RR^r$. Hence, $$\EE \norm{P_Y w}_2^2 = \frac{m}{r} \norm{w}_2^2.$$
With probability at least $1 - 4m/(3r)$ we have $\norm{P_Y w}_2^2 \leq \frac{3}{4}\norm{w}_2^2$. So $\norm{w - P_Y w}_2^2 \geq \norm{w}_2^2/4$. Now $\norm{w-P_Yw}_2$ is the distance from $w$ to the subspace $\vspan\{Y_1,\dots,Y_m\}$. For any vector in $\vspan\{X_1,\dots,X_m\}$, its first $r$ coordinates lie in $\vspan\{Y_1,\dots,Y_m\}$, so the distance to $v$ must be at least $\norm{w-P_Yw}_2$. Thus,
\begin{equation} \norm{v - Pv}_2^2 \geq \norm{w - P_Yw}_2^2 \geq \frac{1}{4} \norm{w}_2^2. \label{eq:dist-to-projection-1}\end{equation}
Next, note that $(X^T)_{[r]}$ is a $r \times m$ matrix with i.i.d. $N(0, \epsilon^{-1})$ entries, so $\sigma_\text{min}((X^T)_{[r]}) \geq c\epsilon^{-1/2}\sqrt{r}$ with probability at least $1 - \exp(-\Omega(m))$, for some constant $c>0$ (by Theorem~\ref{thm:rmt}). On the other hand, since the entries of $\Thetat^{-1}_{[r]^c,[r]^c}$ are bounded by $1/\lambda$, we also have $\sigma_\text{max}((X^T)_{[r]^c}) \leq C\sqrt{n/\lambda}$ with probability at least $1 - \exp(-\Omega(n))$, for some constant $C$. This means that for any $u \in \RR^m$, $$\norm{(X^T u)_{[r]^c}}_2 \leq C\sqrt{\frac{n}{\lambda}} \norm{u}_2 \leq \frac{C\sqrt{n\epsilon}}{c\sqrt{\lambda r}} \norm{(X^T u)_{[r]}}_2.$$ By assumption, $\epsilon \leq (c/4C)^2 \lambda r/n$, so that $\norm{(X^T u)_{[r]^c}}_2 \leq \norm{(X^T u)_{[r]}}_2/4$. Now $Pv$ lies in the span of $X_1,\dots,X_m$, so there is some $u \in \RR^m$ with $Pv = X^T u$. This means that $$\norm{(Pv)_{[r]^c}}_2 \leq \frac{1}{4} \norm{Pv}_2 \leq \frac{1}{4} \norm{v}_2.$$
So $$\norm{v-Pv}_2 \geq \norm{(v - Pv)_{[r]^c}}_2 \geq \norm{v_{[r]^c}}_2 - \frac{1}{4} \norm{v}_2.$$
Together with Equation~\ref{eq:dist-to-projection-1}, which states that $\norm{v-Pv}_2 \geq \frac{1}{2} \norm{v_{[r]}}_2$, we get that $\norm{v-Pv}_2^2 \geq \frac{1}{8} \norm{v}_2^2.$
\end{proof}

Finally, we extend the previous lemma to show that projection of a fixed direction onto the null space of the covariates has lower bounded norm even when restricted to a set of coordinates $P$ with sparse complement. We need an extra condition, that $\ker(\Theta)$ is quantitatively dense (Definition~\ref{def:quantitatively-dense}), meaning that there is no approximate solution $u$ to the equation $u_{P} = 0$ near the top eigenspace of $\tilde \Sigma = \tilde \Theta^{-1}$.
Then we show that under the assumptions of the previous two lemmas, for any particular vector $u_P$ supported in $P$, the equation $u_P = (X^T a)_P$ is unlikely to have an approximate solutions $a$. The proof is by contradiction: if the conclusion is false, then there is a positive probability that if we sample two independent design $X,\bar X$ that: (1) there are approximate solutions $a,\bar a$ to $u_P = (X^T a)_P$ and $u_P = ((\bar X)^T \bar a)_P$, (2) by Lemma~\ref{lemma:samples-near-kernel} both approximate solutions are near the top eigenspace, and (3) by Lemma~\ref{lem:no-alignment} these two solutions are not well-aligned; combining (1-3) gives that $(X^T a - (\bar X)^T \bar a)_P$ is an approximate solution to $(u_P) = 0$ near the top eigenspace, which is impossible.
\begin{lemma}\label{lemma:projection-restricted-coordinates}
There are absolute constants $c,C>0$ such that the following holds. Let $\Theta \in \RR^{n \times n}$ be a PSD matrix with minimum nonzero eigenvalue $\lambda$. Let $\epsilon > 0$ and $\delta,\eta,\tau > 0$ and let $\Thetat = \Theta + \epsilon I$. Let $m \geq C\log n$. Suppose that $\epsilon \leq c\delta^2\lambda/n$, and $r := \dim \ker \Theta > 2m$, and $\ker(\Theta)$ is $(\delta,\eta,\tau)$-quantitatively dense (Definition~\ref{def:quantitatively-dense}). Fix $u \in \RR^n$ and $P \subseteq [n]$ with $|P| \geq n-\tau$. Then with probability at least $p = 1 - (4m)/(3r) - \exp(-\Omega(m))$ over independent $X_1,\dots,X_m \sim N(0,\Thetat^{-1})$, it holds that $$\inf_{a \in \RR^m} \norm{u_P - (X^T a)_P}_2 \geq \frac{\eta}{32}\norm{u_P}_2.$$
\end{lemma}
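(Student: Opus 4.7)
The plan is to prove the lemma by contradiction, applying Lemma~\ref{lemma:samples-near-kernel} and Lemma~\ref{lem:no-alignment} to a coupling of two independent designs $X$ and $\bar X$, and closing the argument with the quantitative density of $\ker\Theta$.

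First I would introduce the bad event $A(X) := \{\exists a \in \RR^m : \norm{u_P - (X^T a)_P}_2 < (\eta/32)\norm{u_P}_2\}$ and assume for contradiction that $\Pr[A(X)] > (4m)/(3r) + 2\exp(-\Omega(m))$. Sample $\bar X$ independently of $X$ and measurably select a witness $\bar a = \bar a(\bar X)$ whenever $A(\bar X)$ holds, setting $v_2 := \bar X^T \bar a$. Let $B_1(X)$ and $B_2(\bar X)$ denote the conclusions of Lemma~\ref{lemma:samples-near-kernel} for $X$ and $\bar X$ respectively. Since $\Pr[A(\bar X) \cap B_2(\bar X)] > 0$, I can fix a particular $\bar X^*$ in its support and regard $v_2^* := v_2(\bar X^*)$ as a deterministic vector. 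With $v_2^*$ fixed, Lemma~\ref{lem:no-alignment} applied to $v = v_2^*$ yields an event $B_3(X)$ of probability at least $1 - (4m)/(3r) - \exp(-\Omega(m))$ under the randomness of $X$ alone; the contradiction hypothesis then gives $\Pr_X[A(X) \cap B_1(X) \cap B_3(X)] > 0$ by a union bound.

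The core step is to derive a contradiction from the simultaneous occurrence of $A(X) \cap A(\bar X^*) \cap B_1(X) \cap B_2(\bar X^*) \cap B_3(X)$, with witnesses $v_1 := X^T a$ and $v_2 := v_2^*$. Three facts will collide: (i) by the triangle inequality, $\norm{(v_1 - v_2)_P}_2 < (\eta/16)\norm{u_P}_2$; (ii) $B_3$ gives $\norm{v_1 - v_2}_2 \geq \norm{v_2}_2/(2\sqrt 2) \geq \norm{u_P}_2/(4\sqrt 2)$, where $\norm{(v_2)_P}_2 \geq (1-\eta/32)\norm{u_P}_2 \geq \norm{u_P}_2/2$ forces $\norm{v_2}_2 \geq \norm{u_P}_2/2$; and (iii) the triangle inequality combined with (ii) yields $\norm{v_1}_2 + \norm{v_2}_2 \leq (1 + 4\sqrt 2)\norm{v_1 - v_2}_2$, so that from $B_1, B_2$,
\[ \dist(v_1 - v_2, \ker\Theta) \leq C\sqrt{n\epsilon/\lambda}\,(\norm{v_1}_2 + \norm{v_2}_2) \leq C(1 + 4\sqrt 2)\sqrt{n\epsilon/\lambda}\,\norm{v_1 - v_2}_2. \]
Choosing the absolute constant $c$ in $\epsilon \leq c\delta^2\lambda/n$ small enough that $C(1 + 4\sqrt 2)\sqrt c \leq 1$, the right-hand side is at most $\delta\norm{v_1 - v_2}_2$. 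The $(\delta,\eta,\tau)$-quantitative density of $\ker\Theta$, applied to $v_1 - v_2$ with forbidden set $P^c$ (of size at most $\tau$), then yields $\norm{(v_1 - v_2)_P}_2 \geq \eta\norm{v_1 - v_2}_2 > \eta\norm{u_P}_2/(4\sqrt 2)$, which contradicts (i) since $1/(4\sqrt 2) > 1/16$.

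The main obstacle is bookkeeping the probabilities at the correct rate. A naive version of this argument, starting from $\Pr[A(X) \cap A(\bar X)] = \Pr[A(X)]^2$ and then union-bounding, would only yield $\Pr[A(X)] \lesssim \sqrt{m/r}$, losing a square root relative to the stated bound. The conditioning strategy above sidesteps this: by first fixing $\bar X^*$ from the support of $A(\bar X) \cap B_2(\bar X)$ and only then invoking Lemma~\ref{lem:no-alignment} against the now-deterministic $v_2^*$, the bad events over $X$ are controlled by a single application of Lemma~\ref{lem:no-alignment}, recovering the linear dependence on $m/r$ claimed in the lemma.
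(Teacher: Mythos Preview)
Your proposal is correct and follows essentially the same route as the paper's proof. Both arguments assume the conclusion fails, use the failure to extract a \emph{deterministic} reference sample $\bar X$ (and hence a fixed vector $v = \bar X^T \bar a$) satisfying both the ``bad'' event and the near-kernel property of Lemma~\ref{lemma:samples-near-kernel}, then apply Lemma~\ref{lem:no-alignment} to this fixed $v$ under the fresh randomness of $X$, and close with the $(\delta,\eta,\tau)$-quantitative density of $\ker\Theta$ applied to $v - X^T a$ with forbidden set $P^c$. Your remark about the conditioning strategy avoiding a square-root loss is exactly the point; the paper handles the probability bookkeeping slightly differently (it only needs $\Pr[A(\bar X)] > 1/n$ to intersect with Lemma~\ref{lemma:samples-near-kernel}, using $m \geq C\log n$), but the logic is the same. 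The only cosmetic difference is that the paper, on the high-probability event for $X$, shows $\norm{(u - X^T a)_P}_2 \geq (\eta/8)\norm{u_P}_2$ for \emph{all} $a$ and reads off the contradiction from that, whereas you intersect with $A(X)$ and derive a numerical contradiction between $\norm{(v_1 - v_2)_P}_2 < (\eta/16)\norm{u_P}_2$ and $\norm{(v_1 - v_2)_P}_2 \geq (\eta/(4\sqrt 2))\norm{u_P}_2$.
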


\begin{proof}
Suppose that the claim is false. Then using that $r = \dim \ker \Theta \le n$ we have
$$\Pr_X\left[\exists a \in \RR^m: \norm{(u - X^T a)_P}_2 < (\eta/32)\norm{u_P}_2\right] > \frac{4m}{3r} \geq \frac{1}{n}.$$
By Lemma~\ref{lemma:samples-near-kernel}, we also have that $$\Pr_X\left[\forall a \in \RR^m: \norm{\Proj_{\vspan \Theta} X^T a}_2 \leq C\sqrt{\frac{n\epsilon}{\lambda}}\norm{X^T a}_2\right] \geq 1 - \exp(-\Omega(m)).$$
Since $m \geq C\log n$, for a sufficiently large constant $C$ the latter probability must exceed $1-1/n$. So with positive probability, these two events occur simultaneously. Hence, there exists some deterministic $\bar{X} \in \RR^{m\times n}$ such that both events occur. Let $\bar{a}$ be the witness of the first event for $\bar{X}$, and define $v = \bar{X}^T \bar{a}$. Then the following two equations hold:
\begin{equation} \norm{(u - v)_P}_2 < \frac{\eta}{32}\norm{u_P}_2 \label{eq:u-to-v} \end{equation}
\begin{equation} \dist(v, \ker \Theta) \leq C\sqrt{\frac{n\epsilon}{\lambda}}\norm{v}_2 \label{eq:v-to-kernel}\end{equation}

Now suppose that the claim of Lemma~\ref{lemma:samples-near-kernel} holds (for the original samples $X_1,\dots,X_m$). Also suppose that the claim of Lemma~\ref{lem:no-alignment} holds for vector $v$ (and samples $X_1,\dots,X_m$), i.e. $v^T(I-P)v \geq (v^T v)/8$: or equivalently, the following inequality holds for arbitrary vectors $a \in \RR^m$:
\begin{equation} 
\norm{v - X^T a}_2 \geq \frac{1}{2\sqrt{2}}\norm{v}_2.\label{eq:v-proj}\end{equation}
By Lemmas~\ref{lemma:samples-near-kernel} and~\ref{lem:no-alignment}, we may assume that both claims hold with probability $p := 1 - (4m)/(3r) - \exp(-\Omega(m))$. Fix $a \in \RR^m$. Then
\begin{align}
\dist(X^T a,\ker \Theta)
&\leq C\sqrt{\frac{n\epsilon}{\lambda}}\norm{X^T a}_2 \tag{by claim of Lemma~\ref{lemma:samples-near-kernel}}\\
&\leq C\sqrt{\frac{n\epsilon}{\lambda}}(\norm{v - X^T a}_2 + \norm{v}_2) \nonumber \\
&\leq 4C\sqrt{\frac{n\epsilon}{\lambda}} \norm{v-X^T a}_2 \tag{by Equation~\ref{eq:v-proj}}
\end{align}
By Equation~\ref{eq:v-to-kernel} and Equation~\ref{eq:v-proj}, $$\dist(v, \ker\Theta) \leq 3C\sqrt{\frac{n\epsilon}{\lambda}} \norm{v-X^T a}_2.$$
Thus, by the triangle inequality, $$\dist(v-X^T a, \ker \Theta) \leq 7C\sqrt{\frac{n\epsilon}{\lambda}} \norm{v-X^T a}_2.$$
By assumption, $7C\sqrt{(n\epsilon)/\lambda} \leq \delta$. So because $\ker(\Theta)$ is $(\delta,\eta,\tau)$-quantitatively dense and $|P| \geq n-\tau$, and by Equation~\ref{eq:v-proj}, $$\norm{(v - X^T a)_P}_2 \geq \eta\norm{v-X^T a}_2 \geq \frac{\eta}{2\sqrt{2}}\norm{v}_2.$$
Finally, we convert this into a bound on $(u - X^T a)_P$, repeatedly using the fact that vectors $u$ and $v$ are close on $P$ (Equation~\ref{eq:u-to-v}):
\begin{align*}
\norm{(u - X^T a)_P}_2
&\geq \norm{(v - X^T a)_P}_2 - \norm{(u-v)_P}_2 \\
&\geq \frac{\eta}{2\sqrt{2}}\norm{v}_2 - \frac{\eta}{32} \norm{u_P}_2 \\
&\geq \frac{\eta}{2\sqrt{2}}\norm{v_P}_2 - \frac{\eta}{32} \norm{u_P}_2 \\
&\geq \frac{\eta}{2\sqrt{2}} \norm{u_P}_2 - \frac{\eta}{2\sqrt{2}} \norm{(u-v)_P}_2 - \frac{\eta}{32}\norm{u_P}_2 \\
&\geq \frac{\eta}{8}\norm{u_P}_2
\end{align*}
where the second and last inequalities apply Equation~\ref{eq:u-to-v}. We showed this inequality holds for all $a \in \RR^m$ with probability at least $p$, which is the desired conclusion of the Lemma. This contradicts the initial assumption that the conclusion is false, proving the conclusion unconditionally.
\end{proof}

\subsection{Failure of rectangular preconditioners}

\begin{lemma}\label{lemma:large-row-fraction}
Let $r := \dim \ker \Theta$. Then for any $S \in \RR^{n \times s}$, $$\Pr_{i \in [n]}\left[\norm{S^T e_i}_1 \geq \frac{1}{\sqrt{\beta\epsilon n}}\right] > \frac{r-2m}{n}$$ where $\beta = \beta_{\Thetat^{-1},S,k,m}$.
\end{lemma}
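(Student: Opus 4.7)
The plan is to prove this by contradiction via dimension counting, followed by a simple compactness step that upgrades a pointwise strict inequality into a uniform one. Define $U := \{i \in [n] : \norm{S^T e_i}_1 < 1/\sqrt{\beta \epsilon n}\}$ and suppose for contradiction that the conclusion fails, i.e.\ $|U| \geq n - r + 2m$.

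First, I would consider the subspace
\[ V := \ker(\Theta) \cap \vspan\{e_i : i \in U\} \subseteq \RR^n. \]
By standard dimension counting, $\dim V \geq \dim \ker\Theta + |U| - n \geq 2m$.

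Second, for any nonzero $w \in V$, I would evaluate both sides of the defining inequality of $W_{\Thetat^{-1}, S, \cdot}$. Since $w \in \ker \Theta$, we have $\Thetat w = \epsilon w$, so $w^T \Thetat^{-1} w = \norm{w}_2^2 / \epsilon$. Meanwhile, $\supp(w) \subseteq U$, and every $i \in U$ satisfies $\norm{S^T e_i}_1 < 1/\sqrt{\beta \epsilon n}$, so by the triangle inequality and Cauchy--Schwarz,
\[ \norm{S^T w}_1 \leq \sum_{i \in U} |w_i| \, \norm{S^T e_i}_1 < \frac{\norm{w}_1}{\sqrt{\beta \epsilon n}} \leq \frac{\norm{w}_2}{\sqrt{\beta \epsilon}}, \]
where the first strict inequality uses $w \neq 0$. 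Squaring and rearranging gives $w^T \Thetat^{-1} w > \beta \norm{S^T w}_1^2$ strictly, for every nonzero $w \in V$.

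Finally, because $\ker(S^T) = \{0\}$ by the definition of a preconditioner, $\norm{S^T w}_1 > 0$ for all nonzero $w$, so the ratio $w \mapsto w^T \Thetat^{-1} w / \norm{S^T w}_1^2$ is well-defined, continuous, and positive on the unit sphere of $V$. By compactness, it attains its minimum $\beta'' > \beta$, so $V \subseteq W_{\Thetat^{-1}, S, \beta''}$ is a subspace of dimension at least $2m$. This contradicts the supremum definition $\beta = \beta^{(1)}_{\Thetat^{-1}, S, k, m}$. The only delicate point is this last compactness step; without it, one only obtains $V \subseteq W_{\Thetat^{-1}, S, \beta}$, which is consistent with the supremum definition and does not yet produce a contradiction.
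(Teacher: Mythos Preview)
Your proof is correct and follows essentially the same approach as the paper: argue by contradiction, intersect $\ker\Theta$ with the coordinate subspace indexed by the ``small'' indices to get a subspace of dimension at least $2m$, and show this subspace witnesses a value strictly larger than $\beta$ in the supremum defining $\beta^{(1)}$. The only difference is in how the strictness is obtained: the paper perturbs $\beta$ to some $\beta' > \beta$ at the level of the index set (using that finitely many strict inequalities $\norm{S^T e_i}_1 < 1/\sqrt{\beta\epsilon n}$ remain strict after slightly increasing $\beta$), whereas you obtain a uniform $\beta'' > \beta$ via compactness of the unit sphere of $V$ together with $\ker(S^T)=\{0\}$. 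Both devices are standard and yield the same conclusion.
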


\begin{proof}
Suppose not. Then $\norm{S^T e_i} < 1/\sqrt{\beta\epsilon n}$ for at least $n+2m-r$ choices of $i \in [n]$. Pick some $\beta' > \beta$ which is sufficiently close to $\beta$ that $I = \{i \in [n]: \norm{S^T e_i} < 1/\sqrt{\beta'\epsilon n}\}$ also satisfies $|I| \geq n+2m-r$. Let $V = \vspan\{e_i: i \in I\}$. Then $\dim V = |I| \geq n + 2m - r$. Define $W = V \cap \ker \Theta$. Then $\dim W \geq \dim V - (n-r) \geq 2m$. Moreover, for any $w \in W$, we have
\begin{align*}
w^T \Thetat^{-1} w
&\geq \epsilon^{-1} \norm{\Proj_{\ker\Theta} w}_2^2 \\
&= \epsilon^{-1} \norm{w}_2^2
\end{align*}
whereas
$$\norm{S^T w}_1 \leq \sum_{i \in I} |w_i|\cdot \norm{S^T e_i}_1 \leq \frac{1}{\sqrt{\beta'\epsilon n}} \norm{w}_1 \leq \frac{\norm{w}_2}{\sqrt{\beta'\epsilon}}.$$
As a consequence, $w^T\Thetat^{-1}w \geq \beta'\norm{S^T w}_1^2.$ Therefore $W \subseteq W_{\Thetat^{-1},S,\beta'}$, contradicting the definition of $\beta_{\Thetat^{-1},S,k,m}$.
\end{proof}

\begin{lemma}\label{lemma:compatible-preconditioner-failure}
There are constants $c,c_m>0$ so that the following holds. Let $M \in \RR^{n-r \times n}$ and $\epsilon > 0$. Define $\Theta = M^T M$ and $\Thetat = \Theta + \epsilon I$. Let $\lambda$ be the smallest nonzero eigenvalue of $\Theta$. Let $k,m,s,\alpha,\tau,\eta,b,b' > 0$ and suppose that $M$ satisfies $(b,b',\eta,\tau)$-erasure-robustness, and $\ker(M)$ is $(1/(12\sqrt{n}), 1/2\sqrt{n}, b')$-quantitatively-dense. 
Suppose $r > 2m$, $m \geq c_m\log n$, $k > (n/\tau)\log(sn)$ and $\epsilon < c\lambda/n^2$. Let $\mathcal{D}_k$ be the distribution of $k$-sparse signals with uniformly random support in $\RR^n$, and Gaussian entries on the support. For any preconditioner $S \in \RR^{n \times s}$ satisfying \begin{equation} \Pr_{i \in [n-r]}\left[M_i^T \Thetat^{-1} M_i < \alpha \norm{S^T M_i}_1^2\right] \leq \frac{b}{n-r}\label{eq:alpha-bad-eqn-bound-2}\end{equation}
and \begin{equation} 128n^{7/2}k\eta \norm{M}_F \sqrt{\frac{\epsilon\beta^{(1)}_{\Thetat^{-1},S,k,m}}{\alpha\lambda}} \leq 1,\label{eq:small-epsilon-bound}\end{equation}
we have the following: with probability at least $$1 - 2/n - kb'/n - \exp(-k(r-2m)/n)$$ over true signals drawn from $\mathcal{D}_k$, it holds that $S$-preconditioned Lasso fails with probability at least $1 - (4m)/(3r) - \exp(-\Omega(m))$ over independent samples $X_1,\dots,X_m \sim N(0,\Thetat^{-1})$.
\end{lemma}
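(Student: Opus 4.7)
The plan is to show that the Karush--Kuhn--Tucker (KKT) optimality conditions for the $S$-preconditioned Lasso fail at $w^*$: recall $w^*$ is an optimum iff there exist $z \in \partial \|\cdot\|_1(S^T w^*)$ and $a \in \RR^m$ with $Sz = X^T a$, where $z_j = \sign((S^T w^*)_j)$ on $U := \supp(S^T w^*)$ and $|z_j| \leq 1$ otherwise. I will establish that $(Sz - X^T a)_{C^c} \neq 0$ for all valid $z, a$ and an appropriate set $C$, which suffices for non-optimality. The setup combines three structural results: Lemma~\ref{lem:C-exists} and Lemma~\ref{lemma:D-size} furnish a set $D \subseteq [s]$ of ``dense columns'' (with $\norm{(S^T)_j}_0 \geq \tau$), with $|D| > n - r - b' - 2m$, together with $C \subseteq [n]$ of size at most $b'$ such that $\epsilon_{\mathrm{str}} := \sum_{j \in D^c} \norm{(S^T)_{j, C^c}}_2 \leq n^{3/2} \eta \norm{M}_F / \sqrt{\alpha \lambda}$; Lemma~\ref{lemma:large-row-fraction} produces a set $L \subseteq [n]$ with $|L| \geq r - 2m$ of ``large coordinates'' on which $\norm{S^T e_i}_1 \geq 1/\sqrt{\beta\epsilon n}$, where $\beta := \beta^{(1)}_{\Thetat^{-1}, S, k, m}$.

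Next I identify the good events on $w^* \sim \mathcal{D}_k$: (A) $(S^T w^*)_j \neq 0$ for every $j \in D$, which ensures $z_D = \sign((S^T w^*)_D)$ is determined---by $|\supp((S^T)_j)| \geq \tau$, $k \geq (n/\tau) \log(sn)$, and a union bound, this fails with probability at most $1/n$; (B) $\supp(w^*) \cap C = \emptyset$, failing with probability $\leq kb'/n$; (C) $\supp(w^*) \cap L \neq \emptyset$, failing with probability $\leq \exp(-k(r-2m)/n)$. On (B) and (C), fix $i^* \in \supp(w^*) \cap L$ and set $q \in \RR^s$ by $q_j := \sign(S_{i^* j})$ for $j \in D$ and $q_j := 0$ otherwise. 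Then $|(Sq)_{i^*}| = \norm{(S^T e_{i^*})_D}_1 \geq \norm{S^T e_{i^*}}_1 - \sum_{j \in D^c} |S_{i^* j}|$; since $i^* \in C^c$ by (B), the correction is bounded by $\epsilon_{\mathrm{str}}$, and the lemma's condition forces $\epsilon_{\mathrm{str}} \leq 1/(2\sqrt{\beta\epsilon n})$, so $|(Sq)_{i^*}| \geq 1/(2\sqrt{\beta\epsilon n})$. Conditional on $(w^*_i)_{i \neq i^*}$, $\langle Sq, w^*\rangle$ is Gaussian with standard deviation $|(Sq)_{i^*}|$; Gaussian anti-concentration and $\chi^2$-concentration deliver event (D): $|\langle Sq, w^*\rangle| \geq |(Sq)_{i^*}|/n$ and $\norm{w^*}_2 \leq O(\sqrt{k})$, with combined failure probability $O(1/n)$ (using $k \geq \log n$, implied by the sparsity hypothesis).

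Define $u := \sum_{j \in D} \sign((S^T w^*)_j) (S^T)_j$ (well-defined on (A)); then $\langle u, w^*\rangle = \norm{(S^T w^*)_D}_1 \geq |\langle Sq, w^*\rangle| \geq 1/(2 n^{3/2} \sqrt{\beta\epsilon})$ on the above events. Since $\supp(w^*) \subseteq C^c$ by (B), Cauchy--Schwarz yields $\norm{u_{C^c}}_2 \geq \langle u, w^*\rangle/\norm{w^*}_2 = \Omega(1/(n^{3/2} \sqrt{k\beta\epsilon}))$. Apply the projection lemma (Lemma~\ref{lemma:projection-restricted-coordinates}) to $u$---which depends only on $w^*$, hence is independent of $X$---with $P = C^c$ and quantitative-density parameters $(\delta, \eta_{\mathrm{proj}}, \tau_{\mathrm{proj}}) = (1/(12\sqrt{n}), 1/(2\sqrt{n}), b')$; the hypotheses $\epsilon < c\lambda/n^2$, $r > 2m$, $m \geq c_m \log n$, and quantitative density of $\ker(M) = \ker(\Theta)$ are all in force. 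Conditional on $w^*$, with probability $\geq 1 - (4m)/(3r) - \exp(-\Omega(m))$ over $X$, $\norm{u_{C^c} - (X^T a)_{C^c}}_2 \geq \norm{u_{C^c}}_2/(64\sqrt{n})$ for every $a$.

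Finally, for any valid subgradient $z$, decompose $(Sz)_{C^c} = u_{C^c} + v_{C^c}$ where $v := \sum_{j \in D^c} z_j (S^T)_j$; using $|z_j| \leq 1$ and the structural bound, $\norm{v_{C^c}}_2 \leq \epsilon_{\mathrm{str}}$. Hence $\norm{(Sz - X^T a)_{C^c}}_2 \geq \norm{u_{C^c}}_2/(64\sqrt{n}) - \epsilon_{\mathrm{str}}$. The hypothesis $128 n^{7/2} k \eta \norm{M}_F \sqrt{\epsilon\beta/(\alpha\lambda)} \leq 1$ rearranges to $\epsilon_{\mathrm{str}} \leq 1/(128 n^2 k \sqrt{\beta\epsilon})$, so the lower bound on $\norm{u_{C^c}}_2$ makes the first term exceed the second by a factor of $\Omega(\sqrt{k})$; thus $(Sz - X^T a)_{C^c} \neq 0$, KKT fails, and $w^*$ is not optimal. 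The principal technical obstacle is the lower bound on $\norm{u_{C^c}}_2$: the sign pattern $\sigma := \sign((S^T w^*)_D)$ defining $u = S_{\cdot, D} \sigma$ depends on $w^*$ in a complicated, non-uniform way, and extracting a useful lower bound requires pairing the deterministic bound $|(Sq)_{i^*}| \geq 1/(2\sqrt{\beta\epsilon n})$ (from Lemma~\ref{lemma:large-row-fraction} combined with the structural smallness of columns on $C^c$) with Gaussian anti-concentration for $\langle Sq, w^*\rangle$, and the Cauchy--Schwarz reduction enabled by $\supp(w^*) \subseteq C^c$.
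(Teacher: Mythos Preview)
Your argument is correct and uses the same structural ingredients as the paper's proof: the sets $C,D$ from Lemma~\ref{lem:C-exists}, the large-coordinate set from Lemma~\ref{lemma:large-row-fraction}, the Cauchy--Schwarz step exploiting $\supp(w^*)\subseteq C^c$, and the projection lemma (Lemma~\ref{lemma:projection-restricted-coordinates}) applied to a vector depending only on $w^*$. The mention of Lemma~\ref{lemma:D-size} is harmless but unnecessary; you never use the bound on $|D|$.

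Where you differ from the paper is only in the endgame. The paper fixes the single vector $z=\sign(S^Tw^*)$, projects $(Sz)_{C^c}$ onto $\ker X_{[m],C^c}$ to obtain an explicit feasible direction $d$, and verifies $\|(S^Td)_{U^c}\|_1<\langle S^Td,z\rangle$ directly. You instead show dual infeasibility: for \emph{every} subgradient $z$ and every $a$, one has $(Sz-X^Ta)_{C^c}\neq 0$, hence KKT cannot hold. This forces you to split $Sz=u+v$ into the $D$-part $u$ (fixed by event (A)) and the $D^c$-part $v$ (bounded via $\epsilon_{\mathrm{str}}$), and then apply the projection lemma to $u$ rather than to $Sz$. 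The two perspectives are dual to each other and the arithmetic matches; the slack of $\sqrt{k}$ you note at the end is exactly the slack in the hypothesis (the paper's condition carries a factor $k$ where $\sqrt{k}$ would suffice). A second minor difference is that you obtain the lower bound on $\|(S^Tw^*)_D\|_1$ via anti-concentration of $\langle Sq,w^*\rangle$ in a single Gaussian coordinate, whereas the paper invokes Lemma~\ref{lemma:random-sum-to-max} to lower-bound $\|S^Tw^*\|_1$ directly; both yield the same $\Omega(1/(n^{3/2}\sqrt{\beta\epsilon}))$ bound.
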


\begin{proof}
Let $S : n \times s$ be an arbitrary preconditioning matrix satisfying (\ref{eq:alpha-bad-eqn-bound-2}) and (\ref{eq:small-epsilon-bound}). By (\ref{eq:alpha-bad-eqn-bound-2}), we can apply Lemma~\ref{lem:C-exists} to define sets $C,D$ as functions of $S$, with the following properties:
\begin{itemize}
    \item For every $i \in D$, the column $v = (S^T)_i \in \RR^n$ satisfies $\norm{v}_0 \geq \tau$,
    \item The submatrix $S_{C^cD^c}$ satisfies
    $$\sum_{j \in D^c} \norm{S_{C^c j}}_2 \leq \frac{n^{3/2}\eta\norm{M}_F}{\sqrt{\lambda\alpha}}$$
    \item $|C| \leq b'$.
\end{itemize}

Draw $w^* \sim \mathcal{D}_k$. Let $K = \supp(w^*)$ and let $U = \supp(S^T w^*)$. Recall that for every $i \in D$, the $i$th row of $S^T$ (the vector $(S^T)_i \in \mathbb{R}^n$) has at least $\tau$ nonzero entries. Therefore, for a particular choice of $i \in D$, the probability that $(S^T w^*)_i = 0$ is at most the probability that a uniformly random set of $k$ elements from $[n]$ misses all $\tau$ elements of the support of $(S^T)_i$, which is at most 
\[ (1 - \tau/n)(1 - \tau/(n - 1)) \cdots (1 - \tau/(n - k)) \le e^{-k\tau/n} \le 1/(sn). \]
Hence by the union bound over all $i \in D$, recalling that $D \subset [s]$, we have that $D \subseteq U$ with probability at least $1 - 1/n$.

Similarly, because $|C| \le b'$, with probability at least $1 - kb'/n$ it holds that $K \subseteq C^c$. By Lemma~\ref{lemma:large-row-fraction}, with probability at least $1 - (1 - (r-2m)/n)^k \geq 1 - e^{-k(r-2m)/n}$ it holds that $\norm{S_{j^*}}_1 \geq 1/\sqrt{\beta\epsilon n}$ for some $j^* \in K$. Now conditioned on $K$, observe that $(w^*)_K$ has independent $N(0,1)$ entries. So by Lemma~\ref{lemma:random-sum-to-max}, $$\Pr[\norm{S^T w^*}_1 \geq \norm{S_{j^*}}_1/n] \geq 1-1/n.$$
Thus, it follows that $\norm{S^T w^*}_1 \geq 1/(n\sqrt{\beta\epsilon n})$ with probability at least $1-1/n-\exp(-k(r-2m)/n)$ over $w^*$. Moreover $\norm{w^*}_2 \leq 2\sqrt{k}$ with probability at least $1 - \exp(-\Omega(k))$. Assume for the rest of the proof that all of the above events (on $w^*$) occur: $D \subseteq U$, $K \subseteq C^c$, $\norm{S^Tw^*}_1 \geq 1/(n\sqrt{\beta\epsilon n})$, and $\norm{w^*}_2 \leq 2\sqrt{k}$. We have shown that these together occur with probability at least $1 - 2/n - kb'/n - \exp(-k(r-2m)/n)$, and in the rest of the proof we show that under these events, $S$-preconditioned Lasso fails with probability at least $1-(4m)/(3r) - \exp(-\Omega(m))$ over samples $X_1,\dots,X_m$.

Let $z = \sign(S^T w^*)$. Since $\supp(w^*) = K \subseteq C^c$, we have by Cauchy-Schwarz that
\begin{equation}\norm{(Sz)_{C^c}}_2 \geq \frac{\langle w^*, Sz\rangle}{\norm{w^*}_2} = \frac{\langle S^T w^*, z\rangle}{\norm{w^*}_2} = \frac{\norm{S^T w^*}_1}{\norm{w^*}_2} \geq \frac{1}{2n\sqrt{\beta\epsilon nk}}\label{eq:sz-norm-lb}\end{equation}
where the last inequality uses the above bounds on $\norm{S^T w^*}_1$ and $\norm{w^*}_2$. Define $$d = \argmin_{x \in \ker X_{[m],C^c}} \norm{x - (Sz)_{C^c}}_2,$$ implicitly zero-extending $d$ from $C^c$ to $[n]$. Then $Xd = X_{[m],C^c}d_{C^c} = 0$ by construction, and moreover
\begin{align}
\norm{(S^T d)_{U^c}}_1 
&\leq \norm{(S^T d)_{D^c}}_1 \tag{$D \subseteq U$} \\
&= \sum_{j \in D^c} |\langle (S^T)_{jC^c}, d_{C^c}\rangle| \tag{$\supp(d) \subseteq C^c$} \\
&\leq \norm{d}_2 \sum_{j \in D^c} \norm{S_{C^c j}}_2 \tag{Cauchy-Schwarz} \\
&\leq \norm{d}_2 \cdot \frac{n^{3/2}\eta\norm{M}_F}{\sqrt{\lambda\alpha}}. \nonumber
\end{align}
On the other hand, $$\norm{d}_2^2 = \langle S^T d, z \rangle = \norm{\Proj_{\ker X_{[m],C^c}} (Sz)_{C^c}}_2^2.$$ 
We now apply Lemma~\ref{lemma:projection-restricted-coordinates} to vector $Sz$ and set $C^c$, using that $\ker(M)$ is $(1/(12\sqrt{n}),1/(2\sqrt{n}),b')$-quantitatively dense in conjunction with the bounds $\epsilon \leq c\lambda/n^2$, $m \geq c_m\log n$, $r > 2m$, and $|C| \leq b'$. By this lemma and by (\ref{eq:sz-norm-lb}), with probability at least $1 - (4m)/(3r) - \exp(-\Omega(m))$, we can lower bound the norm of the projection to get $$\norm{d}_2 \geq \frac{1}{64\sqrt{n}}\norm{(Sz)_{C^c}}_2 \geq \frac{1}{128n^2\sqrt{\beta\epsilon k}}.$$
As a result, so long as $n^{3/2}\eta\norm{M}_F/\sqrt{\lambda\alpha} < 1/(128 n^2 \sqrt{\beta\epsilon k})$, which holds by (\ref{eq:small-epsilon-bound}), we have that $\norm{(S^T d)_{U^c}}_1 < \langle S^T d, z \rangle$ so the preconditioned Lasso fails on $w^*$.
\end{proof}

\begin{theorem}\label{theorem:all-preconditioner-failure}
There are constants $c,c_m>0$ so that the following holds. Let $M \in \RR^{n-r\times n}$, and define $\Theta = M^T M$. Let $k,k',m,\tau,b,b',\eta,t > 0$. Let $\lambda$ be the smallest non-zero eigenvalue of $\Theta$. Suppose that $M$ satisfies $(b,b',\eta,\tau)$-erasure-robustness, and that $\ker(M)$ is $(1/(12\sqrt{n}),1/(2\sqrt{n}),b')$-quantitatively-dense.
Let $\epsilon > 0$ and define $\Thetat = \Theta + \epsilon I$. Suppose that $544n^{15/2}k\eta\norm{M}_F \sqrt{\epsilon/\lambda} \leq 1$, that $r > 2m$, that $m \geq c_m\log n$, and that $\epsilon < c\lambda/n^2$. And suppose that the rows of $M$ are $k'$-sparse. Then there is a distribution $\mathcal{D}$ over $\max(k't,k)$-sparse signals such that for any preconditioner $S \in \RR^{n\times s}$ with $(n/\tau)\log(sn) < k$, with probability at least $$\frac{1}{2}\min\left(1 - e^{-bt/(n-r)}-\frac{1}{n}, 1 - \frac{2}{n} - \frac{kb'}{n} - e^{-k(r-2m)/n}\right)$$ over $w^* \sim \mathcal{D}$, the $S$-preconditioned Lasso fails to recover $w^*$ with probability at least $1 - (4m)/(3r) - \exp(-\Omega(m))$ over the samples $X_1,\dots,X_m$.
\end{theorem}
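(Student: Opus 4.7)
The plan is to take $\mathcal{D}$ to be an equal mixture of two distributions, one designed to defeat ``incompatible'' preconditioners and one designed to defeat ``compatible'' ones, where the split is governed by exactly the hypothesis (\ref{eq:alpha-bad-eqn-bound-2}) of Lemma~\ref{lemma:compatible-preconditioner-failure}. Specifically, let $\mathcal{D}_1$ be the ``sum of $t$ random rows of $M$ plus an infinitesimal random sparse perturbation'' distribution constructed in the proof of Theorem~\ref{theorem:all-invertible-failure}, which is supported on signals that are $(k't + k)$-sparse, and let $\mathcal{D}_2 = \mathcal{D}_k$ be the uniformly random $k$-sparse distribution with standard Gaussian entries from Lemma~\ref{lemma:compatible-preconditioner-failure}. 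Define $\mathcal{D}$ to draw from $\mathcal{D}_1$ with probability $1/2$ and from $\mathcal{D}_2$ with probability $1/2$; the resulting signals are $\max(k't,k)$-sparse as required.

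Now fix an arbitrary preconditioner $S \in \RR^{n \times s}$ and set $\alpha = \beta^{(1)}_{\Thetat^{-1}, S, k, m}/(C n^A)$ for constants $C, A$ chosen small enough that the hypothesis $544 n^{15/2} k \eta \norm{M}_F \sqrt{\epsilon/\lambda} \le 1$ implies (\ref{eq:small-epsilon-bound}), and simultaneously large enough that an $\alpha$-violation $\norm{S^T M_i}_1^2 > M_i^T \Thetat^{-1} M_i/\alpha$ feeds Theorem~\ref{theorem:ill-conditioned-lasso-failure-l1} via Lemma~\ref{lemma:random-sum-to-max}. Split into two cases depending on whether the $\alpha$-bad row fraction $\Pr_i[M_i^T \Thetat^{-1} M_i < \alpha \norm{S^T M_i}_1^2]$ is above or below $b/(n-r)$.

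In the incompatible case, condition on $w^* \sim \mathcal{D}_1$ (which happens with probability $1/2$). Exactly as in Case~I of the proof of Theorem~\ref{theorem:all-invertible-failure}: with probability $1 - e^{-bt/(n-r)}$ one of the chosen rows $M_{R_i}$ is $\alpha$-violating, Lemma~\ref{lemma:random-sum-to-max} then gives $\norm{S^T w^*}_1 \geq 1/(n^{O(1)}\sqrt{\alpha})$ with probability $1 - 1/n$, and the triangle-inequality bound on $\sqrt{(w^*)^T \Thetat^{-1} w^*}$ combined with the choice of $\alpha$ yields $(w^*)^T \Thetat^{-1} w^* < (\beta^{(1)}/18) \norm{S^T w^*}_1^2$, so Theorem~\ref{theorem:ill-conditioned-lasso-failure-l1} forces failure with sample-probability $1 - \exp(-\Omega(m))$. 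In the compatible case, condition on $w^* \sim \mathcal{D}_2$ (probability $1/2$); the hypothesis of Lemma~\ref{lemma:compatible-preconditioner-failure} holds directly from our case assumption and from the verification that (\ref{eq:small-epsilon-bound}) follows from the theorem's hypothesis together with the choice of $\alpha$, so that lemma yields failure with probability $1 - 2/n - kb'/n - e^{-k(r-2m)/n}$ over $w^*$ and $1 - (4m)/(3r) - \exp(-\Omega(m))$ over the samples. Combining the $1/2$ factor from the mixture with the $\min$ over the two cases' $w^*$-probabilities gives the claimed bound.

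The main obstacle, and the only delicate point, is the joint calibration of $\alpha$: it must be small enough (compared with $\beta^{(1)}$) that the Case~I argument crosses the factor-$18$ threshold of Theorem~\ref{theorem:ill-conditioned-lasso-failure-l1}, yet not so small that (\ref{eq:small-epsilon-bound}) blows up — since the left side of (\ref{eq:small-epsilon-bound}) scales like $\sqrt{\beta^{(1)}/\alpha}$, setting $\alpha = \beta^{(1)}/(Cn^A)$ translates the constraint into a polynomial loss in $n$ that must fit inside the $544 n^{15/2}$ budget in the theorem's hypothesis. Once this tradeoff is fixed, everything else is direct application of the previously established lemmas, and the invertibility-dependent dimension-counting of Theorem~\ref{theorem:all-invertible-failure} is cleanly replaced by Lemma~\ref{lemma:compatible-preconditioner-failure}, which uses the generalized projection lemma (Lemma~\ref{lemma:projection-restricted-coordinates}) instead.
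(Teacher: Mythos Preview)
Your plan is essentially the paper's proof: equal mixture of a ``row-sum'' distribution and $\mathcal{D}_k$, split on the $\alpha$-bad row fraction with $\alpha = \beta^{(1)}/(Cn^A)$ (the paper takes $C = 18$, $A = 4$), Case~I via Lemma~\ref{lemma:random-sum-to-max} and Theorem~\ref{theorem:ill-conditioned-lasso-failure-l1}, Case~II via Lemma~\ref{lemma:compatible-preconditioner-failure}.

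One slip to fix: you take $\mathcal{D}_1$ to be the distribution from Theorem~\ref{theorem:all-invertible-failure} \emph{including} the infinitesimal $k$-sparse perturbation $Z_0\sqrt{\epsilon}\tilde{w}$, correctly note this makes the support $(k't+k)$-sparse, and then assert the mixture is $\max(k't,k)$-sparse --- these are inconsistent. The perturbation was only used in the invertible setting for the dimension-counting argument of Case~II there; here that role is played entirely by Lemma~\ref{lemma:compatible-preconditioner-failure} acting on the separate component $\mathcal{D}_k$, so the perturbation is dead weight. The paper accordingly drops it and defines $\mathcal{D}_M$ as the pure row-sum $\sum_i Z_i M_{R_i}/\sqrt{M_{R_i}^T\Thetat^{-1}M_{R_i}}$, which is $k't$-sparse and gives the claimed $\max(k't,k)$ bound for the mixture.
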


\begin{proof}
Let $\mathcal{D}_M$ be the signal distribution where we draw independent and uniformly random indices $R_1,\dots,R_t \in [n-r]$ as well as independent $Z_1,\dots,Z_t \sim \text{Unif}([-1,1])$, and set the signal to be $$w^* = \sum_{i=1}^t \frac{Z_i M_{R_i}}{\sqrt{M_{R_i}^T \Thetat^{-1} M_{R_i}}}.$$ We define $\mathcal{D}$ to be the mixture which with probability $1/2$ draws a signal from $\mathcal{D}_M$, and with probability $1/2$ draws from the distribution $\mathcal{D}_k$ defined in Lemma~\ref{lemma:compatible-preconditioner-failure} (note that in the former case, the signal is $k't$-sparse, and in the latter case, it is $k$-sparse). 

Pick any $S \in \RR^{n\times s}$ with $(n/\tau)\log(sn) < k$. Define $\alpha = \beta^{(1)}_{\Thetat^{-1},S,k,m}/(18n^4)$. We distinguish two cases. 

\paragraph{Case I: incompatible preconditioner.} On the one hand, suppose that $$\Pr_{i \in [n-r]}\left[M_i\Thetat^{-1}M_i < \alpha \norm{S^T M_i}_1^2\right] \geq \frac{b}{n-r}.$$ Then with probability at least $$\frac{1}{2}\left(1 - (1 - b/(n-r))^t\right) \geq \frac{1}{2}\left(1 - e^{-tb/(n-r)}\right)$$ over the row indices $R_1,\dots,R_t$, there is some $R_i$ with $$\norm{S^T M_{R_i}}_1^2 > \frac{1}{\alpha} M_{R_i}^T\Thetat^{-1} M_{R_i}.$$ 
In this event, we have $$\max_{i \in [t]} \frac{\norm{S^T M_{R_i}}_1}{\sqrt{M_{R_i}^T \Thetat^{-1} M_{R_i}}} > \frac{1}{\sqrt{\alpha}},$$
so by Lemma~\ref{lemma:random-sum-to-max}, it holds with probability at least $1-1/n$ over $Z_1,\dots,Z_t$ that $\norm{S^T w^*}_1 \geq \frac{1}{n\sqrt{\alpha}}.$ But by the triangle inequality, we have $\sqrt{(w^*)^T \Thetat^{-1} w^*} \leq t \leq n$. Thus, $$\Pr_{w^* \sim \mathcal{D}}\left[\norm{S^T w^*}_1^2 > \frac{1}{n^4\alpha}(w^*)^T \Thetat^{-1} w^*\right] \geq \frac{1}{2}\left(1 - e^{-tb/(n-r)}\right)\left(1 - \frac{1}{n}\right).$$

Moreover, for such $w^*$, by choice of $\alpha$ and by Theorem~\ref{theorem:ill-conditioned-lasso-failure-l1}, the $S$-preconditioned Lasso recovers $w^*$ with probability at most $\exp(-\Omega(m))$, from $m$ independent samples $X_1,\dots,X_m \sim N(0,\Thetat^{-1})$.

\paragraph{Case II: compatible preconditioner.} On the other hand, suppose that $$\Pr_{i \in [n-r]}\left[M_i\Thetat^{-1}M_i < \alpha \norm{S^T M_i}_1^2\right] \leq \frac{b}{n-r}.$$
This is precisely the condition (\ref{eq:alpha-bad-eqn-bound-2}) in Lemma~\ref{lemma:compatible-preconditioner-failure}. Moreover, (\ref{eq:small-epsilon-bound}) is satisfied because we defined $\alpha$ so that $\beta^{(1)}_{\Thetat^{-1},S,k,m}/\alpha = 18n^4$, and because of the bound assumed in the theorem statement. Thus, we can invoke Lemma~\ref{lemma:compatible-preconditioner-failure}, which states that for $w^* \sim \mathcal{D}_k$, with probability at least $1 - 2/n - kb'/n - \exp(-k(r-2m)/n)$ over the signal, the $S$-preconditioned Lasso fails with the desired probability. Thus, the same holds for $w^* \sim \mathcal{D}$ up to a factor of $2$.
\end{proof}


Finally, we instantiate the above theorem with the matrix constructed in Theorem~\ref{theorem:existence}.

\begin{theorem}\label{theorem:main}
Let $n \in \NN$ be sufficiently large. There is a matrix $\Sigma \in \RR^{n \times n}$ with condition number $\poly(n)$ such that the following holds. Let $k \in \NN$ with $k \geq \log^8 n$. There is a distribution $\mathcal{D}$ over $k\log^4 n$-sparse signals such that for any positive integer $m \leq n/5$ and any preconditioner $S \in \RR^{n \times s}$ with $s < \exp(ck/\log^6 n - \log n)$, with probability at least $\frac{1}{2}-O(1/\log n)$ over $w^* \sim \mathcal{D}$, the $S$-preconditioned Lasso recovers $w^*$ with probability at most $8m/3n + \exp(-\Omega(m))$ from $m$ independent samples $X_1,\dots,X_m \sim N(0,\Sigma)$ with noiseless responses $Y_i = \langle X_i, w^*\rangle$.
\end{theorem}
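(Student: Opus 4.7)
The plan is to instantiate Theorem~\ref{theorem:all-preconditioner-failure} with the matrix $M$ produced by Theorem~\ref{theorem:existence}, setting $\Sigma = \Thetat^{-1}$ for $\Thetat = M^TM + \epsilon I$, and then choosing the erasure budget $b$, the number $t$ of row-terms used in the hard distribution $\mathcal{D}_M$, and the regularization $\epsilon$ so that every hypothesis of that theorem is met and the asymptotic bounds of Theorem~\ref{theorem:main} are achieved. The structure parallels the proof of Theorem~\ref{theorem:main-invertible}, with the added subtlety that in the rectangular case, Theorem~\ref{theorem:all-preconditioner-failure} demands both quantitative density of $\ker M$ and a polynomially small $\epsilon$ in order to invoke the projection lemma.

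First I would read off the properties of $M$ from Theorem~\ref{theorem:existence}: row-sparsity $k' = O(\log^2 n)$, erasure-robustness parameters $\tau = \Omega(n/\log^6 n)$, $\eta = O(n^6 \log^8 n)$, tolerating any budget $b$ up to $b_{\max} = \Omega(n/\log^8 n)$ with $b' = 2b$ unidentifiable coordinates, $\norm{M}_F = O(\sqrt{n}\log n)$, smallest nonzero eigenvalue $\lambda = \Omega(n^{-5/2})$, and $r := \dim\ker\Theta \ge n/2$ (with $n-r = n/2$ generically). Claim~(2) of Theorem~\ref{theorem:existence} directly yields the $(1/(12\sqrt{n}), 1/(2\sqrt{n}), b')$-quantitative density hypothesis for any $b' \le \tau$. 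I would then set the free parameters as follows: take $b = \Theta(n/(k\log n))$, which is at most $O(n/\log^9 n) \le b_{\max}$ since $k \ge \log^8 n$ and yields $kb'/n = O(1/\log n)$; take $t = \Theta(k\log^2 n)$, which produces signal sparsity $k't = O(k\log^4 n)$ matching the theorem and ensures $bt/(n-r) = \Omega(\log n)$ so that $e^{-bt/(n-r)} = O(1/\log n)$; and take $\epsilon = 1/n^{C}$ for a sufficiently large absolute constant $C$ so that both $544 n^{15/2} k \eta \norm{M}_F \sqrt{\epsilon/\lambda} \le 1$ and $\epsilon < c\lambda/n^2$ hold. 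Under these choices $\Sigma = \Thetat^{-1}$ has $\poly(n)$ condition number, since $\norm{\Sigma}_{\mathrm{op}} = 1/\epsilon = \poly(n)$ and $\lambda_{\min}(\Sigma) \ge 1/\lambda_{\max}(\Thetat) \ge 1/\poly(n)$ (using the Frobenius bound on $\Theta$).

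Finally, I would check the remaining hypotheses and read off the conclusion. The preconditioner column bound $(n/\tau)\log(sn) < k$ from Theorem~\ref{theorem:all-preconditioner-failure} rearranges, using $n/\tau = O(\log^6 n)$, to $s < \exp(ck/\log^6 n - \log n)$ for a suitable constant $c$. For $m \le n/5$ and $r \ge n/2$ we have $r - 2m \ge n/10$, so $e^{-k(r-2m)/n} \le e^{-\Omega(k)} = e^{-\Omega(\log^8 n)}$ is negligible, and $r > 2m$ is automatic. Plugging in, both arguments of the inner $\min$ in Theorem~\ref{theorem:all-preconditioner-failure} are $1 - O(1/\log n)$, so the failure probability over $w^* \sim \mathcal{D}$ is at least $\tfrac{1}{2} - O(1/\log n)$, and under that event the recovery probability over the samples is at most $(4m)/(3r) + \exp(-\Omega(m)) \le 8m/(3n) + \exp(-\Omega(m))$. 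The most delicate step is the simultaneous balancing of $b$ and $t$: $b$ must be small enough that $kb'/n = O(1/\log n)$ (controlling how much of the signal support can land in the ``unidentifiable'' set $C$ furnished by the structure lemma), yet $bt/(n-r)$ must be $\Omega(\log\log n)$ (so that among $t$ random rows of $M$ at least one is a genuinely bad signal for $S$), all while keeping $k't = O(k\log^4 n)$. The hypothesis $k \ge \log^8 n$ in the statement is exactly the tight constraint for these three requirements to be simultaneously compatible with the allowed erasure budget $b_{\max} = \Omega(n/\log^8 n)$ provided by Theorem~\ref{theorem:existence}.
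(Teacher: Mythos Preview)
Your proposal is correct and takes essentially the same approach as the paper: instantiate Theorem~\ref{theorem:all-preconditioner-failure} with the matrix from Theorem~\ref{theorem:existence}, choosing $b=\Theta(n/(k\log n))$, $t=\Theta(k\log^2 n)$, and $\epsilon$ polynomially small, then verify each hypothesis and read off the conclusion. Your parameter choices and the verification of the quantitative-density hypothesis via Claim~(2) (using $b'\le\tau$) match the paper's argument, and your added discussion of why $k\ge\log^8 n$ is the tight constraint is a nice elaboration not spelled out in the original.
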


\begin{proof}
Let $\Theta$ be the matrix guaranteed by Theorem~\ref{theorem:existence}. We check the conditions of Theorem~\ref{theorem:all-preconditioner-failure}. Clearly, $r := \dim \ker(\Theta) \geq n/2$. By claim (5), we have $\lambda = \Omega(n^{-5/2})$. By claims (3) and (2), we can take $b = n/(k\log n)$, $b' = 2n/(k\log n)$, $\eta = n^6\log^8 n$, and $\tau = \Omega(n/\log^6 n)$. By claim (4), we have $\norm{M}_F \leq O(\sqrt{n\log n})$. Thus, we can take $\epsilon = \Omega(n^{-22})$. By claim (1), the rows of $M$ are $k'$-sparse with $k' = O(\log^2 n)$. Let $t = k\log^2 n$.

So applying Theorem~\ref{theorem:all-preconditioner-failure}, we get that there is a distribution $\mathcal{D}$ over $k\log^4 n$-sparse signals such that for any preconditioner $S \in \RR^{n \times s}$ with $s < \exp(ck/\log^6 n - \log n)$, with probability at least $$\frac{1}{2} \min\left(1 - \frac{2}{n}, 1 - \frac{2}{n} - \frac{2}{\log n} - e^{-k(n/2 - 2m)/n}\right)$$ over $w^* \sim \mathcal{D}$, the $S$-preconditioned Lasso recovers $w^*$ with probability at most $8m/(3n) + \exp(-\Omega(m))$ from $m$ samples. So long as $m \leq n/5$, both terms in the minimum are $1 - O(1/\log n)$, yielding the claimed result.
\end{proof}

\bibliographystyle{amsalpha}
\bibliography{bib}

\newcommand{\etalchar}[1]{$^{#1}$}
\providecommand{\bysame}{\leavevmode\hbox to3em{\hrulefill}\thinspace}
\providecommand{\MR}{\relax\ifhmode\unskip\space\fi MR }
\providecommand{\MRhref}[2]{%
  \href{http://www.ams.org/mathscinet-getitem?mr=#1}{#2}
}
\providecommand{\href}[2]{#2}
\begin{thebibliography}{KMOW17}

\bibitem[ABX08]{applebaum2008basing}
Benny Applebaum, Boaz Barak, and David Xiao, \emph{On basing lower-bounds for
  learning on worst-case assumptions}, 49th Annual IEEE Symposium on
  Foundations of Computer Science, IEEE, 2008, pp.~211--220.

\bibitem[AS04]{alon2004probabilistic}
Noga Alon and Joel~H Spencer, \emph{The probabilistic method}, John Wiley \&
  Sons, 2004.

\bibitem[BB20]{brennan2020reducibility}
Matthew Brennan and Guy Bresler, \emph{Reducibility and
  statistical-computational gaps from secret leakage}, Conference on Learning
  Theory, PMLR, 2020, pp.~648--847.

\bibitem[BBH18]{brennan2018reducibility}
Matthew Brennan, Guy Bresler, and Wasim Huleihel, \emph{Reducibility and
  computational lower bounds for problems with planted sparse structure},
  Conference On Learning Theory, PMLR, 2018, pp.~48--166.

\bibitem[BBH19]{brennan2019universality}
\bysame, \emph{Universality of computational lower bounds for submatrix
  detection}, Conference on Learning Theory, PMLR, 2019, pp.~417--468.

\bibitem[BGI{\etalchar{+}}08]{berinde2008combining}
Radu Berinde, Anna~C Gilbert, Piotr Indyk, Howard Karloff, and Martin~J
  Strauss, \emph{Combining geometry and combinatorics: A unified approach to
  sparse signal recovery}, 2008 46th Annual Allerton Conference on
  Communication, Control, and Computing, IEEE, 2008, pp.~798--805.

\bibitem[BHK{\etalchar{+}}19]{barak2019nearly}
Boaz Barak, Samuel Hopkins, Jonathan Kelner, Pravesh~K Kothari, Ankur Moitra,
  and Aaron Potechin, \emph{A nearly tight sum-of-squares lower bound for the
  planted clique problem}, SIAM Journal on Computing \textbf{48} (2019), no.~2,
  687--735.

\bibitem[BI08]{berinde2008sparse}
Radu Berinde and Piotr Indyk, \emph{Sparse recovery using sparse random
  matrices}, preprint (2008).

\bibitem[BR13]{berthet2013complexity}
Quentin Berthet and Philippe Rigollet, \emph{Complexity theoretic lower bounds
  for sparse principal component detection}, Conference on learning theory,
  PMLR, 2013, pp.~1046--1066.

\bibitem[BR21]{basak2021sharp}
Anirban Basak and Mark Rudelson, \emph{Sharp transition of the invertibility of
  the adjacency matrices of sparse random graphs}, Probability Theory and
  Related Fields (2021), 1--76.

\bibitem[BSV13]{bandeira2013partial}
Afonso~S Bandeira, Katya Scheinberg, and Luis~Nunes Vicente, \emph{On partial
  sparse recovery}, arXiv preprint arXiv:1304.2809 (2013).

\bibitem[CRT06]{candes2006stable}
Emmanuel~J Candes, Justin~K Romberg, and Terence Tao, \emph{Stable signal
  recovery from incomplete and inaccurate measurements}, Communications on Pure
  and Applied Mathematics: A Journal Issued by the Courant Institute of
  Mathematical Sciences \textbf{59} (2006), no.~8, 1207--1223.

\bibitem[DH21]{dudeja2021statistical}
Rishabh Dudeja and Daniel Hsu, \emph{Statistical query lower bounds for tensor
  pca}, Journal of Machine Learning Research \textbf{22} (2021), no.~83, 1--51.

\bibitem[DHL{\etalchar{+}}17]{dalalyan2017prediction}
Arnak~S Dalalyan, Mohamed Hebiri, Johannes Lederer, et~al., \emph{On the
  prediction performance of the lasso}, Bernoulli \textbf{23} (2017), no.~1,
  552--581.

\bibitem[DKS17]{diakonikolas2017statistical}
Ilias Diakonikolas, Daniel~M Kane, and Alistair Stewart, \emph{Statistical
  query lower bounds for robust estimation of high-dimensional gaussians and
  gaussian mixtures}, 2017 IEEE 58th Annual Symposium on Foundations of
  Computer Science (FOCS), IEEE, 2017, pp.~73--84.

\bibitem[DKS19]{diakonikolas2019efficient}
Ilias Diakonikolas, Weihao Kong, and Alistair Stewart, \emph{Efficient
  algorithms and lower bounds for robust linear regression}, Proceedings of the
  Thirtieth Annual ACM-SIAM Symposium on Discrete Algorithms, SIAM, 2019,
  pp.~2745--2754.

\bibitem[DLBB09]{davenport2009simple}
Mark~A Davenport, Jason~N Laska, Petros~T Boufounos, and Richard~G Baraniuk,
  \emph{A simple proof that random matrices are democratic}, arXiv preprint
  arXiv:0911.0736 (2009).

\bibitem[FGOS14]{franklin2014optimal}
Matthew Franklin, Ran Gelles, Rafail Ostrovsky, and Leonard~J Schulman,
  \emph{Optimal coding for streaming authentication and interactive
  communication}, IEEE Transactions on Information Theory \textbf{61} (2014),
  no.~1, 133--145.

\bibitem[FS11]{foygelfast}
Rina Foygel and Nathan Srebro, \emph{Fast rate and optimistic rate for
  l1-regularized regression}, Tech. report, Toyota Technological Institute.
  arXiv: 1108.037 v1, 2011.

\bibitem[GGJ{\etalchar{+}}20]{goel2020superpolynomial}
Surbhi Goel, Aravind Gollakota, Zhihan Jin, Sushrut Karmalkar, and Adam
  Klivans, \emph{Superpolynomial lower bounds for learning one-layer neural
  networks using gradient descent}, International Conference on Machine
  Learning, PMLR, 2020, pp.~3587--3596.

\bibitem[GGK20]{goel2020statistical}
Surbhi Goel, Aravind Gollakota, and Adam Klivans, \emph{Statistical-query lower
  bounds via functional gradients}, Advances in Neural Information Processing
  Systems \textbf{33} (2020), 2147--2158.

\bibitem[GI10]{gilbert2010sparse}
Anna Gilbert and Piotr Indyk, \emph{Sparse recovery using sparse matrices},
  Proceedings of the IEEE \textbf{98} (2010), no.~6, 937--947.

\bibitem[GL20]{gupte2020fine}
Aparna~Ajit Gupte and Kerri Lu, \emph{Fine-grained complexity of sparse linear
  regression}.

\bibitem[GV21]{gupte2021fine}
Aparna Gupte and Vinod Vaikuntanathan, \emph{The fine-grained hardness of
  sparse linear regression}, arXiv preprint arXiv:2106.03131 (2021).

\bibitem[GZ17]{david2017high}
David Gamarnik and Ilias Zadik, \emph{High dimensional regression with binary
  coefficients. estimating squared error and a phase transtition}, Conference
  on Learning Theory, PMLR, 2017, pp.~948--953.

\bibitem[HPIM18]{har2018approximate}
Sariel Har-Peled, Piotr Indyk, and Sepideh Mahabadi, \emph{Approximate sparse
  linear regression}, 45th International Colloquium on Automata, Languages, and
  Programming (ICALP 2018), Schloss Dagstuhl-Leibniz-Zentrum fuer Informatik,
  2018.

\bibitem[KKMM20]{kelner2019learning}
Jonathan Kelner, Frederic Koehler, Raghu Meka, and Ankur Moitra, \emph{Learning
  some popular gaussian graphical models without condition number bounds},
  Proceedings of Neural Information Processing Systems (NeurIPS), 2020.

\bibitem[KKMR21]{kelner2021power}
Jonathan Kelner, Frederic Koehler, Raghu Meka, and Dhruv Rohatgi, \emph{On the
  power of preconditioning in sparse linear regression}, 62nd Annual IEEE
  Symposium on Foundations of Computer Science (2021).

\bibitem[KMOW17]{kothari2017sum}
Pravesh~K Kothari, Ryuhei Mori, Ryan O'Donnell, and David Witmer, \emph{Sum of
  squares lower bounds for refuting any csp}, Proceedings of the 49th Annual
  ACM SIGACT Symposium on Theory of Computing, 2017, pp.~132--145.

\bibitem[Lan04]{langberg2004private}
Michael Langberg, \emph{Private codes or succinct random codes that are
  (almost) perfect}, FOCS, vol.~4, 2004, pp.~325--334.

\bibitem[LMSS01]{luby2001efficient}
Michael~G Luby, Michael Mitzenmacher, Mohammad~Amin Shokrollahi, and Daniel~A
  Spielman, \emph{Efficient erasure correcting codes}, IEEE Transactions on
  Information Theory \textbf{47} (2001), no.~2, 569--584.

\bibitem[Lu19]{lu2019strong}
Ran Lu, \emph{On the strong restricted isometry property of bernoulli random
  matrices}, Journal of Approximation Theory \textbf{245} (2019), 1--22.

\bibitem[MPW15]{meka2015sum}
Raghu Meka, Aaron Potechin, and Avi Wigderson, \emph{Sum-of-squares lower
  bounds for planted clique}, Proceedings of the forty-seventh annual ACM
  symposium on Theory of computing, 2015, pp.~87--96.

\bibitem[MW15]{ma2015sum}
Tengyu Ma and Avi Wigderson, \emph{Sum-of-squares lower bounds for sparse pca},
  Advances in Neural Information Processing Systems \textbf{28} (2015).

\bibitem[Nat95]{natarajan1995sparse}
Balas~Kausik Natarajan, \emph{Sparse approximate solutions to linear systems},
  SIAM journal on computing \textbf{24} (1995), no.~2, 227--234.

\bibitem[SC16]{scarlett2016limits}
Jonathan Scarlett and Volkan Cevher, \emph{Limits on support recovery with
  probabilistic models: An information-theoretic framework}, IEEE Transactions
  on Information Theory \textbf{63} (2016), no.~1, 593--620.

\bibitem[SS96]{sipser1996expander}
Michael Sipser and Daniel~A Spielman, \emph{Expander codes}, IEEE transactions
  on Information Theory \textbf{42} (1996), no.~6, 1710--1722.

\bibitem[Tib96]{tibshirani1996regression}
Robert Tibshirani, \emph{Regression shrinkage and selection via the lasso},
  Journal of the Royal Statistical Society: Series B (Methodological)
  \textbf{58} (1996), no.~1, 267--288.

\bibitem[TP12]{tajer2012hypothesis}
Ali Tajer and H~Vincent Poor, \emph{Hypothesis testing for partial sparse
  recovery}, 2012 50th Annual Allerton Conference on Communication, Control,
  and Computing (Allerton), IEEE, 2012, pp.~901--908.

\bibitem[VDG18]{van2018tight}
Sara Van De~Geer, \emph{On tight bounds for the lasso}, Journal of Machine
  Learning Research \textbf{19} (2018), 46.

\bibitem[Ver10]{vershynin2010rmt}
Roman Vershynin, \emph{Introduction to the non-asymptotic analysis of random
  matrices}, arXiv preprint arXiv:1011.3027 (2010).

\bibitem[Ver18]{vershynin2018high}
\bysame, \emph{High-dimensional probability: An introduction with applications
  in data science}, vol.~47, Cambridge university press, 2018.

\bibitem[VW19]{vempala2019gradient}
Santosh Vempala and John Wilmes, \emph{Gradient descent for one-hidden-layer
  neural networks: Polynomial convergence and sq lower bounds}, Conference on
  Learning Theory, PMLR, 2019, pp.~3115--3117.

\bibitem[VX16]{voroninski2016strong}
Vladislav Voroninski and Zhiqiang Xu, \emph{A strong restricted isometry
  property, with an application to phaseless compressed sensing}, Applied and
  computational harmonic analysis \textbf{40} (2016), no.~2, 386--395.

\bibitem[Wai09]{wainwright2009sharp}
Martin~J Wainwright, \emph{Sharp thresholds for high-dimensional and noisy
  sparsity recovery using l1-constrained quadratic programming (lasso)}, IEEE
  transactions on information theory \textbf{55} (2009), no.~5, 2183--2202.

\bibitem[Wai19]{wainwright2019high}
\bysame, \emph{High-dimensional statistics: A non-asymptotic viewpoint},
  vol.~48, Cambridge University Press, 2019.

\bibitem[WBP16]{wang2016average}
Tengyao Wang, Quentin Berthet, and Yaniv Plan, \emph{Average-case hardness of
  rip certification}, Advances in Neural Information Processing Systems
  \textbf{29} (2016).

\bibitem[ZWJ14]{zhang2014lower}
Yuchen Zhang, Martin~J Wainwright, and Michael~I Jordan, \emph{Lower bounds on
  the performance of polynomial-time algorithms for sparse linear regression},
  Conference on Learning Theory, 2014, pp.~921--948.

\bibitem[ZWJ{\etalchar{+}}17]{zhang2017optimal}
Yuchen Zhang, Martin~J Wainwright, Michael~I Jordan, et~al., \emph{Optimal
  prediction for sparse linear models? lower bounds for coordinate-separable
  m-estimators}, Electronic Journal of Statistics \textbf{11} (2017), no.~1,
  752--799.

\end{thebibliography}
\newpage


\end{document}